\documentclass[11pt]{article}

\usepackage[margin=1in]{geometry}
\usepackage{amsmath,amssymb,amsthm,mathtools}
\usepackage{booktabs,threeparttable,longtable,tabularx,array}
\usepackage{graphicx}
\usepackage{subcaption}
\usepackage{natbib}
\usepackage{setspace}
\usepackage{microtype}
\usepackage{hyperref}
\usepackage{enumitem}
\usepackage{float}
\usepackage{pdflscape}
\usepackage[most]{tcolorbox}
\usepackage{authblk}

\hypersetup{
	colorlinks=true,
	citecolor=blue,
	linkcolor=blue,
	urlcolor=blue,
	hypertexnames=false,
	pdftitle={A Multiscale Ball Test for Conditional Mean Independence},
	pdfauthor={Simon Rudkin and Wanling Rudkin},
	pdfsubject={Theory, Monte Carlo evidence, and finance applications for a multiscale conditional-mean test},
	pdfkeywords={mean independence, multiscale testing, local dependence, wild bootstrap, prewhitening, conditional heteroskedasticity}
}

\newtheorem{theorem}{Theorem}
\newtheorem{proposition}{Proposition}
\newtheorem{lemma}{Lemma}
\newtheorem{corollary}{Corollary}
\newtheorem{assumption}{Assumption}
\newtheorem{remark}{Remark}

\newcommand{\E}{\mathbb{E}}
\newcommand{\Pp}{\mathbb{P}}
\newcommand{\R}{\mathbb{R}}

\newcommand{\1}{\mathbf{1}}

\newcolumntype{Y}{>{\raggedright\arraybackslash}X}

\author[1]{Simon Rudkin \thanks{\textbf{Corresponding Author} Full Address: Department of Social Statistics, School of Social Sciences, University of Manchester, Oxford Road, Manchester, M13 9PL, United Kingdom. Tel: +44 (0)7955 109334 Email:simon.rudkin@manchester.ac.uk}}
\affil[1]{School of Social Sciences, University of Manchester, United Kingdom}

\vspace{-30pt}
\author[2]{Wanling Rudkin \thanks{Full Address: University of Exeter Business School, University of Exeter, Streatham Court, Rennes Drive, Exeter, EX4 4PU, United Kingdom.  Email: w.rudkin@exeter.ac.uk}}
\affil[2]{University of Exeter Business School, University of Exeter, United Kingdom}
\vspace{-40pt}
\vspace{-20pt}

\title{A Multiscale Ball Test for Conditional Mean Independence}

\begin{document}

\maketitle

\begin{abstract}

Tests of conditional mean independence can lose power when departures are confined to a bounded part of a multivariate predictor space and the relevant spatial scale is unknown. We propose a Multiscale Ball Conditional Mean Independence (MBCMI) test that aggregates support-weighted local mean contrasts in an outcome variable across balls centered on each data point in a predictor set. Fixed-grid theory identifies the population target, establishes consistency for grid-visible alternatives, and derives a Pitman local-power limit governed by the ball-smoothed mean departure. For serial data, feasible recursive-sign-bootstrap validity for stable finite-order autoregressions with conditionally sign-symmetric innovations is established. Application-aligned serial null experiments reject 4.25\% of the time. MBCI is demonstrated to be strongest for local and radial signals. Predictor-law experiments show that these conclusions are not an artefact of independent Gaussian covariates. In monthly U.S. finance data, cross-fitted residual MBCMI tests remove every full-sample rejection, suggesting contemporaneous conditional-mean dependence rather than evidence of distinctive nonlinear structure. 
\end{abstract}

\noindent\textbf{Keywords:} mean independence; multiscale testing; local dependence; wild bootstrap; prewhitening; conditional heteroskedasticity.

\section{Introduction}

Mean independence of a scalar outcome $Y$ from a multivariate predictor $X$ requires
\begin{equation}
H_0:\quad \E[Y\mid X]=\E[Y]\quad\text{almost surely}.
\label{eq:null}
\end{equation}
The restriction is weaker than full independence but economically relevant whenever the object of interest is a conditional mean. Global conditional-moment procedures integrate discrepancies over the predictor support \citep{bierens1990conditional,bierens1997integrated,fanli1996consistent,zheng1996functional}. That aggregation can dilute a departure confined to a bounded region. Local smoothers avoid global averaging but create a tuning problem: the spatial scale at which signal separates from noise is rarely known in advance.

We develop a Multiscale Ball Conditional Mean Independence (MBCMI) test as a response to the need to balance local smoothing and robustness to noise. Every observed predictor vector in $X$ is a centre around which a ball of identical radius is drawn. At each radius, the statistic compares the outcome mean inside every supported Euclidean ball with the global outcome mean and weights the squared contrast by neighbourhood size. The test maximises the resulting statistic over empirical pairwise-distance quantiles $q=.05,.06,\ldots,.75$ and repeats that complete radius search inside every resampling draw. In total there are 71 radii considered in each application of the MBCMI test. The same construction yields a centre-by-radius decomposition that describes where the statistic accumulates. Selected radius and contribution regions are diagnostic objects, not structural parameters or causal regimes.

Ball geometry matters because it fixes a common spatial extent while allowing the probability mass inside that extent to vary across the predictor support. That architecture differs from choosing one smoothing bandwidth, integrating a discrepancy globally, or fixing a neighbour count. Observed centres avoid an external mesh, support weighting aligns the sample statistic with a population local-mean criterion, and joint calibration of the radius maximum accounts for the scale search. MBCMI's contribution therefore lies in the combination of a fixed-extent multiscale test and an interpretable centre-by-scale decomposition, not in metric balls alone.

The contribution of MBCMI is best understood relative to estimand-matched procedures. Integrated conditional-moment and kernel tests share the conditional-restriction objective but use different aggregations \citep{bierens1990conditional,fanli1996consistent,zheng1996functional,suwhite2007,suwhite2014}. Martingale-difference correlation and divergence use global distance functionals for conditional-mean constancy \citep{shaozhang2014mdc,leeshao2018mddm}. The recent nearest-neighbour graph conditional-mean procedure of \citet{chatterjee2026cmi} provides an especially close benchmark. We evaluate its author-specified $K=5$ and $K=10$ versions on the exact canonical datasets. MBCMI fixes spatial extent and lets local mass vary; nearest-neighbour procedures fix local mass and let spatial extent vary. That distinction predicts the empirical power pattern: nearest-neighbour graph methods excel for rapidly sign-changing interaction and checkerboard alternatives, while the fixed-radius ball neighbourhoods of MBCMI perform well for local islands, annular departures, and local absolute signals.

Full-independence methods remain useful geometric comparators rather than estimand-matched substitutes. Distance correlation aggregates pairwise distances \citep{szekely2007distance}; HSIC compares kernel mean embeddings \citep{gretton2005kernel,gretton2010consistent}; KSG estimates mutual information through nearest-neighbour counts \citep{kraskov2004estimating}; and MGC searches graph scales \citep{shen2020mgc}. The proposed statistic is also distinct from Ball covariance \citep{pan2020ballcov} and Ball Mapper \citep{dlotko2019ballmapper}: those methods share metric-ball primitives but not the population criterion or calibration.

\begin{table}[!htbp]
\centering
\caption{Relation to the closest testing architectures}
\label{tab:literature-positioning}
\begin{threeparttable}
\small
\begin{tabularx}{\textwidth}{@{}p{0.22\textwidth}p{0.19\textwidth}p{0.22\textwidth}X@{}}
\toprule
Procedure & Population target & Geometry or aggregation & Relation to multiscale Ball \\
\midrule
Integrated conditional-moment and kernel specification tests & Conditional restrictions or functional form & Global integration over transforms or bandwidth-specific smooths & Share a conditional-moment objective but not the same centre-by-radius decomposition. \\
MDC/MDD & Conditional-mean constancy & Global distance functional & Important estimand-matched family. The public author-code MDD implementation was numerically feasible but did not satisfy the prespecified finite-sample calibration criterion used for the canonical power tournament. \\
NCMD $K=5,10$ & Conditional-mean constancy & Nearest-neighbour graph statistic & Evaluated on exact paired canonical datasets. Especially strong for interaction and checkerboard alternatives; Ball is stronger for several fixed-spatial-extent local alternatives. \\
$k$NN local means & Conditional-mean constancy & Fixed neighbour count with adaptive geometric extent & Ball fixes spatial extent and lets local mass vary; $k$NN fixes mass and lets extent vary. \\
dCor, HSIC, KSG, and MGC & Full independence & Global distances, kernels, mutual information, or multiscale graphs & Useful geometric benchmarks but can respond to distributional features beyond the conditional mean. \\
Ball covariance & Full independence & Probabilities of metric balls & Shares ball geometry but defines a different population functional. \\
Ball Mapper & Exploratory data-shape summary & Metric-ball cover graph & Motivates radius-cover interpretation but is neither the statistic nor its calibration. \\
\bottomrule
\end{tabularx}
\begin{tablenotes}[flushleft]\footnotesize
\item Notes: NCMD refers to the nearest-neighbour conditional-mean procedure of \citet{chatterjee2026cmi}. NCMD $K=5$ and $K=10$ are reported separately; no data-driven choice of $K$ is made.
\end{tablenotes}
\end{threeparttable}
\end{table}

Inference from conditional independence hinges on the sampling environment. Unrestricted outcome permutation is exact under exchangeability but can fail under heteroskedastic mean independence. The iid procedure therefore uses Rademacher outcome multipliers, preserving observation-specific residual magnitudes \citep{wu1986jackknife,mammen1993bootstrap}. Serial applications use a BIC-selected prewhitened recursive Rademacher bootstrap: fitted autoregressive persistence is removed, dated innovations receive random signs, and the fitted dynamic law recursively recolours the signed innovations. Related specification-testing work shows why bootstrap validity must track the nuisance estimation and dependence structure used in implementation \citep{lavalocal2019,fosten2020nowcast,cavaliere2024garch,fosten2024quantile}. MBCMI is demonstrated robust to alternative handling of the sampling environment, including circular shifts and block procedures.

An important distinction is drawn between formal theory and computational validation. Theory and empirical validation answer different questions. Fixed-grid theory identifies the population criterion and establishes consistency only for alternatives visible at an indexed radius; a new Pitman result shows that local power depends on the Ball-smoothed departure at those same radii. For iid observations, a Ball-specific kernel reduction turns the implemented raw radius vector into a finite collection of degenerate V-statistics, and a multiplier argument gives conditional Rademacher validity for the raw 71-radius maximum under primitive support, moment, local-mass, and distance-density conditions. Serial inference uses a separate recursive-sign construction. Under a stable finite-order autoregression with conditionally sign-symmetric innovations, the oracle sign randomisation is conditionally exact and BIC/least-squares replacement is asymptotically negligible, which yields a feasible validity result for the implemented finite-grid maximum. Conditional sign symmetry is stronger than the headline mean-independence null. Broader asymmetric martingale-difference dynamics therefore remain simulation-validated rather than theorem-covered.

Calibration and power evidence sharpen the method in four ways. First, endpoint diagnostics show that a $q\le .50$ search can bind in application-sized samples. A dense extension to $q=.75$ passes application-aligned calibration: the raw serial maximum rejects 4.25\% of 16,000 null replications, with cell rates between 2.6\% and 5.8\%. A tail sensitivity exercise with $q=.80,.85,.90$ confirms that MBCMI approach continues to perform as expected when each ball covers a large part of the dataset. Second, studentising radius-specific statistics raises average canonical power by only about half a percentage point and is mildly conservative in the application-aligned serial null; the raw maximum remains primary. Third, predictor-law experiments show that ordinary z-scoring is not an artefact of independent Gaussian designs. Robust-MAD scaling performs similarly in most laws, while whitening can help clustered support but materially hurt correlated Gaussian and $t$ designs. Fourth, the comparator evidence reveals a geometry-specific power envelope rather than universal dominance. NCMD $K=10$ has the largest average full-range AUC, but Ball is strongest for several local fixed-extent alternatives and is competitive for weak linear signals.

The MBCMI testing approach is demonstrated for two finance applications. In both cases, monthly U.S. data from January 1980 through September 2025 is used. First, the factor spanning question for the \cite{fama2015five} 5 factors plus momentum factor \citep{carhart1997persistence} is explored by asking which factor(s) returns are independent from the joint distribution of the returns on the other 5 factors. Six factor returns rotate through the outcome position, with the remaining factors as contemporaneous predictors. For the second application, four macro-finance equations study market excess returns, industrial-production growth, unemployment-rate change, and Baa-spread change. Under the final $q\le .75$ rule, five factor equations reject after within-family Holm adjustment and RMW does not. Unemployment-rate change is the only macro rejection after Holm adjustment, although its family-wise status is sensitive to the $q\le .90$ tail extension. These full-sample rejections largely overlap with ordinary contemporaneous linear structure: a nuisance-refitted cross-fitted residual Ball procedure rejects none of the ten full-sample equations. Among five selected diagnostic windows, only Momentum in July 2022 remains significant after the cross-fitted linear benchmark. Rejecting conditional independence of a factor, or macro-financial variable does not mean that there is a guaranteed relationship between 

The remainder of the paper is organised as follows. Section~\ref{sec:method} defines the MBCMI statistic and comparators. Section~\ref{sec:inference} states the inferential framework and calibration. Section~\ref{sec:mc} reports canonical and robustness evidence for the MBCMI test. Section~\ref{sec:finance} presents two applications of MBCMI to financial time series data. Section~\ref{sec:discussion} discusses interpretation and limitations.

\section{Multiscale Ball Test}
\label{sec:method}

Let $(Y_i,X_i)$, $i=1,\ldots,n$, denote observations with
$Y_i\in\R$ and $X_i\in\R^d$. Write
\[
m(x)=\E[Y\mid X=x],
\qquad
\mu=\E[Y].
\]
MBCMI tests mean independence of $Y$ from $X$,
\begin{equation}
H_0:\quad m(X)=\mu\quad\text{almost surely}.
\label{eq:method-null}
\end{equation}
Full independence implies \eqref{eq:method-null}, but the converse fails when
conditional variances, tails, or higher moments depend on $X$. Accordingly,
the statistic targets one feature of the conditional distribution rather than an
omnibus independence restriction.

\subsection{Population Criterion}

Let $a=\E[X]$, let $D$ contain the positive marginal standard deviations of
$X$, and define the population-standardised predictor $Z=D^{-1}(X-a)$. For
an independent copy $(Y',Z')$, set
\begin{align}
p_e(z)&=\Pp\{\|Z'-z\|\le e\},\\
r_e(z)&=\E[Y'\1\{\|Z'-z\|\le e\}],\\
m_e(z)&=r_e(z)/p_e(z),
\end{align}
whenever $p_e(z)>0$. Its population criterion is
\begin{equation}
Q(e)=\E\!\left[p_e(Z)\{m_e(Z)-\mu\}^2\right].
\label{eq:population-criterion}
\end{equation}
Criterion \eqref{eq:population-criterion} averages squared local mean
contrasts and weights each contrast by the probability mass that supports it.
Weighting serves two purposes. It prevents a small, weakly supported
neighbourhood from counting as much as a broad predictor region, and it
matches the sample statistic's neighbourhood-size weight.

Null \eqref{eq:method-null} implies $Q(e)=0$ at every radius. Its converse is
local rather than global. Under compact regular support, positive local mass,
and almost-everywhere continuity of $m$, every nonconstant conditional mean
gives $Q(e)>0$ for all sufficiently small positive radii. Small balls recover
the local regression function before averaging can cancel separated regions.
This identification result explains why sufficiently small population radii
detect nonconstancy. Fixed-grid consistency additionally requires visibility
at one of the fixed radii. Rapidly alternating signals expose the
finite-sample limit: an identifying radius may exist but contain too few
observations for reliable estimation.
Appendix~\ref{app:population} gives the formal statements and proofs on the role of radii.

Observed predictors define ball centres. Using all observed predictors avoids an external
mesh and keeps the statistic invariant to empty parts of a bounding box. Additionally, more centres are placed where the predictor distribution places more mass.
This design adapts to the data without changing the common geometric
radius.

Observed centres generate overlapping balls. Overlap is intentional. A
partition assigns each observation to one cell and can create artificial
boundaries. Overlapping balls let one observation support several nearby
local contrasts. Dependence among centre-level contributions is the cost.
Fixed-grid theory treats that dependence directly through the
quadratic-form representation rather than pretending that centres are
independent replications.

Boundary centres receive asymmetric neighbourhoods when support ends within
radius $e$. Compact-support and local-mass conditions control this support effect in
the formal results. In finite samples, minimum-size and coverage rules remove
centres and radii that cannot support stable local means. Having safeguards on the support does
not repair sparse high-dimensional geometry. Safeguards only prevent unsupported
local estimates from determining the maximum.

\subsection{Sample Statistic}
\label{sec:statistic}

Sample means and marginal standard deviations replace the population values $a$ and $D$. Write the
resulting predictors as $\widehat Z_i$. For centre $i$ and radius $e>0$,
define
\begin{equation}
B_i(e)=\{j:\|\widehat Z_j-\widehat Z_i\|\le e\},
\qquad
N_i(e)=|B_i(e)|,
\end{equation}
and
\[
\bar Y_i(e)=N_i(e)^{-1}\sum_{j\in B_i(e)}Y_j,
\qquad
\bar Y=n^{-1}\sum_{j=1}^nY_j.
\]
For minimum neighbourhood size $N_{\min}$, the fixed-radius statistic is
\begin{equation}
T_n(e)=
\sum_{i\in\mathcal I_n(e)}
N_i(e)\{\bar Y_i(e)-\bar Y\}^2,
\label{eq:T-eps}
\end{equation}
where $\mathcal I_n(e)=\{i:N_i(e)\ge N_{\min}\}$. Each summand is a
centre-level contribution,
\begin{equation}
C_i(e)=N_i(e)\{\bar Y_i(e)-\bar Y\}^2.
\label{eq:center-contribution}
\end{equation}
Large $C_i(e)$ can reflect a large local contrast, broad local support, or
both. This decomposition therefore identifies where the statistic
accumulates, but it does not assign causal meaning to the associated
predictor states.

Neighbourhood-size weighting distinguishes \eqref{eq:T-eps} from an
unweighted average of local test statistics. Suppose two centres have the
same local mean contrast but one ball contains four times as many
observations. A larger ball estimates a feature of a broader part of the
predictor distribution and receives four times the weight. Neighbourhood weighting
aligns the sample statistic with $Q(e)$. It also makes the raw null scale vary
with radius, density, and overlap. Resampling must therefore repeat the whole
radius search rather than compare the observed maximum with fixed-radius
critical values.

\subsection{Choice of Radius}

A radius remains admissible when at least $\gamma n$ centres satisfy the
minimum-size rule:
\begin{equation}
|\mathcal I_n(e)|\ge\lceil\gamma n\rceil.
\label{eq:coverage}
\end{equation}
Every reported analysis uses $\gamma=0.20$ and
$N_{\min}=10$. A paired 20\%--40\% audit changed no selected radius, observed
maximum, resampling $p$-value, or rejection decision in 13,000 Monte Carlo
comparisons or the four selected diagnostic windows of the examples presented in this paper. Coverage screening therefore
removes unsupported low-end candidates without determining the paper's
reported conclusions.

We use the fixed quantile-index set
\[
\mathcal Q=\{0.05,0.06,\ldots,0.75\}.
\]
Let $\widehat e_{n,q}$ denote the empirical $q$th quantile of pairwise
distances among the $\widehat Z_i$. Quantile indexing performs two useful
normalisations. Using quantile indexing rather than specific values adapts numerical radii to the predictor distribution, and
it keeps the number of candidate scales fixed across sample sizes and
dimensions. Quantile indexing does not make geometric scale comparable in a
structural sense. A given quantile can correspond to very different
numerical radii and neighbourhood configurations across designs.

Endpoint diagnostics on a shorter $q\le .50$ candidate set showed substantial upper-bound concentration in application-sized rolling samples. We therefore evaluated the dense extension through $q=.75$ before fixing the primary domain. The extension retains nominal application-aligned size and reduces boundary selection materially. A sparse tail sensitivity adds $q\in\{.80,.85,.90\}$ to assess dependence on still larger scales. The .75 endpoint is therefore a calibrated practical boundary, not a theoretically unique scale.

Multiscale aggregation gives
\begin{equation}
T_n^{\max}
=
\max_{q\in\widehat{\mathcal Q}_n}
T_n(\widehat e_{n,q}),
\label{eq:maxT}
\end{equation}
where $\widehat{\mathcal Q}_n$ contains the admissible indices. The
descriptive selected scale is
\begin{equation}
\widehat e_n^*
=
\arg\max_{q\in\widehat{\mathcal Q}_n}
T_n(\widehat e_{n,q}).
\label{eq:eps-star}
\end{equation}
Every calibration repeats coverage screening, all 71 radius evaluations, and
maximisation. The 71 radii follow from using $q \in \left[0.05,0.75\right]$ with intervals of 0.01. Inference concerns $T_n^{\max}$. Its selected radius explains
which scale generated the observed maximum; it is not a separately tested or
consistently estimated structural parameter.

\paragraph{Quadratic form and diagnostics.}

A quadratic-form representation clarifies overlap and null scaling. Let
$W_e$ contain the admissible local-mean weights, let $D_e$ contain
neighbourhood sizes, and set $A_e=W_e-\mathbf 1\mathbf 1'/n$. Then
\begin{equation}
T_n(e)=Y'A_e'D_eA_eY.
\label{eq:quadratic-form}
\end{equation}
Under the null, conditional covariance $\Omega_X$ gives
\begin{equation}
\E[T_n(e)\mid X]
=\operatorname{tr}(A_e'D_eA_e\Omega_X).
\label{eq:null-mean}
\end{equation}
Neighbourhood overlap, local mass, and heteroskedasticity can therefore alter
the null location and scale across radii. This observation motivates wild
calibration under heteroskedasticity and explains why raw and studentised
maxima can select different scales.

Three profiles accompany a reported result. A scale profile plots
$T_n(\widehat e_{n,q})$ over $q$, a contribution profile ranks
$C_i(\widehat e_n^*)$, and a coverage profile records valid centres and
neighbourhood-size dispersion over radii. Together they distinguish a broad,
distributed departure from a result concentrated in one observation or one
narrow scale. They remain diagnostics. A large contribution can reveal a
state associated with rejection, not the economic mechanism that generated
it.

Primary implementation retains the raw maximum used in the matched
seven-method comparison. Studentisation replaces each fixed-radius statistic
with a resampling-centred and resampling-scaled version before maximisation.
Studentisation can equalise noisy and stable radii, but it changes the adaptation rule.
Ablation evidence shows modest average power gains and materially different
selected scales. Studentisation therefore remains a supplementary procedure
rather than a normalization silently substituted into the main test.

\subsection{Comparators}
\label{sec:benchmarks}

For a fixed radius, the local mean uses the uniform indicator kernel $K_e(z,z')=\1\{\|z-z'\|\le e\}$. MBCMI's contribution is the combined construction: observed-point centres, support weighting, coverage-screened quantile radii, joint resampling of the radius maximum, and a centre-by-scale diagnostic decomposition.

Fixed-count nearest-neighbour methods provide the closest geometric contrast. A $k$NN local-mean statistic fixes neighbourhood mass and lets geometric extent adapt to density. The NCMD graph test of \citet{chatterjee2026cmi} also uses nearest-neighbour structure and directly targets conditional-mean constancy. We report the author-specified $K=5$ and $K=10$ versions separately; no data-driven choice of $K$ is made. The canonical extension shows that NCMD is especially strong for interaction and checkerboard signals, while Ball gains for several broad local and radial alternatives.

MDC/MDD is an important global estimand-matched family \citep{shaozhang2014mdc,leeshao2018mddm}. The public author-code MDD implementation executed successfully but failed a prespecified finite-sample calibration screen: pooled pilot size was 6.625\%, one family rate was 7.625\%, and the worst cell was 10.5\%. Therefore MDD did not enter the canonical power ranking, and we did not retune it after observing that pilot. This exclusion is an implementation-comparability decision, not evidence that MDD is theoretically invalid or intrinsically inferior.

Distance correlation, HSIC, KSG, and MICe remain full-independence or dependence-strength benchmarks. Their targets differ from \eqref{eq:method-null}, so power comparisons are interpreted geometrically rather than as a single estimand-matched league table. Table~\ref{tab:literature-positioning} summarises the distinction.

\section{Inference and Theory}
\label{sec:inference}

Inference proceeds in two layers. First, population analysis establishes the
estimand and proves consistency for fixed alternatives visible at one or more
radii in the fixed finite grid. Second, resampling analysis studies null
calibration. For iid observations, a Ball-kernel reduction turns the
implemented raw radius vector into a finite collection of degenerate
V-statistics and gives direct Rademacher validity for their raw maximum under
primitive conditions. The serial bootstrap has a fixed algorithm and a
motivating finite-order process class. Stable autoregressive recolouring
preserves the relevant Ball-matrix leverage order, but the full
statistic-specific bootstrap proof after order selection and innovation
estimation remains open. We keep those distinctions visible.

\subsection{Asymptotic Framework}

Two population results define the target. Mean constancy implies
$Q(e)=0$ because iterated expectations give $r_e(z)=\mu p_e(z)$. Conversely,
regular local averaging recovers a nonconstant $m(z)$ on a positive-
probability subset as $e$ shrinks.

\begin{proposition}[Null implication and local identification]
\label{prop:main-identification}
Suppose $\E|Y|<\infty$. Null \eqref{eq:method-null} implies $Q(e)=0$ for
every radius with well-defined local means. Suppose additionally that the
support of $Z$ is compact and regular, its density is bounded above and away
from zero, and $m$ is bounded and continuous except on a null set. If
$\Pp\{m(Z)\ne\mu\}>0$, then some $e_0>0$ satisfies $Q(e)>0$ for every
$e\in(0,e_0)$.
\end{proposition}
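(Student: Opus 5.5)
The plan has two parts, matching the two claims.

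\emph{Null part.} By Fubini and iterated expectations,
\[
r_e(z)=\E[\,\E[Y'\mid Z']\,\1\{\|Z'-z\|\le e\}]=\E[m(Z')\1\{\|Z'-z\|\le e\}],
\]
with $m$ viewed as a function of $Z$; the standardisation is an invertible affine map, so conditioning on $X$ or on $Z$ is the same. Under the null $m(Z')=\mu$ almost surely, so $r_e(z)=\mu p_e(z)$. Hence $m_e(z)=\mu$ wherever $p_e(z)>0$, and every integrand in $Q(e)$ vanishes. The only inputs are $\E|Y|<\infty$ and independence of $(Y',Z')$ from $Z$.

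\emph{Identification part.} The route is pointwise convergence of the ball averages plus dominated convergence, followed by a Fatou bound for the support weight. The steps are as follows.
\begin{enumerate}
\item Write $m_e(z)=\int_{B(z,e)} m f\,/\int_{B(z,e)} f$, where $f$ is the density of $Z$ and $B(z,e)$ is the closed ball.
\item Fix $z$ in the support $S$ at which $m$ is continuous; such $z$ form a set of full $Z$-measure. For any $\eta>0$, continuity gives $|m(z')-m(z)|<\eta$ on $B(z,e)\cap S$ once $e$ is small. Since $m_e(z)$ is a weighted average of $m$ over that set, $|m_e(z)-m(z)|\le\eta$. Continuity is only assumed off a null set $N$, but $N$ carries no mass under $f$, so the weighted average is unaffected. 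Thus $m_e(z)\to m(z)$ almost surely in $Z$.
\item Bound the support weight from below. Since $f\ge c>0$ on $S$ and $S$ is regular (an interior cone condition, or $\lambda(B(z,e)\cap S)\ge \kappa\lambda(B(z,e))$ for small $e$), we get $p_e(z)\ge c\kappa v_d e^d$. This step is not needed for convergence, but it is needed to turn convergence into positivity of $Q(e)$, because $p_e(z)\to0$.
\item Normalise: $e^{-d}Q(e)\ge c\kappa v_d\,\E[\{m_e(Z)-\mu\}^2]$. Boundedness of $m$ gives $|m_e|\le\|m\|_\infty$, so dominated convergence yields
\[
\E[\{m_e(Z)-\mu\}^2]\to\E[\{m(Z)-\mu\}^2]=:\delta>0,
\]
where $\delta>0$ because $\Pp\{m(Z)\ne\mu\}>0$.
\item Choose $e_0$ so that $\E[\{m_e(Z)-\mu\}^2]>\delta/2$ for all $e<e_0$. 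Then $Q(e)\ge c\kappa v_d e^d\delta/2>0$ on $(0,e_0)$.
\end{enumerate}
An alternative to steps 4–5 is Fatou applied to $\liminf$, which gives the same conclusion.

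The expected main obstacle is step~2 at boundary points together with the exceptional null set. Continuity must hold relative to $S$, and the ball averages must avoid picking up mass where $m$ is undefined or discontinuous. I would handle this by defining continuity of $m$ on $S$ up to a $\lambda$-null set and noting that boundary points are themselves $\lambda$-null under regularity. If only almost-everywhere continuity is available, a fallback is the Lebesgue differentiation theorem applied to $mf$ and $f$ separately. Together with the density lower bound and the regularity of $S$, this gives $m_e\to m$ almost everywhere for any bounded measurable $m$.
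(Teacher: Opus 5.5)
Your proof is correct. It shares the paper's core step, pointwise convergence $m_e(z)\to m(z)$ at continuity points of $m$ in the support, but it closes the argument differently.

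\emph{The paper's closing.} It restricts to $A_\delta=\{z:|m(z)-\mu|>2\delta\}$ and uses Egorov's theorem to get a positive-probability subset on which the convergence is uniform. So $|m_e-\mu|>\delta$ there for all small $e$, and the paper then needs only $p_e(z)>0$.

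\emph{Your closing.} You integrate first and apply dominated convergence (or Fatou) to $\E[\{m_e(Z)-\mu\}^2]$. This is a little cleaner. Dominated convergence along arbitrary sequences $e_k\downarrow0$ handles the continuum of radii directly. Egorov for the uncountable family $\{m_e\}_{e>0}$ instead relies implicitly on continuity of $e\mapsto m_e(z)$ to reduce to countably many radii. That continuity holds here because spheres are $P_Z$-null, but the paper leaves the step unstated. The Fatou variant also drops boundedness of $m$. Your pointwise step also avoids the paper's appeal to Lebesgue points of the density, which is unnecessary at continuity points of $m$.

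\emph{One overstatement.} The cone bound $p_e(z)\ge c\kappa v_d e^d$ is not needed for positivity. Every support point has $p_e(z)>0$, and an almost surely strictly positive weight times a nonnegative function that is nonzero with positive probability already has strictly positive expectation. The bound only buys the quantitative rate $Q(e)\gtrsim e^d$.

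\emph{Your fallback.} The Lebesgue-differentiation fallback is sound and is a genuine generalisation. Applied to $mf$ and $f$, it gives $m_e\to m$ $P_Z$-almost surely for any integrable $m$. So once $Z$ has a density, the identification claim needs none of the following: continuity of $m$, the density floor, or support regularity.
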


Proposition~\ref{prop:main-identification} does not say that every radius detects every alternative.
A large radius can average positive and negative regions back to zero. A
very small radius can identify the population departure but lack enough
sample observations to estimate it. Multiscale search addresses this tension
by testing a finite range of supported radii rather than sending one
bandwidth to zero mechanically.

\paragraph{Fixed-grid limits.}

Assume iid observations with a finite $(4+\eta)$ outcome moment. Population-
standardised predictors have compact support satisfying a uniform interior
cone condition and a density bounded above and away from zero. Each indexed
population radius is positive, has uniformly positive local mass, and lies at
a continuity point with positive density of the pairwise-distance
distribution. A fixed minimum-neighbour count grows no faster than $o(n)$.

\begin{theorem}[Fixed-grid limit and grid-visible consistency]
\label{thm:main-fixed-grid}
Under these conditions, for every fixed $q\in\mathcal Q$,
\[
n^{-2}T_n(\widehat e_{n,q})\xrightarrow{p}Q(e_q),
\]
and convergence holds jointly over the 71 quantile indices. Coverage
screening is asymptotically inactive. Consequently,
\[
n^{-2}T_n^{\max}\xrightarrow{p}
\max_{q\in\mathcal Q}Q(e_q).
\]
If $Q(e_q)>0$ for at least one indexed radius and the
calibration-specific null critical value is $O_p(n)$, the corresponding test
is consistent against that fixed, grid-visible alternative.
\end{theorem}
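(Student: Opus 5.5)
The plan is to write $n^{-2}T_n(e)$ as a ratio-free V-statistic and then handle the two estimated ingredients, $\widehat Z_i$ and $\widehat e_{n,q}$, by continuity. Since $N_i(e)\{\bar Y_i(e)-\bar Y\}^2=\{\sum_{j}(Y_j-\bar Y)\1\{\|\widehat Z_j-\widehat Z_i\|\le e\}\}^2/N_i(e)$, we have
\[
n^{-2}T_n(e)=\frac1n\sum_{i\in\mathcal I_n(e)}\frac{\bigl\{n^{-1}\sum_j (Y_j-\bar Y)\1\{\|\widehat Z_j-\widehat Z_i\|\le e\}\bigr\}^2}{n^{-1}N_i(e)} .
\]
The population analogue of the summand, evaluated at $z=Z_i$, is $\{r_e(z)-\mu p_e(z)\}^2/p_e(z)=p_e(z)\{m_e(z)-\mu\}^2$. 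Averaging over $Z_i$ therefore gives exactly $Q(e)$ in \eqref{eq:population-criterion}.

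\textbf{Step 1 (oracle, fixed radius).} Replace $\widehat Z$ by $Z$, $\bar Y$ by $\mu$, and $\widehat e_{n,q}$ by $e_q$. I would show
\[
\sup_i\bigl|n^{-1}N_i(e)-p_e(Z_i)\bigr|\to0
\quad\text{and}\quad
\sup_i\bigl|n^{-1}\textstyle\sum_j Y_j\1_{ij}-r_e(Z_i)\bigr|\to0
\]
in probability. Balls in $\R^d$ form a VC class, which gives a uniform Glivenko--Cantelli law. For the unbounded $Y$ part I would truncate at $n^{1/(4+\eta)}$ and use the $(4+\eta)$ moment to make the remainder negligible. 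Because $p_e(z)\ge c>0$ uniformly on the support (uniform positive local mass), the map $(r,p)\mapsto (r-\mu p)^2/p$ is Lipschitz on the relevant range. The average of the summands then converges to $\E[p_e(Z)\{m_e(Z)-\mu\}^2]$ by the law of large numbers. The same uniform bound shows that every centre eventually has $N_i(e)\ge cn/2\gg N_{\min}=o(n)$, so $\mathcal I_n(e)=\{1,\dots,n\}$ and \eqref{eq:coverage} holds with probability tending to one. This is the claim that coverage screening is asymptotically inactive.

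\textbf{Step 2 (estimated standardisation and radius).} This is the main obstacle, because the indicators are discontinuous in both $\widehat Z$ and $e$. Write $\|\widehat Z_j-\widehat Z_i\|=\|\widehat D^{-1}D(Z_j-Z_i)\|$, which is exact since the centring cancels. The estimated scale $\widehat D\to D$ almost surely, and the support is compact. Hence there is $\delta_n\to0$ with
\[
\bigl|\|\widehat Z_j-\widehat Z_i\|-\|Z_j-Z_i\|\bigr|\le \delta_n \quad\text{for all } i,j .
\]
The empirical pairwise-distance distribution converges uniformly to the population one by U-statistic Glivenko--Cantelli. Because $e_q$ is a continuity point with positive density, $\widehat e_{n,q}\to e_q$ in probability. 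The sample indicators are therefore sandwiched between oracle indicators at radii $e_q\pm\epsilon$ for any fixed $\epsilon>0$, with probability tending to one.

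Applying Step 1 at $e_q\pm\epsilon$ bounds the discrepancy in counts and sums by
\[
\sup_z\{p_{e_q+\epsilon}(z)-p_{e_q-\epsilon}(z)\}
\quad\text{and}\quad
\sup_z \E\bigl[|Y'|\1\{e_q-\epsilon<\|Z'-z\|\le e_q+\epsilon\}\bigr].
\]
Both vanish as $\epsilon\to0$. This uses a bounded density for the shell probability, and Hölder's inequality with the finite moment for the $Y$ term. Replacing $\bar Y$ by $\mu$ costs $o_p(1)$ by the law of large numbers.

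\textbf{Step 3 (joint convergence, maximum, consistency).} The grid has 71 fixed indices, so marginal convergence in probability implies joint convergence. The continuous mapping theorem applied to the max then gives $n^{-2}T_n^{\max}\to\max_q Q(e_q)$, because the admissible set equals $\mathcal Q$ with probability tending to one. For consistency, if $Q(e_{q_0})>0$ then $T_n^{\max}\ge n^2Q(e_{q_0})/2$ with probability tending to one. A critical value of order $O_p(n)=o_p(n^2)$ is therefore exceeded with probability tending to one.
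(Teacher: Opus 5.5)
Your proposal is correct and follows essentially the same route as the paper's argument in Appendix~\ref{app:sample}. Both use VC-class uniform convergence of local counts and outcome numerators, uniform positive local mass to control the ratio, and continuity with shell control at each fixed radius to absorb sample standardisation and empirical distance quantiles; both then use the fixed finite grid and the $O_p(n)$ versus order-$n^2$ separation for the maximum and for consistency. The differences are minor: you skip the paper's leave-one-out representation, since your supremum over all centres $c\in\mathcal Z$ already covers self-inclusion, and you replace its stochastic-equicontinuity step with an explicit sandwich between radii $e_q\pm\epsilon$.
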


Proof starts from a leave-one-out statistic. Euclidean balls form a VC
class, which gives uniform convergence of local probabilities and local
outcome numerators over centres and a fixed radius set. Positive local mass
controls the ratio that defines each local mean. Adding the centre back into
its own ball changes the normalised statistic by $o_p(1)$. Sample
standardisation and empirical distance quantiles add another $o_p(1)$ term.
Appendix~\ref{app:sample} gives the full argument.

Three restrictions matter. First, the grid contains 71 fixed indices; the
result does not cover a grid whose cardinality grows with $n$. Second,
regular support excludes singular laws concentrated on a lower-dimensional
manifold. Third, consistency is pointwise against fixed alternatives for which
$Q(e_q)>0$ at an indexed radius. Local identification at arbitrarily small
radii does not guarantee this grid-visibility condition because the smallest
population distance quantile stays positive as $n$ grows. The procedure is
therefore not an omnibus-consistent test against every nonconstant conditional
mean. Its adaptation operates over a deliberately finite family of supported
probability scales.

\paragraph{Local alternatives.}
The fixed-grid restriction also clarifies the power mechanism. Let
$Y_{n,i}=\mu+n^{-1/2}\delta_0(Z_i)+\varepsilon_i$, where
$\E[\delta_0(Z)]=0$ and $\E[\varepsilon_i\mid Z_i]=0$. For radius
$q$, define the Ball feature
\[
\Psi_q(c,z)=\sqrt{p_{e_q}(c)}
\left\{\frac{\1\{\|z-c\|\le e_q\}}{p_{e_q}(c)}-1\right\}
\]
and the local shift
\[
\Delta_q(c)=\E\{\Psi_q(c,Z)\delta_0(Z)\}.
\]
Its squared norm is exactly the population Ball criterion applied to the
local departure, $\|\Delta_q\|^2=Q_{\delta}(e_q)$. Positive and negative
pieces of $\delta_0$ can therefore cancel inside one radius while remaining
visible at another.

\begin{proposition}[Pitman local power on the fixed grid]
\label{prop:main-local-power}
Under the iid conditions used for Theorem~\ref{thm:main-rademacher}, suppose
$\delta_0$ is bounded and square-integrable. Jointly over the fixed radius
grid,
\[
\left\{n^{-1}T_n(\widehat e_{n,q}):q\in\mathcal Q\right\}
\Rightarrow
\left\{\|\mathbb G_q+\Delta_q\|^2:q\in\mathcal Q\right\},
\]
where $\{\mathbb G_q\}$ is the same centred Gaussian Hilbert-space limit
that generates the null second-order chaos. Conditional on the data, the
Rademacher vector under the local alternative converges to the centred null
vector, so its raw-max critical value converges to the null critical value.
If at least one $\Delta_q\ne0$ and the joint Gaussian law is nondegenerate
on the span of the shifts, the limiting rejection probability is strictly
larger than $\alpha$.
\end{proposition}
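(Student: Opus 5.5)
The plan is to reduce the local-alternative problem to the null reduction already used for Theorem~\ref{thm:main-rademacher}, and to track one deterministic shift. Write $Y_{n,i}-\bar Y_n=\varepsilon_i-\bar\varepsilon+n^{-1/2}\{\delta_0(Z_i)-\bar\delta_n\}$, so that $T_n(e)$ in \eqref{eq:T-eps} is a quadratic form $\|\widehat S_{n,q}\|^2$ over centres. Here
\[
\widehat S_{n,q}(c)=n^{-1/2}\sum_{j}\widehat\Psi_q(c,\widehat Z_j)(Y_{n,j}-\bar Y_n)
\]
is viewed as an element of $L^2$ of the empirical centre measure, and $\widehat\Psi_q$ is the sample analogue of $\Psi_q$ built from $N_i(e)/n$ and $\widehat e_{n,q}$. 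Indeed $N_i\{\bar Y_i-\bar Y\}^2=n\,p\{\cdot\}^2$ with $p=N_i/n$, which gives exactly the $\sqrt{p}\{\1/p-1\}$ weighting. Then $n^{-1}T_n=n^{-1}\sum_i \widehat S_{n,q}(\widehat Z_i)^2$. This is the Hilbert-space form that delivers the null second-order chaos, with the diagonal terms giving the usual V-statistic centring. I take that null limit for the $\varepsilon$-part, $\widehat S^{\varepsilon}_{n,q}\Rightarrow\mathbb G_q$ jointly in $q$, as given from the proof of Theorem~\ref{thm:main-rademacher}.

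Next I isolate the signal part. It equals $n^{-1}\sum_j\widehat\Psi_q(c,\widehat Z_j)\{\delta_0(Z_j)-\bar\delta_n\}$, an empirical average with a bounded integrand indexed by ball centres. The VC property of Euclidean balls used for Theorem~\ref{thm:main-fixed-grid} gives a uniform LLN over $c$. Positive local mass controls the ratio $1/p$, and the replacement of $a$, $D$ and $e_q$ by sample versions costs $o_p(1)$, as in Appendix~\ref{app:sample}. Hence the signal part converges uniformly to $\Delta_q(c)$; the $\bar\delta_n$ term vanishes because $\E\delta_0=0$ and $\E\Psi_q(c,Z)=0$. Averaging over empirical centres then converges to the $L^2(P_Z)$ norm.

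Slutsky in the Hilbert space and the continuous mapping theorem for $\|\cdot\|^2$, applied jointly over the finite grid of 71 radii, yield $\{\|\mathbb G_q+\Delta_q\|^2\}$. Cross terms need no separate treatment because they are contained in the squared norm of the sum. For the bootstrap, multiply the fitted residuals $\widehat u_i=Y_{n,i}-\bar Y_n$ by Rademacher signs. The signal contribution to the bootstrap process is $n^{-1/2}\sum_j\widehat\Psi_q(c,\cdot)\,w_j\,n^{-1/2}\delta_0(Z_j)$. Its conditional variance is $O(n^{-1})$ by boundedness, so it is $o_p(1)$, and the conditional law matches the null multiplier limit. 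Convergence of critical values then follows from continuity of the limiting max distribution at its $1-\alpha$ quantile, which holds for the null chaos limit as in Theorem~\ref{thm:main-rademacher}.

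The main obstacle is the final strict inequality $\Pp\{\max_q\|\mathbb G_q+\Delta_q\|^2>c_\alpha\}>\alpha$. I would project onto the span of the shifts, where the Gaussian law is assumed nondegenerate. The set $\{\max_q\|g_q\|^2\le c_\alpha\}$ is convex and symmetric in the Gaussian coordinates, so Anderson's inequality gives $\Pp(\mathbb G+\Delta\in K)\le\Pp(\mathbb G\in K)$. Strictness comes from nondegeneracy along $\Delta$: the function $t\mapsto\Pp(\mathbb G+t\Delta\in K)$ is strictly decreasing in $t>0$ when the density is positive on the span. The delicate point is doing this in infinite dimensions. To handle it, condition on the component of $\mathbb G$ orthogonal to the shift span and apply the finite-dimensional strict Anderson inequality to the conditional law.
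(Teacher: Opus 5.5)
\paragraph{The joint limit.} Your architecture matches the paper's: a Hilbert-space representation, a Gaussian noise part plus a law-of-large-numbers signal part, continuous mapping, a negligible signal under the multipliers, and Anderson's inequality. Two steps, however, do not go through as written.

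The first is the joint limit. You put $\widehat S_{n,q}$ in $L^2$ of the \emph{empirical} centre measure, and that space changes with $n$. So ``$\widehat S^{\varepsilon}_{n,q}\Rightarrow\mathbb G_q$'', Slutsky in the Hilbert space, and ``cross terms are contained in the squared norm of the sum'' have no fixed space to work in.

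Nor can you import that noise limit from Theorem~\ref{thm:main-rademacher}. Its proof gives only the scalar quadratic-form limits, via the spectral expansion of $h_q$ after kernel reduction. It does not give process-level convergence of $\widehat S^{\varepsilon}_{n,q}$.

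Under the alternative this is exactly what you need. Write $\widehat S^{\delta}_{n,q}$ for your signal part. The cross term $2n^{-1}\sum_i\widehat S^{\varepsilon}_{n,q}(\widehat Z_i)\widehat S^{\delta}_{n,q}(\widehat Z_i)$ has the nondegenerate Gaussian limit $2\langle\mathbb G_q,\Delta_q\rangle$. It must converge jointly with the noise quadratic form and across radii.

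The paper handles this by reducing first.
\begin{itemize}
\item It applies Lemma~\ref{lem:ball-matrix} to the array $u_{n,i}=\varepsilon_i+n^{-1/2}\delta_0(Z_i)$. Step~5 extends because $\delta_0$ is bounded. The pure signal remainder is at most $\|\delta_0\|_\infty^2$ times the root-mean-square entry of $M_{n,q}-K_{n,q}$, which is $o_p(1)$ by \eqref{eq:matrix-L2}. The noise--signal remainder has conditional variance $o_p(1)$.
\item It then uses the exact identity $n^{-1}\sum_{r,s}K_{e_q}(Z_r,Z_s)u_{n,r}u_{n,s}=\|n^{-1/2}\sum_i\Psi_q(\cdot,Z_i)u_{n,i}\|^2_{\mathbb H_q}$ in the fixed space $\mathbb H_q=L^2(P_C)$.
\item There, the Hilbert-space CLT applied directly to $\Psi_q(\cdot,Z_i)\varepsilon_i$, the Hilbert-space LLN for the signal, and continuous mapping give the whole limit, cross term included.
\end{itemize}
If you want to stay in the empirical formulation, you must instead prove one of two things. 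Either a joint CLT for the noise quadratic forms together with the linear statistics $\langle\widehat S^{\varepsilon}_{n,q},\Delta_q\rangle$, or a sup-norm Donsker result over the VC class of balls plus control of integration against the empirical centres.

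Your bootstrap step does not have this problem. There the signal part is $o_p(1)$ in norm, so Cauchy--Schwarz within a single inner-product space disposes of the cross term; this is the paper's $O_p(n^{-1})$ argument.

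\paragraph{Strictness.} The second problem is the device you use for strict inequality. Suppose you condition on the component $w$ of $\mathbb G=(\mathbb G_q)_q$ orthogonal to the shift span. Two things go wrong.
\begin{itemize}
\item The conditional law of the span component is generally not centred. The two components are correlated unless the covariance operator commutes with the projection.
\item The section $\{s:s+w\in K\}$ of the acceptance set is convex but in general not symmetric.
\end{itemize}
Finite-dimensional Anderson needs both centring and symmetry, and slice-wise monotonicity really does fail. For a standard normal on $\mathbb R$ and the interval $[1,2]$, shifting by $s=1/2$ toward the origin raises the mass, even after averaging the shifts $\pm s$.

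The monotonicity has to come from the whole space. Let $\Delta=(\Delta_q)_q$ and $f(t)=\Pp(\mathbb G+t\Delta\in K)$. Then $f$ is even, by symmetry of $\mathbb G$ and $K$, and log-concave, since Gaussian measures are log-concave and $K$ is convex. Hence $f$ is nonincreasing on $[0,\infty)$; this is Anderson's inequality with no conditioning.

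For strictness, a clean sufficient reading of the nondegeneracy condition is that $\Delta$ lies in the Cameron--Martin space of the joint Gaussian law. The Cameron--Martin formula then makes $f$ real-analytic. Suppose $f(1)=f(0)$. Evenness and log-concavity force $f$ to be constant on $[-1,1]$, hence on $\mathbb R$. But $f(0)\ge1-\alpha>0$, while $f(t)\to0$ as $t\to\infty$ because $K$ is bounded ($\sum_q\|g_q\|^2\le|\mathcal Q|c_\alpha$ on $K$). This is a contradiction.

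The paper only asserts strictness ``under the stated nondegeneracy and continuity conditions''. So this step needs more care than the source gives it, but the conditioning shortcut cannot supply it.
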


Proposition~\ref{prop:main-local-power} does not assert minimax optimality or
adaptation over shrinking spatial scales. It supplies the narrower connection
needed here: local power is governed by the projection of the mean departure
onto the finite collection of Ball features. Local-alternative calculations
play a similar diagnostic role in adaptive nonparametric specification tests
\citep{penaranda2022asset}. Appendix~\ref{app:sample} gives the proof and
shows why coherent local and radial signals can survive fixed-radius averaging
while rapidly sign-changing signals can cancel.

\paragraph{Overlapping neighbourhoods.}

Overlap does not alter the population criterion, but it changes the null
covariance. Writing $T_n(e)=u'M_{n,e}(X)u$ under the null shows that one
outcome enters several local means. Off-diagonal elements record shared
membership across balls; diagonal elements record self-inclusion and total
leverage. A calibration that treats centre contributions as independent
would ignore both features.

A raw fixed-radius statistic is typically $O_p(n)$ under the null and
$O_p(n^2)$ under a fixed alternative with $Q(e)>0$. This order separation
supports consistency but does not make the finite-sample null distribution
pivotal. Heteroskedasticity changes diagonal and off-diagonal quadratic-form
variation. Radius search adds dependence across the 71 forms. Bootstrap
must approximate their joint law, not only the marginal law at the observed
selected radius.

\subsection{IID Calibration}

Permutation $b$ reorders the outcome vector, holds predictor geometry fixed,
and recomputes the complete maximum. Its Monte Carlo $p$-value is
\begin{equation}
\widehat p_{\mathrm{perm}}
=
\frac{1+\sum_{b=1}^B
\1\{T_{n,b}^{\max}\ge T_n^{\max}\}}
{B+1}.
\label{eq:pvalue}
\end{equation}

\begin{proposition}[Finite-sample permutation validity]
\label{prop:perm}
If the conditional distribution of $(Y_1,\ldots,Y_n)$ given
$(X_1,\ldots,X_n)$ is invariant to every permutation under the null, the
randomisation test has conditional size no greater than $\alpha$.
\end{proposition}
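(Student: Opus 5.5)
The plan is the standard randomisation-test argument, made concrete for the full multiscale statistic. Condition throughout on the predictor matrix $\mathbf X=(X_1,\ldots,X_n)$ and write $T^{\max}(\mathbf Y)$ for the complete map from the outcome vector to the observed maximum \eqref{eq:maxT}.

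\textbf{Step 1: the permutation moves only the outcomes.} First I would check that $T^{\max}$ depends on the data only through $\mathbf Y$ and $\mathbf X$, and that the outcome permutation leaves everything built from $\mathbf X$ unchanged. This covers the standardised predictors $\widehat Z_i$, the empirical distance quantiles $\widehat e_{n,q}$, the balls $B_i(e)$ and sizes $N_i(e)$, the valid-centre sets $\mathcal I_n(e)$, and the admissible index set $\widehat{\mathcal Q}_n$. So each draw $T^{\max}_{n,b}=T^{\max}(\pi_b\mathbf Y)$ is the same deterministic function applied to a permuted outcome vector, with the coverage screening and all 71 radius evaluations repeated exactly as in the observed statistic.

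\textbf{Step 2: exchangeability of the draws.} Under the hypothesis, $\pi\mathbf Y\mid\mathbf X$ has the same law as $\mathbf Y\mid\mathbf X$ for every permutation $\pi$. Take $\pi_1,\ldots,\pi_B$ iid uniform permutations, independent of the data, and set $\pi_0=\mathrm{id}$.
\begin{itemize}
\item Draw an extra uniform $\sigma$. By invariance, $(\pi_0\mathbf Y,\ldots,\pi_B\mathbf Y)$ has the same conditional law as $(\pi_0\sigma\mathbf Y,\ldots,\pi_B\sigma\mathbf Y)$.
\item By right-invariance of the uniform law on the symmetric group, the tuple $(\pi_0\sigma,\ldots,\pi_B\sigma)$ is exchangeable: it is equivalent to $B+1$ iid uniform permutations up to relabelling.
\item Applying $T^{\max}$ coordinatewise then makes $(T^{\max}_{n,0},\ldots,T^{\max}_{n,B})$ exchangeable conditional on $\mathbf X$, where $T^{\max}_{n,0}=T^{\max}_n$.
\end{itemize}

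\textbf{Step 3: the rank bound.} By exchangeability, the rank of $T^{\max}_{n,0}$ among the $B+1$ values is stochastically no smaller than uniform, with ties counted in the conservative direction through ``$\ge$'' in \eqref{eq:pvalue}. Hence $\Pp\{\widehat p_{\mathrm{perm}}\le\alpha\mid\mathbf X\}\le\lfloor\alpha(B+1)\rfloor/(B+1)\le\alpha$. To get this, I would bound the number of indices $k$ for which at least $(B+1)-\lfloor\alpha(B+1)\rfloor$ of the values are $\le T_k$. The same argument covers the full-enumeration version, taking the group average in place of the Monte Carlo draws.

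\textbf{Main obstacle.} Step 2 is the delicate point, since the Monte Carlo version needs the conditional exchangeability argument rather than a plain group-invariance statement. Ties are handled by the ``$\ge$'' convention, so no randomised tie-breaking is needed.

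It is also worth noting why the data-dependent radius search is harmless. Inside the statistic, $\widehat{\mathcal Q}_n$ and the selected scale depend on $\mathbf X$, which is fixed, and on the permuted $\mathbf Y$, which is recomputed in each draw. No validity is lost because the whole maximisation is redone every time. Integrating over $\mathbf X$ then gives unconditional size at most $\alpha$.
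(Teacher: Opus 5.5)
Your proposal is correct and follows the paper's own route: conditional on $\mathbf X$, the observed and recomputed maxima are exchangeable, and the plus-one rank bound then gives size at most $\alpha$; you add the Monte Carlo exchangeability construction and the tie convention that the paper's one-line proof leaves implicit. One small repair is needed in Step 3: count the indices $k$ with at least $(B+1)-\lfloor\alpha(B+1)\rfloor$ values strictly below $T_k$ (equivalently $\#\{j:T_j\ge T_k\}\le\lfloor\alpha(B+1)\rfloor$, which for $k=0$ is exactly the rejection event), because with ``$\le T_k$'' that set can have more than $\lfloor\alpha(B+1)\rfloor$ members, for instance when values tie.
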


Conditional exchangeability is stronger than mean independence. A process
$Y=\mu+\sigma(X)U$ can satisfy $\E[U\mid X]=0$ while changing the conditional
distribution of $Y$ across $X$. Unrestricted permutation then reallocates
large residual magnitudes across predictor states and can distort the null
law. Canonical matched simulations retain permutation because SNR zero
there generates full independence. A heteroskedastic extension evaluates
the broader conditional-mean null.

\paragraph{Rademacher multipliers.}

Let $\widehat u_i=Y_i-\bar Y$ and generate
\begin{equation}
Y_i^*=\bar Y+\xi_i\widehat u_i,
\qquad
\Pp(\xi_i=1)=\Pp(\xi_i=-1)=1/2.
\label{eq:wild}
\end{equation}
Independent signs preserve each residual magnitude and therefore preserve
the observed pattern of conditional scale. Predictor geometry remains fixed
because it depends only on $X$.

The primitive iid argument uses the population Ball kernel. For a centre
$c$ and observation $z$, define
\[
g_e(c,z)=\frac{\1\{\|z-c\|\le e\}}{p_e(c)}-1
\]
and
\begin{equation}
K_e(z,z')
=
\E\!\left[
p_e(C)g_e(C,z)g_e(C,z')
\right],
\label{eq:main-ball-kernel}
\end{equation}
where $C$ is an independent predictor draw. Uniform positive local mass
makes $K_e$ bounded. It is also positive semidefinite: it is an average of
rank-one centred Ball profiles.

Writing $u_i=Y_i-\mu$, the fixed-radius statistic satisfies
\[
n^{-1}T_n(\widehat e_{n,q})
=
\frac1n\sum_{r=1}^n\sum_{s=1}^n
K_{e_q}(Z_r,Z_s)u_ru_s+o_p(1)
\]
uniformly over the 71 fixed quantile indices. The sample Ball matrix
annihilates constants exactly, so replacing $\mu$ by $\bar Y$ does not
create an observed-statistic residual-estimation error. Sample
standardisation, empirical pairwise-distance quantiles, and the
minimum-neighbour/coverage rule enter the $o_p(1)$ remainder. Appendix
\ref{app:calibration} derives the matrix bounds and replacement argument.

Under the conditional-mean null, the kernel
\[
h_q\{(z,u),(z',u')\}
=
K_{e_q}(z,z')uu'
\]
is degenerate because $\E[u\mid Z]=0$. Hence the iid null problem is a
finite vector of degenerate V-statistics rather than 71 unrelated local
tests. A finite-rank spectral approximation plus the ordinary and
multiplier central limit theorems yields the following result; the argument
is closely related to weighted-bootstrap theory for degenerate U- and
V-statistics \citep{arconesgine1992bootstrap}.

\begin{theorem}[Primitive iid validity of the raw finite-grid maximum]
\label{thm:main-rademacher}
Suppose observations are iid; the predictor support, local-mass, and
pairwise-distance conditions used in Theorem~\ref{thm:main-fixed-grid} hold;
$\E[u\mid Z]=0$; conditional variances are bounded above and away from zero;
and conditional $(4+\eta)$ moments are uniformly bounded. Suppose at least
one indexed population Ball kernel is nonzero in $L^2(P_Z\times P_Z)$.
Then the conditional Rademacher law of
\[
\left\{
n^{-1}T_n^*(\widehat e_{n,q})
:q\in\mathcal Q
\right\}
\]
converges in probability, in bounded-Lipschitz distance, to the same
finite-dimensional second-order Gaussian-chaos law as
\[
\left\{
n^{-1}T_n(\widehat e_{n,q})
:q\in\mathcal Q
\right\}.
\]
Because at least one indexed kernel is nonzero, the limiting maximum has no positive atoms; the ideal Rademacher critical value for the raw maximum therefore has asymptotic level $\alpha$.
\end{theorem}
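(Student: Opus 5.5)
The plan is to reduce both the observed and the bootstrap radius vectors to degenerate V-statistics in the population Ball kernels, approximate each kernel by a finite spectral expansion, and then compare Gaussian limits. Throughout, I take as given the displayed reduction stated just before the theorem:
\[
n^{-1}T_n(\widehat e_{n,q})=n^{-1}\sum_{r,s}K_{e_q}(Z_r,Z_s)u_ru_s+o_p(1)
\]
uniformly over the 71 indices. This reduction is itself built from the uniform VC-class arguments behind Theorem~\ref{thm:main-fixed-grid}.

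\textbf{Step 1: the observed limit.} Each $K_q:=K_{e_q}$ is bounded and positive semidefinite, and $P_Z$ has compact support. Mercer's theorem on $L^2(P_Z)$ therefore gives
\[
K_q(z,z')=\sum_k\lambda_{q,k}\phi_{q,k}(z)\phi_{q,k}(z'),\qquad \lambda_{q,k}\ge0,\qquad \sum_k\lambda_{q,k}<\infty.
\]
Truncating at $L$ terms, the V-statistic becomes
\[
\sum_{k\le L}\lambda_{q,k}\Bigl(n^{-1/2}\sum_r\phi_{q,k}(Z_r)u_r\Bigr)^2.
\]
The remainder has expectation $\sum_{k>L}\lambda_{q,k}\E[\phi_{q,k}^2\sigma^2(Z)]$, which is at most $\bar\sigma^2\sum_{k>L}\lambda_{q,k}\to0$. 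Since $\E[u\mid Z]=0$, the vector $\{n^{-1/2}\sum_r\phi_{q,k}(Z_r)u_r\}$ indexed by all pairs $(q,k\le L)$ is a sum of iid mean-zero terms with covariance $\E[\phi_{q,k}\phi_{q',k'}\sigma^2(Z)]$. The multivariate CLT and the continuous mapping theorem then give joint convergence to $\sum_k\lambda_{q,k}G_{q,k}^2$. A standard $L\to\infty$ argument (Billingsley's Theorem 3.2) yields the full second-order chaos law $\mathcal L$.

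\textbf{Step 2: the bootstrap reduction.} With $\widehat u_i=Y_i-\bar Y$ and $Y^*_i-\bar Y^*=\xi_i\widehat u_i-\overline{\xi\widehat u}$, the fact that the sample Ball matrix annihilates constants gives
\[
T_n^*(\widehat e_{n,q})=(\xi\circ\widehat u)'M_{n,q}(\xi\circ\widehat u).
\]
This is computed on the same $X$. The same matrix bounds from Appendix~\ref{app:calibration} show that, conditionally on the data, this equals $n^{-1}\sum_{r,s}K_q(Z_r,Z_s)\xi_r\xi_s\widehat u_r\widehat u_s+o_{p^*}(1)$ in probability. The key point is that the matrix approximation error is uniform in the weights, because it is controlled in operator norm times $\|\widehat u\|^2/n=O_p(1)$. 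Replacing $\widehat u_r$ by $u_r$ costs a term driven by $(\bar Y-\mu)$, which is $O_p(n^{-1/2})$ and negligible.

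\textbf{Step 3: the bootstrap limit.} Apply the same truncation to the reduced bootstrap statistic. The conditional multiplier CLT (Lindeberg, using the $(4+\eta)$ moments) shows that $n^{-1/2}\sum_r\xi_r\phi_{q,k}(Z_r)u_r$ has conditional covariance $n^{-1}\sum_r\phi\phi' u_r^2$. By the LLN this converges to the same $\E[\phi\phi'\sigma^2(Z)]$ as in Step 1. The conditional tail remainder is bounded by $n^{-1}\sum_r\sum_{k>L}\lambda_{q,k}\phi_{q,k}^2(Z_r)u_r^2$, whose expectation matches the Step 1 bound. Bounded-Lipschitz convergence in probability then follows through a countable-subsequence argument, as in \citet{arconesgine1992bootstrap}.

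\textbf{Step 4: the critical value and the main obstacle.} The maximum is a continuous map, so it transfers. Some $K_q\neq0$ forces $\lambda_{q,1}>0$. Since $\sigma^2$ is bounded below, the covariance of $G_{q,1}$ is positive, and $\max_q\sum_k\lambda_{q,k}G_{q,k}^2$ is then supported on $[0,\infty)$ with a continuous distribution function. To see continuity, note that a maximum of finitely many nonnegative quadratic forms in a Gaussian vector has atoms only where some form is degenerate, which happens only at $0$. The value $0$ has probability zero, because the nondegenerate form is positive almost surely. Hence the bootstrap quantile converges to the limiting quantile, and the test has level $\alpha$.

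I expect the main obstacle to be the uniformity in Step 2. The empirical Ball matrix depends on $\widehat Z$ and on the empirical quantile radii, which jump at points of $X$. The kernel approximation must therefore hold in a quadratic-form sense, $\|M_{n,q}-n^{-1}[K_q(Z_r,Z_s)]\|_{\mathrm{op}}=o_p(1)$ or at least in Hilbert--Schmidt norm weighted by $u$, and not merely for the fixed vector $u$. My first attempt would be a Hilbert--Schmidt bound that uses the continuity of the distance distribution at $e_q$ to control the boundary-shell mass.
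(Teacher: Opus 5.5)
Your overall route is the paper's proof of Theorem~\ref{thm:rademacher}: reduce both radius vectors to degenerate V-statistics in the population Ball kernels, truncate a spectral expansion, apply the ordinary and conditional Lindeberg CLTs, remove the truncation, and rule out positive atoms. Two of your choices differ from the paper but are harmless. You diagonalise $K_{e_q}$ on $L^2(P_Z)$ with correlated $G_{q,k}$, where the paper diagonalises $h_q$ on $L^2(P_W)$. You also control the truncation tail through nonnegativity plus its mean.

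The genuine gap is your justification of the bootstrap reduction in Step~2. Write $\Delta_{n,q}=M_{n,q}-[K_{e_q}(Z_r,Z_s)]_{r,s}$; both matrices have $O(1)$ entries, so your closing display with $n^{-1}[K_q]$ is mis-scaled. Your bound ``operator norm times $\|\widehat u\|^2/n$'' needs $\|\Delta_{n,q}\|_{\mathrm{op}}=o_p(1)$, and this is false.

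To see this, take $v=n^{-1/2}(a(Z_1),\ldots,a(Z_n))'$ for a bounded $a$. Then $v'M_{n,q}v=n^{-1}T_n^{(a)}(\widehat e_{n,q})$, the Ball statistic computed with outcome $a(Z_i)$. This is a fixed alternative, so it fluctuates at order $\sqrt n$ around $nQ_a(e_q)$. Part of that fluctuation comes from the empirical average over centres, a term $\sqrt n\,\mathbb G_n(\beta^2)$ with $\mathbb G_n=\sqrt n(P_n-P)$ and $\beta(c)=\E[a(Z)\Psi_q(c,Z)]$. The kernel form $v'[K_{e_q}(Z_r,Z_s)]v$ integrates the centres exactly and has no such term. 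Hence $\|\Delta_{n,q}\|_{\mathrm{op}}$ is generically of order $\sqrt n$, not $o_p(1)$. What the appendix actually proves is $n^{-2}\|\Delta_{n,q}\|_F^2=o_p(1)$ plus average diagonal convergence, which yields only $\|\Delta_{n,q}\|_{\mathrm{op}}=o_p(n)$. The approximation is therefore not uniform in the weights.

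What rescues Step~2 is that the weight vector $\xi\circ\widehat u$ has conditionally independent, mean-zero coordinates. Its second moments therefore see $\Delta_{n,q}$ only through the diagonal and the Frobenius norm. This is Step~5 of Lemma~\ref{lem:ball-matrix}: conditionally on the data,
\[
\E^*\{(\xi\circ\widehat u)'\Delta_{n,q}(\xi\circ\widehat u)\}=\sum_i\Delta_{n,q,ii}\widehat u_i^2,
\qquad
\operatorname{Var}^*\{(\xi\circ\widehat u)'\Delta_{n,q}(\xi\circ\widehat u)\}\le2\sum_{i\ne j}\Delta_{n,q,ij}^2\widehat u_i^2\widehat u_j^2 .
\]
Use $\widehat u_i^2\le2u_i^2+2\bar u^2$, condition on $Z$, and apply bounded conditional variances. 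The right-hand sides are then $o_p(n)$ and $o_p(n^2)$ by the average-diagonal and Frobenius bounds, so $n^{-1}(\xi\circ\widehat u)'\Delta_{n,q}(\xi\circ\widehat u)=o_p^*(1)$.

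The observed-statistic reduction you took as given works the same way, through $\E(u\mid Z)=0$ and independence rather than any uniformity. Your closing idea of a Hilbert--Schmidt bound with boundary-shell control is the right ingredient. It has to feed this second-moment calculation, because no operator-norm statement of the kind you propose is available.
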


Theorem~\ref{thm:main-rademacher} is a result for the implemented raw
finite-grid maximum, not for a componentwise studentised surrogate. It does
not require a separate common-normalisation transfer assumption. The scope of Theorem ~\ref{thm:main-rademacher}
is nevertheless specific: the grid cardinality is fixed, predictor support
is regular and full-dimensional, and the result is iid. Theorem ~\ref{thm:main-rademacher} does not
cover a growing radius grid, singular predictor manifolds, or the serial
bootstrap used in the finance application. The 108-cell iid
heteroskedastic experiment remains useful as a finite-sample check rather
than as a substitute for the theorem.

\subsection{Time-Series Calibration}

Serial dependence creates a second preservation problem. iid signs retain
residual magnitudes but destroy autocorrelation. Circular shifts retain the
outcome's ordering but break its alignment with the predictor path, including
predictor-linked conditional variance. A dependent Gaussian multiplier can
approximate serial covariance, but one fixed dependence bandwidth need not
fit processes with different persistence and variance dynamics.

Serial calibration estimates a low-order outcome dynamic and perturbs its
dated innovations. This construction aims to preserve three features at once:
observed innovation magnitudes, innovation timing relative to predictors, and
the fitted serial propagation of shocks. This method is therefore more
structured than a model-free block or shift randomisation.

\paragraph{Bootstrap algorithm.}

Bootstrap implementation applies the following prespecified steps to each sample.

\begin{tcolorbox}[title={Prewhitened recursive Rademacher bootstrap},colback=white,colframe=black!55]
\begin{enumerate}[leftmargin=*,label=\arabic*.]
\item Centre the outcome and fit candidate autoregressions of orders
$p\in\{0,\ldots,6\}$ by least squares on the dated observations available to
that order. Exclude candidate fits that fail the implementation's stability
check.
\item Select $\widehat p$ by the comparable-sample BIC rule and retain
one-step innovations $\widehat\eta_t$ at their original dates for
$t=\widehat p+1,\ldots,n$.
\item Draw independent Rademacher signs $\xi_t^{(b)}$ and set
$\eta_t^{*(b)}=\xi_t^{(b)}\widehat\eta_t$ for those dated innovations.
\item Initialise the recursion with the observed centred values
$Y_1-\bar Y,\ldots,Y_{\widehat p}-\bar Y$ and generate
\[
Y_t^{*(b)}-\bar Y
=
\sum_{j=1}^{\widehat p}\widehat\phi_j
\{Y_{t-j}^{*(b)}-\bar Y\}
+
\eta_t^{*(b)},
\qquad t=\widehat p+1,\ldots,n.
\]
For $\widehat p=0$, set $Y_t^{*(b)}=\bar Y+\eta_t^{*(b)}$.
\item Recompute all fixed-radius statistics, coverage screening, and the raw
71-radius maximum.
\item Rank the observed maximum among $B$ bootstrap maxima using the
finite-resample correction in \eqref{eq:pvalue}.
\end{enumerate}
\end{tcolorbox}

Prewhitening removes fitted linear persistence from the outcome. Sign changes
preserve the absolute size and date of each fitted innovation. Recursive
recolouring restores the fitted propagation mechanism. Identical observed
initial values enter every bootstrap draw. BIC allows the order to vary across
samples without introducing an unrestricted sieve. Its implemented order set
is fixed, so the paper does not call the method a classical nonparametric
AR-sieve bootstrap. Implementation records document the numerical stability tolerance
used by the candidate-fit check; no theorem-level claim depends on a
particular tolerance.

\paragraph{Serial validity class.}

Write $u_t=Y_t-\mu$ and suppose
\[
u_t=\sum_{j=1}^{p_0}\phi_j u_{t-j}+\eta_t,
\qquad p_0\le6,
\]
with a stable autoregressive polynomial. Formal validity of the recursive
sign bootstrap uses a conditional sign-symmetry class tailored to the
algorithm. For $t>p_0$, write $\eta_t=a_t s_t$, $a_t\ge0$, and let
$\mathcal A_n$ contain the complete predictor path, the observed initial
conditions, and the innovation magnitudes. Conditional on $\mathcal A_n$,
the signs $s_t$ are independent Rademacher variables. Magnitudes may vary
with predictors and volatility states, so conditional heteroskedasticity is
unrestricted apart from moment and nondegeneracy bounds. Sign symmetry is
stronger than the contemporaneous null in \eqref{eq:method-null}; it is a
maintained calibration condition, not a consequence of mean independence.

Known autoregressive parameters make the logic transparent. Conditional on
$\mathcal A_n$, the observed trajectory is generated by one Rademacher sign
vector passed through the stable AR filter. Redrawing those signs and
recursively recolouring the innovations therefore reproduces the exact
conditional oracle law of the complete 71-radius vector and its maximum. The
feasible algorithm estimates the order, coefficients, and innovation
magnitudes. Over the fixed order set, standard finite-order BIC separation
selects $p_0$ consistently; least-squares coefficient and residual replacement
then alter the $n^{-1}$ Ball vector by $o_p(1)$. Appendix~\ref{app:serial}
proves each replacement step.

\begin{theorem}[Feasible recursive-sign validity for stable finite-order dynamics]
\label{thm:main-serial}
Suppose the stable AR model above holds; innovations are conditionally
sign-symmetric given $\mathcal A_n$; conditional innovation moments of order
$4+\delta$ are uniformly bounded and second moments are bounded away from
zero; the true order is uniquely represented in the fixed candidate set;
and the fixed-grid Ball matrices satisfy the leverage conditions established
in Appendix~\ref{app:calibration}. Then the comparable-sample BIC selects
$p_0$ with probability approaching one, the least-squares AR estimator is
root-$n$ consistent, and the conditional law of the feasible prewhitened
recursive Rademacher radius vector converges in bounded-Lipschitz distance to
the oracle conditional sign-randomisation law. If the largest conditional atom of the oracle raw maximum converges to zero in probability, the feasible raw-max test has asymptotic conditional level $\alpha$ and hence unconditional level $\alpha$.
\end{theorem}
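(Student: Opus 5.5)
The plan is to prove Theorem~\ref{thm:main-serial} in four steps: oracle exactness, order selection, coefficient and residual replacement, and the transfer from bounded-Lipschitz closeness to the level statement. Throughout I write the statistic vector as $\{n^{-1}T_n(\widehat e_{n,q}) : q\in\mathcal Q\}$. Under the null it is a function of the predictor path, which is fixed given $\mathcal A_n$, and of the centred outcome vector $u$. The sample Ball matrix annihilates constants, so centring at $\bar Y$ or at $\mu$ gives the same vector, and each component is $u'M_{n,q}u/n$ with $M_{n,q}$ measurable with respect to $\mathcal A_n$.

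\textbf{Oracle exactness.} With $p_0$ and $\phi$ known, write $u=\Phi^{-1}(u_{\mathrm{init}}+Hs)$. Here $\Phi$ is the lower-triangular AR filter matrix, $u_{\mathrm{init}}$ carries the observed initial conditions, $H=\operatorname{diag}(a_t)$, and $s$ is the sign vector. Given $\mathcal A_n$, the observed $s$ is a Rademacher vector. Redrawing $s$ therefore reproduces the conditional law of the entire 71-vector exactly, and hence the law of its maximum.

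\textbf{Order selection.} I will use the standard finite-order BIC argument over $\{0,\ldots,6\}$. Underfitting orders $p<p_0$ leave a residual-variance gap bounded away from zero. This follows from stability, a law of large numbers for the martingale-difference products $\eta_tu_{t-j}$ (using the bounded $4+\delta$ moments), and second moments bounded away from zero. Overfitting orders reduce the log residual variance by only $O_p(\log\log n/n)$, which is smaller than the $\log n/n$ penalty. Hence $\Pp(\widehat p=p_0)\to1$. On that event, least squares gives $\widehat\phi-\phi=O_p(n^{-1/2})$, by a martingale CLT applied to $n^{-1/2}\sum_t\eta_tu_{t-j}$ together with a nonsingular limit of the design Gram matrix.

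\textbf{Replacement.} The feasible bootstrap outcome is $u^*=\widehat\Phi^{-1}(u_{\mathrm{init}}+\widehat H\xi)$, where $\widehat H=\operatorname{diag}(|\widehat\eta_t|)$; the signs of $\widehat\eta_t$ are absorbed into $\xi$. The oracle draw is $u^{o}=\Phi^{-1}(u_{\mathrm{init}}+H\xi)$, and I couple the two with the same $\xi$. Then
\[
n^{-1}\bigl|u^{*\prime}M_{n,q}u^*-u^{o\prime}M_{n,q}u^o\bigr|\le n^{-1}\|M_{n,q}\|_{\mathrm{op}}\,\|u^*-u^o\|\,(\|u^*\|+\|u^o\|).
\]
The leverage conditions from Appendix~\ref{app:calibration} give $\|M_{n,q}\|_{\mathrm{op}}=O_p(1)$ uniformly over the fixed grid. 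Stability gives $\|\Phi^{-1}\|_{\mathrm{op}}$ and $\|\widehat\Phi^{-1}\|_{\mathrm{op}}$ bounded with probability approaching one, and also $\|u^o\|=O_p(\sqrt n)$. It therefore suffices to show $\|u^*-u^o\|=o_p(\sqrt n)$. Splitting $\widehat\Phi^{-1}-\Phi^{-1}=\widehat\Phi^{-1}(\Phi-\widehat\Phi)\Phi^{-1}$ bounds the filter term by $O_p(n^{-1/2})\cdot O_p(\sqrt n)=O_p(1)$. For the magnitude term, $\bigl||\widehat\eta_t|-a_t\bigr|\le|\widehat\eta_t-\eta_t|\le\|\widehat\phi-\phi\|\sum_j|u_{t-j}|$, so its Euclidean norm is $O_p(1)$. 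The initial conditions are identical in both draws and contribute nothing.

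\textbf{Transfer to level.} Since the conditional coupling distance is $o_p(1)$, the bounded-Lipschitz distance between the two conditional laws is $o_p(1)$, for the vector and hence for its maximum. The feasible critical value is thus squeezed between oracle quantiles at levels $\alpha\pm\epsilon$. The assumption that the largest conditional atom of the oracle maximum vanishes converts the exact oracle randomisation level into asymptotic conditional level $\alpha$ for the feasible test. Dominated convergence then gives unconditional level $\alpha$.

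The step I expect to be the main obstacle is the operator-norm (leverage) control of $M_{n,q}$. It has to hold uniformly over the 71 empirical radii, which are data-dependent through sample standardisation and pairwise quantiles, and under serially dependent predictors rather than iid ones. I would first try to use $M_{n,q}=A'DA$ directly: row sums of $W$ equal one and column sums are $O(\max_j\#\{i: j\in B_i\}/N_{\min})$, so Schur's test bounds $\|M_{n,q}\|_{\mathrm{op}}$ by a maximal ratio of neighbourhood counts. That ratio needs only local-mass regularity, not independence, and I would invoke the stated Appendix~\ref{app:calibration} conditions for it.
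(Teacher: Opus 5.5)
\textbf{Gap: the replacement step rests on a false operator-norm bound.}

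Your oracle, BIC, and level-transfer steps follow the paper's argument (Proposition~\ref{prop:serial-oracle}, Lemma~\ref{lem:serial-bic}, Theorem~\ref{thm:serial-feasible}). The replacement step, however, needs $\|M_{n,q}\|_{\mathrm{op}}=O_p(1)$, and this is false.

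The entries of $M_{n,q}$ are of order one. That is what \eqref{eq:matrix-entry} says, and \eqref{eq:matrix-L2} gives $M_{n,q,rs}\approx K_{e_q}(Z_r,Z_s)$. Its largest eigenvalue is therefore of order $n$ whenever $K_{e_q}\neq0$. A direct check confirms this. Take $v_r=f(Z_r)$ for a bounded $f$ whose ball averages $\bar f_i$ differ from the global mean $\bar f$ on a positive fraction of centres. Then $v'M_{n,q}v=\sum_iN_i\{\bar f_i-\bar f\}^2$ is of order $n^2$, while $\|v\|^2$ is of order $n$.

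The leverage conditions only give $\|M_{n,q}\|_{\mathrm{op}}\le\|M_{n,q}\|_F=O_p(n)$, and that is the true order. Your Schur computation controls $W$, and hence $A$. But $M_{n,q}=A'DA$ also carries $D=\operatorname{diag}\{N_i\}$, and each $N_i$ is of order $n$.

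With the correct order, your Cauchy--Schwarz bound becomes $n^{-1}\cdot O_p(n)\cdot\|u^*-u^o\|\cdot O_p(\sqrt n)$. Here $\|u^*-u^o\|$ is genuinely of order one: an $n^{-1/2}$ coefficient error acts on lagged vectors of norm $\sqrt n$, and the magnitude errors have $\ell_2$ norm of order one. So the bound is $O_p(\sqrt n)$ and diverges. The obstacle you flagged is not uniformity over data-dependent radii or serial dependence in the predictors. The operator norm is simply the wrong quantity to control.

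\textbf{Missing idea: the cancellation from the random signs.}

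The large eigendirections of $M_{n,q}$ are smooth functions of the predictors. The signed error vector has small projections onto them only because $\xi$ is Rademacher, and a pathwise bound that treats $\xi$ as fixed throws this away.

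The paper keeps the common $\xi$ explicit. The two bootstrap forms differ by $\xi'H_{n,q}\xi$ plus initial-condition terms, where $H_{n,q}=D_{\widehat a}\widetilde M_{n,q}(\widehat\phi)D_{\widehat a}-D_a\widetilde M_{n,q}(\phi)D_a$ and $\widetilde M_{n,q}(\phi)=L_n(\phi)'M_{n,q}L_n(\phi)$. It then uses $\E^*(\xi'H\xi)=\operatorname{tr}H$ and $\operatorname{Var}^*(\xi'H\xi)\le2\|H\|_F^2$.

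It proves $\|H_{n,q}\|_F=o_p(n)$ and $|\operatorname{tr}H_{n,q}|=o_p(n)$ from three ingredients:
\begin{itemize}
\item the entry, row-square, and diagonal bounds, which survive stable filtering (Lemma~\ref{prop:serial-filter});
\item $\|\widehat a-a\|_2=O_p(1)$ and $\max_t|\widehat a_t-a_t|=o_p(1)$;
\item the $(4+\delta)$-moment control of $\max_ta_t$.
\end{itemize}
These Frobenius and trace quantities are exactly what the leverage conditions bound.

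\textbf{A smaller point: the initial conditions.} They are not identical in the two draws. The oracle uses $u_1,\ldots,u_{p_0}$, while the algorithm uses $Y_t-\bar Y$. Once this constant offset is propagated through the recursion, it is no longer annihilated by $M_{n,q}$. It therefore needs its own $O_p(n^{-1/2})$ bound, handled as a Rademacher linear term with conditional variance $o_p(n^2)$.
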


If the anti-atom condition fails, the oracle plus-one rank test remains conditionally conservative and the feasible approximation inherits that conservative interpretation. The anti-atom condition is needed for equality with the nominal asymptotic rejection probability, not for the basic randomisation ordering.

Theorem~\ref{thm:main-serial} proves validity for a restricted serial class
that matches the recursive sign mechanism. It allows predictor-linked
innovation magnitudes but excludes asymmetric innovation signs, unstable or
long-memory dynamics, and feedback from current innovation signs into the
conditioned predictor path. For the broader martingale-difference working
class, let $\mathcal G_t=\sigma(\mathcal F_{t-1},X_t)$ and impose
$\E[\eta_t\mid\mathcal G_t]=0$ together with
$\sum_{j=1}^{p_0}\phi_j\E[u_{t-j}\mid X_t]=0$. Those restrictions preserve
the contemporaneous conditional-mean target but do not imply conditional
sign symmetry. Application-aligned null simulations therefore complement,
rather than enlarge, the theorem: they evaluate the implemented procedure in
asymmetric and persistence/variance configurations for which no general
statistic-specific result is claimed.

Circular shifts and block permutations remain sensitivity checks. The paper
does not claim model-free validity for arbitrary stationary processes,
unrestricted conditional-independence testing, or exact finite-sample
validity after AR estimation. The serial theorem instead supplies a clean
formal core for the algorithm and leaves the broader simulation evidence in
its proper finite-sample role.

\subsection{Scope and Limitations}

Five boundaries govern use of the MBCMI implementation procedure. Predictor dimension can make every local method sparse, and distance-quantile indexing does not remove the curse of dimensionality. Raw local averaging can lose power when signs alternate inside supported balls. Fixed-grid adaptation is not omnibus consistency: the procedure is consistent only when at least one indexed probability scale carries a nonzero Ball-smoothed mean departure. Selected radii describe the maximising scale inside that finite domain and do not estimate a unique structural signal width. Predictor geometry also matters because the robustness experiment shows no universally optimal transformation.

Primitive iid theory uses compact regular support and uniform local mass. Those assumptions deliberately simplify observed-centre and empirical-radius control and become demanding as dimension grows; the $d=10$ and $d=20$ experiments therefore provide finite-sample evidence rather than a claim that the local-mass condition is innocuous in moderate dimension. Gaussian, $t$, and lognormal predictor laws in the robustness experiment also lie outside the compact-support theorem and provide evidence beyond its formal scope. The serial theorem likewise has a sharp boundary. It covers stable finite-order autoregressions with conditionally sign-symmetric innovations and predictor-linked magnitudes, while the broader martingale-difference class remains simulation-validated.

Endpoint diagnostics showed that a $q\le .50$ search was empirically binding in a material subset of application-sized windows. The $q\le .75$ extension passes the application-aligned serial-null criterion. A sparse $q\le .90$ tail sensitivity does not move any central selected radius above .75, but it changes the Holm-adjusted macro-family status of unemployment-rate change; the paper therefore reports that sensitivity explicitly. High-contribution regions identify predictor states associated with the statistic, not causal regimes.

Appendix~\ref{app:proof-audit} makes the evidential hierarchy explicit. Appendices~\ref{app:population}--\ref{app:calibration} contain the population, fixed-grid, local-power, and iid calibration proofs. Appendix~\ref{app:serial} proves the restricted feasible serial result and records the broader simulation scope. Later appendices document implementation, canonical comparisons, predictor-law robustness, finance residualisation, weighting, and reproducibility.

\section{Monte Carlo Evidence}
\label{sec:mc}

\subsection{Canonical design}

The canonical experiment uses nine conditional-mean DGPs, $n\in\{200,500\}$, $d\in\{2,10,20\}$, two active coordinates, 21 SNR values from zero to two, and 1,000 common datasets per cell. The 54 DGP--sample-size--dimension blocks generate 1,134 design--SNR cells. Table~\ref{tab:simulation_scope} summarises the design and Table~\ref{tab:dgp_taxonomy} names the geometries.

\begin{table}[!htbp]
\centering
\caption{Scope of the canonical matched Monte Carlo experiment}
\label{tab:simulation_scope}
\begin{threeparttable}
\small
\begin{tabularx}{\textwidth}{@{}lX@{}}
\toprule
Design component & Specification \\
\midrule
Alternative geometries & Nine DGPs spanning global smooth, sign-changing,
compact local, radial, disconnected, alternating, narrow-support, and
local-magnitude conditional-mean departures. \\
Sample size & $n\in\{200,500\}$. \\
Predictor dimension & $d\in\{2,10,20\}$ with two active coordinates. \\
Signal strength & 21 SNR values from 0 to 2. \\
Reported columns & MBCMI, distance correlation, two HSIC implementations, KSG, $k$NN local means, and MICe in the historical seven-method archive; NCMD $K=5$ and $K=10$ enter a separate paired canonical extension. \\
Monte Carlo replication & 1,000 common datasets per method--design--SNR cell. \\
Matched comparison structure & 54 design blocks, 1,134 design--SNR cells,
and paired method comparisons on identical simulated datasets. \\
Primary reporting principle & Size under the null, geometry-specific power,
scale-adaptation regret, and explicit failure cases; no average-power
dominance claim. \\
\bottomrule
\end{tabularx}
\begin{tablenotes}[flushleft]\footnotesize
\item Notes: A design block is a DGP--sample-size--dimension combination.
The replication archive contains the complete cell-level results.
\end{tablenotes}
\end{threeparttable}
\end{table}

\begin{table}[!htbp]\centering
\caption{Data-generating processes in the matched Monte Carlo design}
\label{tab:dgp_taxonomy}
\begin{tabularx}{\textwidth}{@{}p{0.17\textwidth}p{0.19\textwidth}X@{}}
\toprule
DGP & Family & Structural feature \\
\midrule
Linear & Global & Smooth linear index in the active coordinates. \\
Interaction & Global & Sign-changing product interaction in the active coordinates. \\
Local island & Local/geometric & A compact region carries a mean shift while the remainder of the support does not. \\
Local linear island & Local/geometric & A smooth linear signal operates only inside a compact region. \\
Ring signal & Local/geometric & An annular, non-convex region carries the signal. \\
Two islands & Local/geometric & Two separated compact regions carry opposing local signals. \\
Checkerboard & Local/geometric & Neighbouring regions alternate in sign. \\
Thin band & Local/geometric & Signal is concentrated in a narrow diagonal band. \\
Local absolute signal & Local/geometric & An absolute-value conditional-mean signal operates only inside a compact region. \\
\bottomrule
\end{tabularx}
\end{table}

Figure~\ref{fig:dgp-geometry} shows the signal shapes. These surfaces define conditional-mean alternatives, not predictor distributions.

\begin{figure}[!htbp]
\centering
\begin{subfigure}{0.31\textwidth}\includegraphics[width=\linewidth]{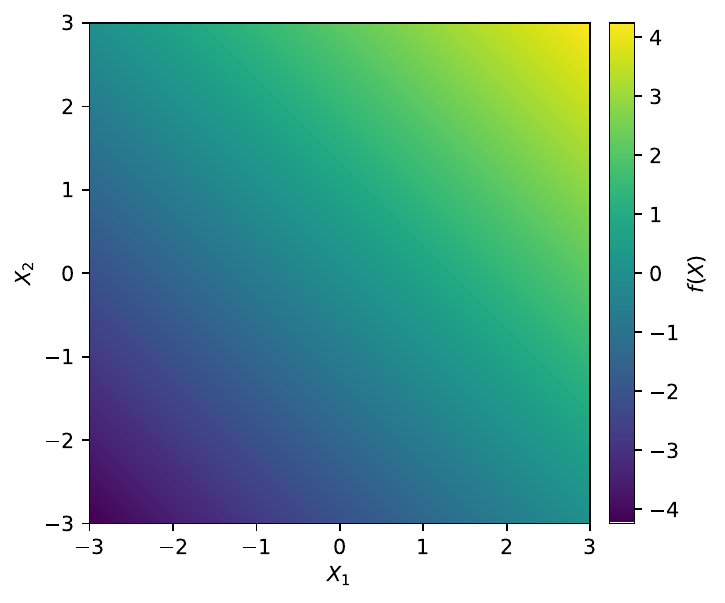}\caption{Linear}\end{subfigure}
\begin{subfigure}{0.31\textwidth}\includegraphics[width=\linewidth]{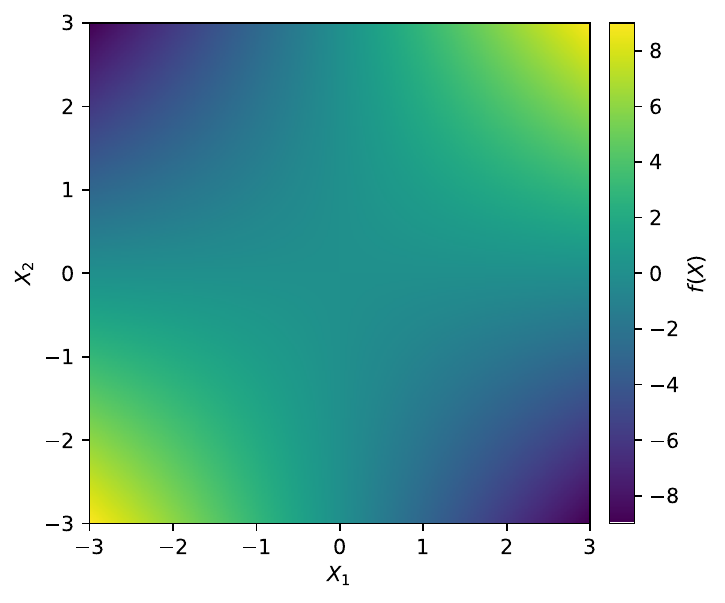}\caption{Interaction}\end{subfigure}
\begin{subfigure}{0.31\textwidth}\includegraphics[width=\linewidth]{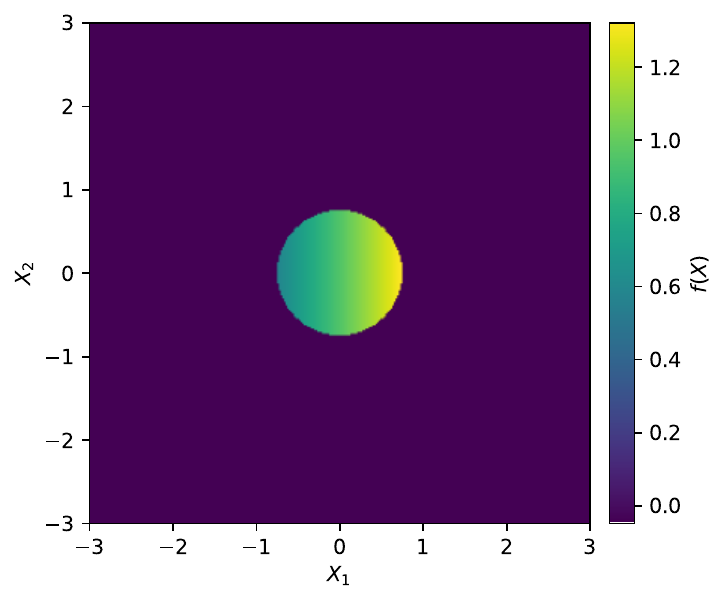}\caption{Local island}\end{subfigure}
\begin{subfigure}{0.31\textwidth}\includegraphics[width=\linewidth]{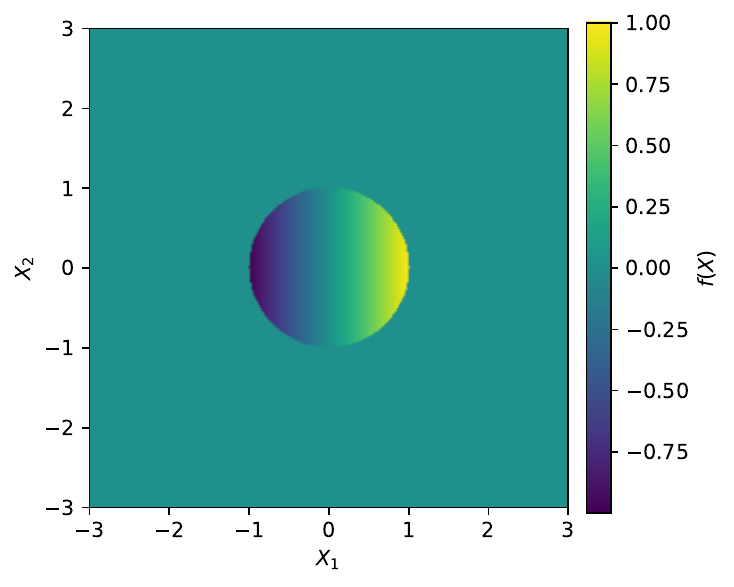}\caption{Local linear island}\end{subfigure}
\begin{subfigure}{0.31\textwidth}\includegraphics[width=\linewidth]{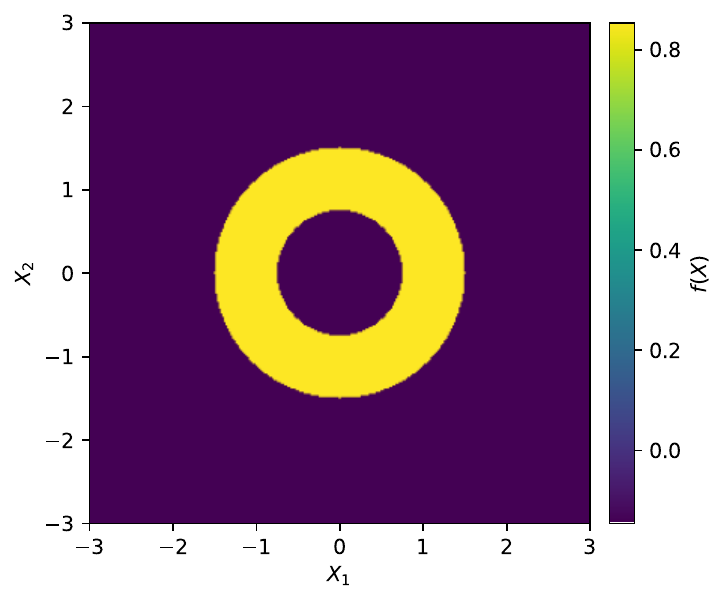}\caption{Ring}\end{subfigure}
\begin{subfigure}{0.31\textwidth}\includegraphics[width=\linewidth]{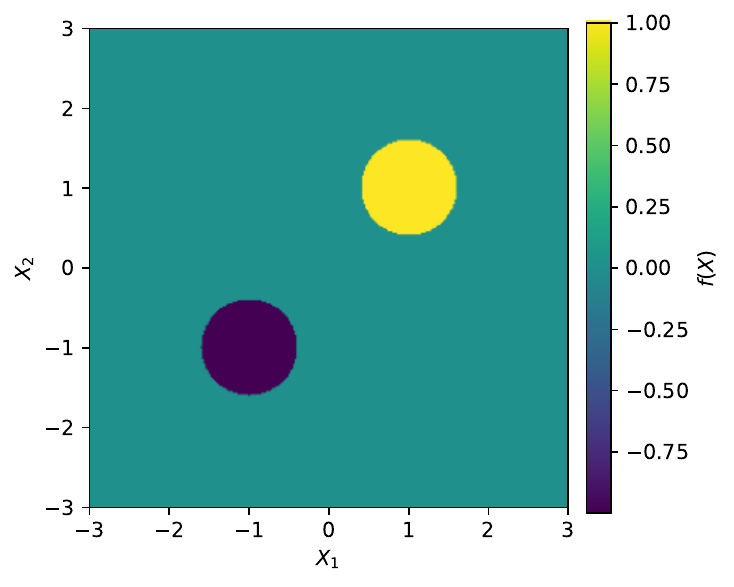}\caption{Two islands}\end{subfigure}
\begin{subfigure}{0.31\textwidth}\includegraphics[width=\linewidth]{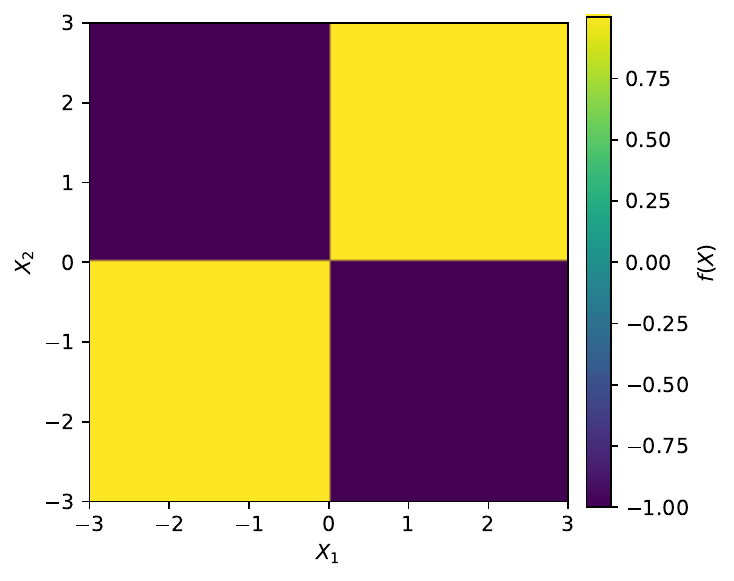}\caption{Checkerboard}\end{subfigure}
\begin{subfigure}{0.31\textwidth}\includegraphics[width=\linewidth]{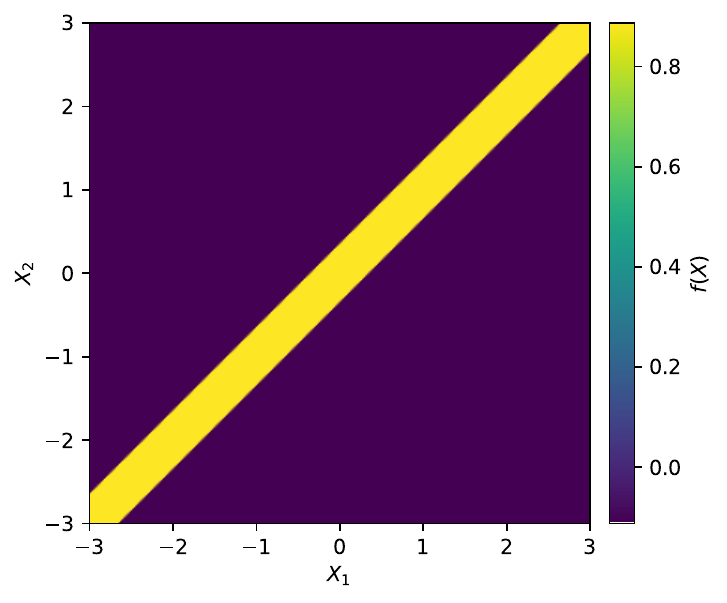}\caption{Thin band}\end{subfigure}
\begin{subfigure}{0.31\textwidth}\includegraphics[width=\linewidth]{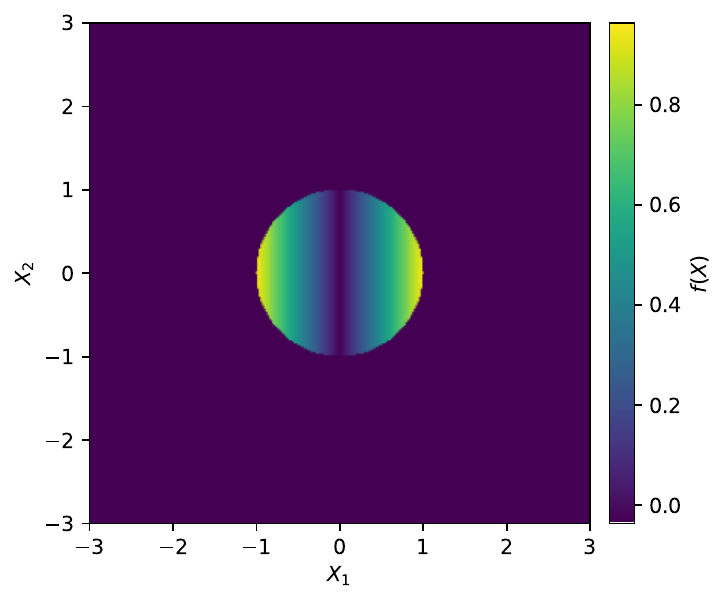}\caption{Local absolute signal}\end{subfigure}
\caption{Conditional-mean geometries in the canonical Monte Carlo design.}
\label{fig:dgp-geometry}
\end{figure}

\subsection{Calibration and the geometry-specific power envelope}

Under the canonical independence null, pooled rejection is 5.18\% for MBCMI $q\le .75$, 5.17\% for NCMD $K=5$, and 5.13\% for NCMD $K=10$. The corresponding 54-cell ranges are 3.8--6.7\%, 3.7--7.8\%, and 3.1--7.1\%. These results support comparison of power on the paired canonical datasets rather than adjustment for obvious size failure.

Low-signal behaviour carries more information than rankings after power has saturated. Table~\ref{tab:low_snr_auc_main_design} therefore reports normalised AUC only over $0\le\mathrm{SNR}\le0.5$ for the main $n=500,d=10$ design. Figure~\ref{fig:representative-power} shows three representative power curves: a conventional linear case, a local-island case favourable to the fixed-radius averaging in MBCMI, and a checkerboard case in which rapid sign changes are adverse to that geometry. The separate NCMD extension then supplies the closest estimand-matched comparison on the same canonical datasets.

\begin{table}[!htbp]\centering
\caption{Normalised low-signal power AUC over $0\leq\mathrm{SNR}\leq 0.5$ ($n=500$, $d=10$, $s=2$)}
\label{tab:low_snr_auc_main_design}
\begin{tabular}{lrrrrrrr}
\toprule
DGP & MBCMI & dCor & HSIC & HSIC-perm & KSG & kNN & MICe \\
\midrule
Checkerboard & 0.156 & 0.099 & 0.130 & 0.092 & 0.209 & \textbf{0.444} & 0.056 \\
Interaction & 0.182 & 0.206 & 0.262 & 0.186 & \textbf{0.468} & 0.303 & 0.122 \\
Linear & 0.913 & \textbf{0.952} & 0.922 & 0.941 & 0.742 & 0.876 & 0.079 \\
Local island & 0.617 & 0.116 & 0.152 & 0.105 & 0.083 & \textbf{0.629} & 0.197 \\
Local linear island & 0.557 & 0.445 & 0.281 & 0.322 & 0.056 & \textbf{0.626} & 0.065 \\
Local absolute signal & \textbf{0.512} & 0.100 & 0.127 & 0.095 & 0.068 & 0.438 & 0.146 \\
Ring signal & \textbf{0.179} & 0.061 & 0.069 & 0.065 & 0.031 & 0.095 & 0.085 \\
Thin band & 0.215 & 0.068 & 0.075 & 0.067 & 0.083 & \textbf{0.267} & 0.079 \\
Two islands & 0.717 & \textbf{0.744} & 0.494 & 0.607 & 0.214 & 0.650 & 0.061 \\
\bottomrule
\end{tabular}
\begin{minipage}{0.96\textwidth}\footnotesize
Notes: Normalised AUC integrates rejection frequency over the displayed SNR range.
Bold entries identify the numerically largest value within each DGP.
The two HSIC columns are legacy implementation sensitivities of one kernel
architecture. MICe receives only the radial scalar $\|X-\bar X\|$ and is
not used as a headline multivariate comparator. Every power estimate uses
1,000 matched replications.
\end{minipage}
\end{table}

\begin{figure}[!htbp]
\centering
\begin{subfigure}{0.32\textwidth}
\includegraphics[width=\linewidth]{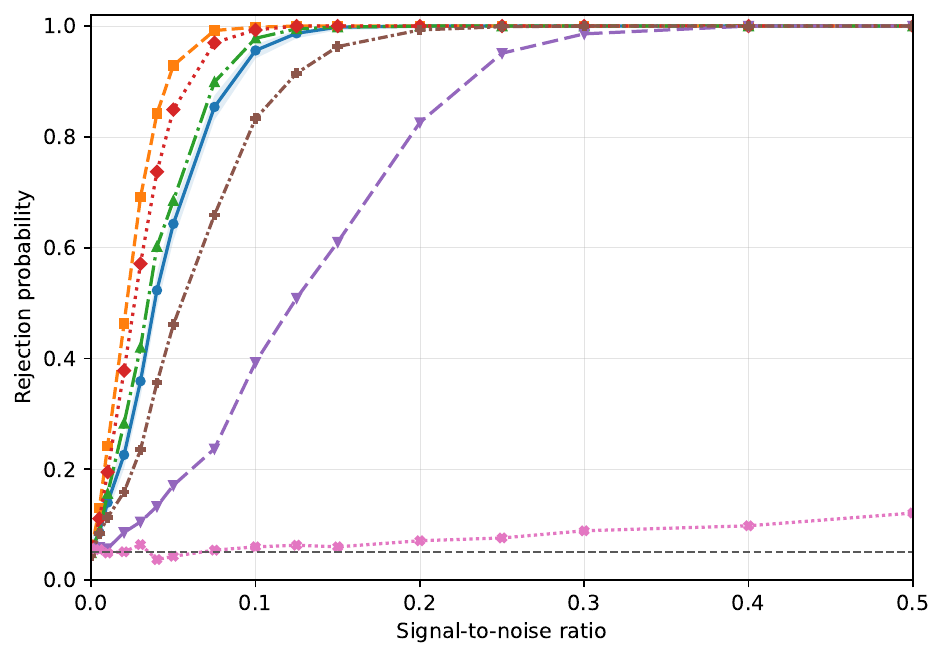}
\caption{Linear}
\end{subfigure}
\begin{subfigure}{0.32\textwidth}
\includegraphics[width=\linewidth]{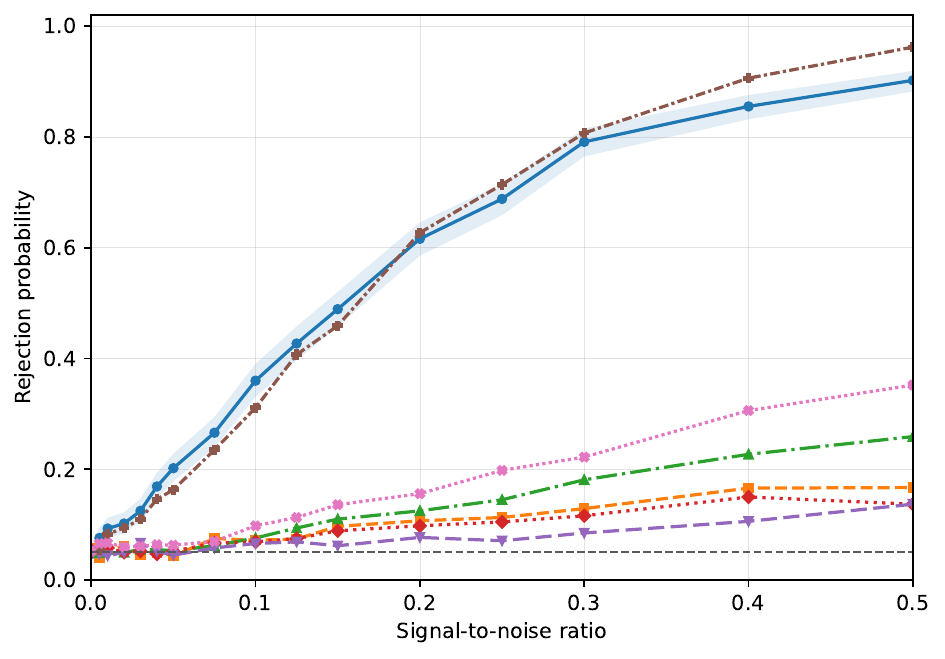}
\caption{Local island}
\end{subfigure}
\begin{subfigure}{0.32\textwidth}
\includegraphics[width=\linewidth]{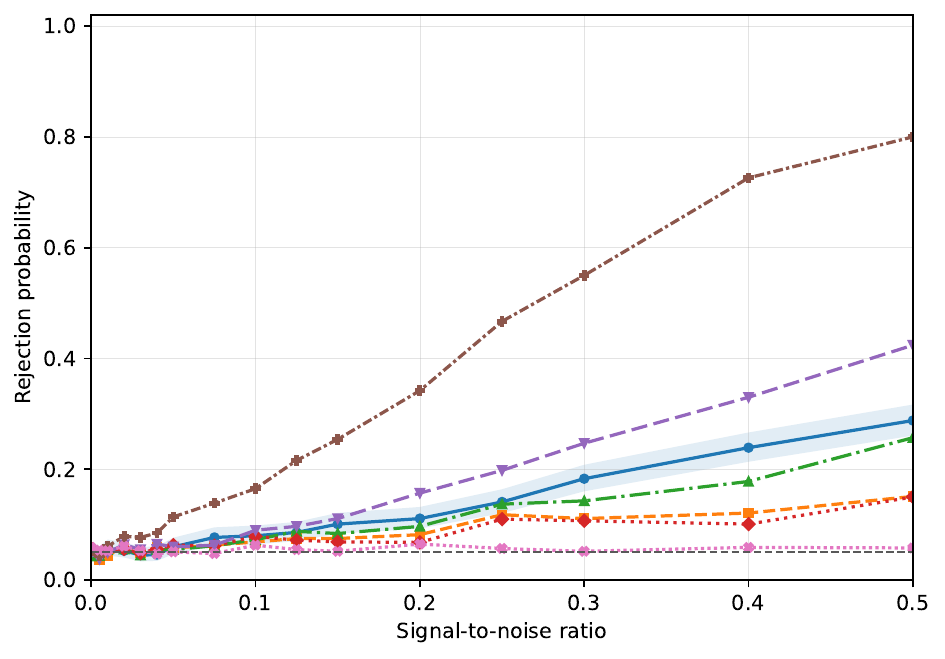}
\caption{Checkerboard}
\end{subfigure}
\vspace{0.2em}
\includegraphics[width=0.78\textwidth]{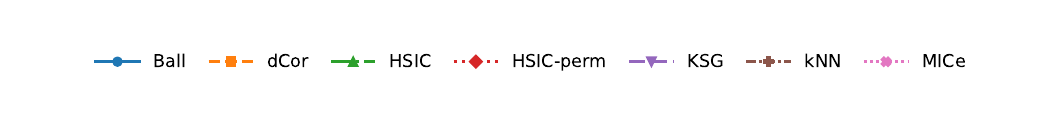}
\caption{Representative low-signal power curves in the canonical $n=500,d=10,s=2$ design. Panels compare MBCMI (Ball) with seven established benchmarks on matched simulations. Table~\ref{tab:ncmd-dgp-auc} and Figure~\ref{fig:ncmd-paired-difference} report the closest NCMD comparison on exact paired datasets.}
\label{fig:representative-power}
\end{figure}

\begin{table}[!htbp]
\centering
\caption{Geometry-specific canonical power: MBCMI and NCMD}
\label{tab:ncmd-dgp-auc}
\begin{threeparttable}
\small
\begin{tabular}{lrrr}
\toprule
DGP & MBCMI $q\le .75$ & NCMD $K=5$ & NCMD $K=10$ \\
\midrule
Linear & 0.955 & 0.941 & 0.954 \\
Interaction & 0.441 & 0.819 & 0.842 \\
Local island & 0.745 & 0.623 & 0.640 \\
Local linear island & 0.705 & 0.628 & 0.666 \\
Ring signal & 0.457 & 0.366 & 0.364 \\
Two islands & 0.803 & 0.754 & 0.804 \\
Checkerboard & 0.451 & 0.693 & 0.714 \\
Thin band & 0.492 & 0.458 & 0.474 \\
Local absolute signal & 0.684 & 0.534 & 0.562 \\
\bottomrule
\end{tabular}
\begin{tablenotes}[flushleft]\footnotesize
\item Notes: Entries are mean normalised full-range AUC across the common sample-size and dimension blocks. NCMD $K=5$ and $K=10$ are separate author-specified choices; no data-driven selection between them is made. All comparisons use the exact canonical datasets and seed map. The table is intended to show a geometry-specific power envelope rather than an overall ranking.
\end{tablenotes}
\end{threeparttable}
\end{table}

AUC is the trapezoidal integral of the rejection-probability curve over the reported SNR range, divided by that range, so it lies on the same zero-to-one scale as power. Table~\ref{tab:ncmd-dgp-auc} rejects a universal-ranking narrative. NCMD dominates the interaction and checkerboard designs. MBCMI leads for local island, local linear island, ring, thin band, and local absolute signal, and the two approaches are essentially tied in full-range linear and two-island AUC. At $n=500,d=10,$ SNR $=.25$, NCMD $K=10$ exceeds MBCMI by 72.8 percentage points in interaction and 53.8 points in checkerboard, whereas MBCMI exceeds NCMD $K=10$ by 16.4 points in local island, 10.6 in ring, and 23.7 in local absolute signal. MBCMI's advantage is therefore tied to spatial geometry rather than average rank.

Figure~\ref{fig:ncmd-paired-difference} reports paired rejection-rate differences for three representative geometries in the main $n=500,d=10,s=2$ design. Negative differences favour MBCMI and positive differences favour NCMD. The figure makes the trade-off visible without compressing performance into one average: MBCMI has a large low-SNR advantage in the linear case, retains an advantage for the local island, and loses sharply to NCMD as checkerboard signal strengthens.

\begin{figure}[!htbp]
\centering
\begin{subfigure}{0.32\textwidth}
\includegraphics[width=\linewidth]{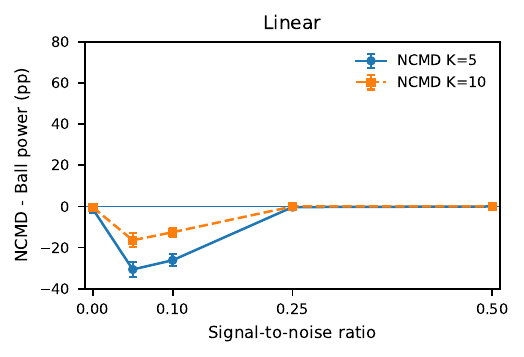}
\caption{Linear}
\end{subfigure}
\begin{subfigure}{0.32\textwidth}
\includegraphics[width=\linewidth]{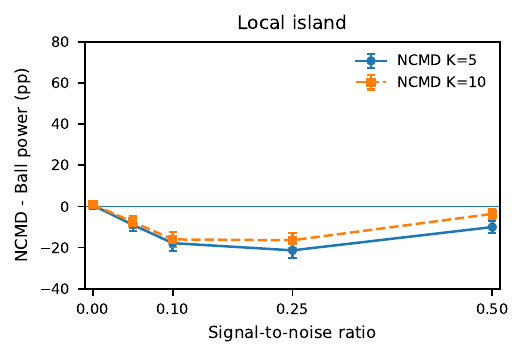}
\caption{Local island}
\end{subfigure}
\begin{subfigure}{0.32\textwidth}
\includegraphics[width=\linewidth]{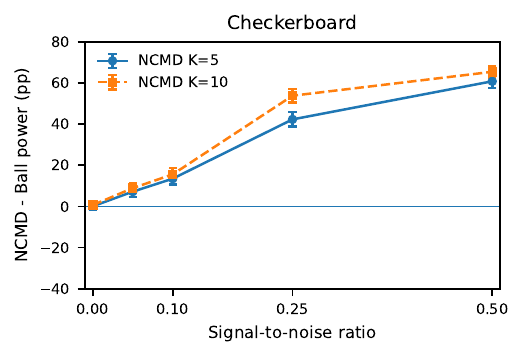}
\caption{Checkerboard}
\end{subfigure}
\caption{Paired power differences between NCMD and MBCMI(Ball). Points show NCMD minus MBCMI rejection rates in percentage points; bars are paired 95\% Monte Carlo intervals. Negative values favour MBCMI. Both author-specified NCMD choices are reported without data-driven selection of $K$.}
\label{fig:ncmd-paired-difference}
\end{figure}

The extended nine-method averages reinforce that point. NCMD $K=10$ has the largest mean full-range AUC (0.669), followed by $k$NN (0.654), NCMD $K=5$ (0.646), and MBCMI (0.637). MBCMI's low-SNR average (0.465) exceeds NCMD $K=5$ and remains close to $k$NN and NCMD $K=10$. Appendix~\ref{app:canonical-extension} reports the full summary. The NCMD results form a paired extension on the same canonical datasets and seed map used for the seven established comparators.

\subsection{Radius aggregation and serial calibration}

Radius-domain calibration matters because the shorter $q\le .50$ grid bound in a substantial subset of rolling samples. Extending the dense grid to .75 passes the application-aligned null study. Studentising radius-specific statistics does not dominate the raw maximum.

\begin{table}[!htbp]
\centering
\caption{Application-aligned serial-null calibration for the final $q\le .75$ search}
\label{tab:serial-calibration}
\begin{threeparttable}
\small
\begin{tabular}{lccc}
\toprule
Aggregation & Pooled rejection & Exact 95\% interval & Cell range \\
\midrule
Raw maximum (primary) & 0.0425 & [0.0394, 0.0457] & 0.026--0.058 \\
Studentised maximum & 0.0394 & [0.0364, 0.0425] & 0.022--0.056 \\
\bottomrule
\end{tabular}
\begin{tablenotes}[flushleft]\footnotesize
\item Notes: Sixteen application-aligned serial null cells use $n=549$, $d\in\{5,6\}$, $R=1000$, and $B=999$. The raw maximum passes the prespecified calibration gate. Studentisation is mildly conservative and triggers the prespecified systematic-under-rejection diagnostic. Both variants repeat the complete 71-radius search inside each bootstrap draw.
\end{tablenotes}
\end{threeparttable}
\end{table}

Studentisation slightly raises mean positive-SNR power in the canonical ablation from 31.55\% to 32.06\%, with 23 cell wins, 22 losses, and three ties. The application-aligned serial null is more consequential: pooled size falls below the prespecified 4\% lower bound, three cells are below 3\%, and selected scales shift upward. Mean selected $q$ rises from .192 for the raw statistic to .325 after studentisation, and the studentised statistic selects the .75 boundary in 13.6\% of serial-null replications. We therefore retain the raw maximum and interpret studentisation as an adverse normalisation check rather than a default improvement.

\subsection{Predictor-law and metric robustness}

The canonical predictor distribution is not the only geometry considered. A targeted experiment fixes $n=500,d=10,s=2$ and evaluates six predictor laws: independent Gaussian, AR(1)-correlated Gaussian with $\rho=.7$, correlated standardised $t_5$, centred standardised lognormal, a two-component Gaussian mixture, and 2\% scale contamination. Each law is paired with linear, local-island, ring, local-absolute, thin-band, and checkerboard signals over SNR $0,.05,.10,.25,.50$.

\begin{table}[!htbp]
\centering
\caption{Predictor-law and geometry robustness of the raw $q\le .75$ MBCMI test}
\label{tab:predictor-robustness}
\begin{threeparttable}
\small
\begin{tabular}{lrrrr}
\toprule
Predictor law & z-score null size & z-score power & MAD power & Whitening power \\
\midrule
Independent Gaussian & 0.0528 & 0.3915 & 0.3923 & 0.3911 \\
Correlated Gaussian & 0.0490 & 0.4660 & 0.4656 & 0.3801 \\
Correlated $t_5$ & 0.0517 & 0.5256 & 0.5265 & 0.4496 \\
Lognormal & 0.0507 & 0.4781 & 0.4803 & 0.4770 \\
Gaussian mixture & 0.0533 & 0.3553 & 0.3100 & 0.3643 \\
Contaminated Gaussian & 0.0537 & 0.3929 & 0.3951 & 0.3813 \\
\bottomrule
\end{tabular}
\begin{tablenotes}[flushleft]\footnotesize
\item Notes: $n=500$, $d=10$, two active coordinates, $R=1000$. Power averages the positive SNR values $0.05,0.10,0.25,0.50$ over six DGPs. z-score + Euclidean is primary. MAD uses median/MAD scaling; whitening uses the sample covariance. There were zero mandatory geometry failures. Power and the best geometry remain predictor-law specific.
\end{tablenotes}
\end{threeparttable}
\end{table}

Raw MBCMI remains approximately calibrated across the predictor laws, and none of the mandatory geometries fails computationally. z-score and robust-MAD scaling are similar for several laws. Whitening is not a universal correction: it lowers average positive-SNR power by about 8.6 points under correlated Gaussian predictors and 7.7 points under correlated $t_5$, yet modestly improves the mixture relative to z-scoring. The correct robustness claim is therefore limited: MBCMI's principal local/radial advantages are not an artefact of independent Gaussian predictors, but metric choice remains distribution-specific.

\subsection{Robustness Summary}

The experiments support three claims. First, the raw $q\le .75$ maximum is well calibrated in the studied iid and application-aligned serial nulls. Second, multiscale fixed-radius adaptation is valuable for broad local and radial alternatives, but sign-changing and fixed-count-friendly geometries can favour NCMD or $k$NN. Third, predictor-distribution changes do not erase the method's basic strengths, although no transformation dominates. They do not prove model-free serial validity or universal power dominance.

\section{Finance Applications}
\label{sec:finance}

\subsection{Data and design}

The application uses monthly observations from January 1980 through September 2025. The factor system contains market excess return, SMB, HML, RMW, CMA, and Momentum. Each factor becomes the outcome once, with the other five factors as contemporaneous predictors. Four macro-finance equations use market excess return, industrial-production growth, unemployment-rate change, and Baa-spread change as outcomes with the contemporaneous state vectors in Table~\ref{tab:finance-design}. Complete-case alignment yields $n=549$ for every full-sample equation.

\begin{table}[!htbp]
\centering
\caption{Finance outcomes, transformations, and contemporaneous predictor sets}
\label{tab:finance-design}
\begin{threeparttable}
\small
\begin{tabularx}{\textwidth}{@{}p{0.22\textwidth}p{0.27\textwidth}X@{}}
\toprule
Outcome & Transformation and units & Contemporaneous predictors \\
\midrule
\multicolumn{3}{l}{\textit{Panel A: U.S. equity factors}}\\
Market, SMB, HML, RMW, CMA, or Momentum & Monthly percentage-point factor returns & The other five factor returns. \\
\addlinespace
\multicolumn{3}{l}{\textit{Panel B: Macro-finance equations}}\\
Market excess return & Monthly percentage points & \texttt{FEDFUNDS}, inflation, \texttt{UNRATE}, \texttt{T10Y2YM}, \texttt{BAA10YM}, industrial-production growth. \\
Industrial-production growth & $1200\,\Delta\log(\texttt{INDPRO})$, annualised percent & Market excess return, \texttt{FEDFUNDS}, inflation, \texttt{UNRATE}, \texttt{T10Y2YM}, \texttt{BAA10YM}. \\
Unemployment-rate change & $\Delta\texttt{UNRATE}$, percentage points & Market excess return, \texttt{FEDFUNDS}, inflation, \texttt{T10Y2YM}, \texttt{BAA10YM}, industrial-production growth. \\
Baa-spread change & $\Delta\texttt{BAA10YM}$, percentage points & Market excess return, \texttt{FEDFUNDS}, inflation, \texttt{UNRATE}, \texttt{T10Y2YM}, industrial-production growth. \\
\bottomrule
\end{tabularx}
\begin{tablenotes}[flushleft]\footnotesize
\item Notes: Factor returns come from the Kenneth R. French Data Library. Macro series come from FRED. Inflation equals $1200\,\Delta\log(\texttt{CPIAUCSL})$. Complete-case alignment follows all transformations, no values are imputed, and the common sample runs from January 1980 through September 2025. Predictors are standardised within each full sample or rolling window. These equations are contemporaneous and do not test forecasting. Appendix~\ref{app:finance-robustness} gives exact source identifiers and data-construction details.
\end{tablenotes}
\end{threeparttable}
\end{table}

Every primary MBCMI calculation uses z-scored predictors, the raw $q=.05,.06,\ldots,.75$ maximum, support weights $N_i$, 20\% radius coverage, and $B=999$ prewhitened recursive Rademacher draws. Holm adjustment is applied separately to the six factor equations and four macro equations. Rolling analysis uses 240-month windows with a one-month step and recomputes the complete procedure inside each window.

\subsection{Full-sample dependence and the linear benchmark}

\begin{table}[!htbp]
\centering
\caption{Full-sample Ball results and cross-fitted linear benchmark}
\label{tab:full-sample-finance}
\begin{threeparttable}
\small
\begin{tabularx}{\textwidth}{@{}lYrrrrr@{}}
\toprule
Family & Outcome & MBCMI $p$ & Holm $p$ & Selected $q$ & CF $R^2$ & Residual MBCMI $p$ \\
\midrule
Factors & MKT & 0.001 & 0.006 & 0.42 & 0.102 & 0.182 \\
 & SMB & 0.011 & 0.033 & 0.24 & 0.060 & 0.332 \\
 & HML & 0.001 & 0.006 & 0.42 & 0.404 & 0.914 \\
 & RMW & 0.062 & 0.062 & 0.27 & -0.056 & 0.072 \\
 & CMA & 0.001 & 0.006 & 0.48 & 0.493 & 0.481 \\
 & Momentum & 0.025 & 0.050 & 0.48 & 0.035 & 0.080 \\
\midrule
Macro & Market excess return & 0.825 & 0.825 & 0.10 & -0.009 & 0.123 \\
 & Industrial-production growth & 0.094 & 0.282 & 0.68 & 0.009 & 0.923 \\
 & Unemployment-rate change & 0.007 & 0.028 & 0.75 & 0.088 & 1.000 \\
 & Baa-spread change & 0.203 & 0.406 & 0.42 & 0.027 & 0.198 \\
\bottomrule
\end{tabularx}
\begin{tablenotes}[flushleft]\footnotesize
\item Notes: MBCMI uses the raw maximum over 71 quantile radii and the prewhitened recursive Rademacher bootstrap with $B=999$. Holm adjustment is within the six factor equations and four macro equations. CF $R^2$ is five-fold contiguous time-ordered cross-fitted OLS $R^2$. Residual MBCMI applies the complete bootstrap procedure to cross-fitted residuals and refits the nuisance OLS model inside every bootstrap draw. None of the ten full-sample residual tests rejects at 5\%.
\end{tablenotes}
\end{threeparttable}
\end{table}

Five of six factor equations reject after Holm adjustment under the inclusive $p\le .05$ rule: MKT, SMB, HML, CMA, and Momentum. RMW does not reject. Its selected observed scale remains $q=.27$, but its primary $p$-value rises from .050 under the shorter $q\le .50$ search to .062 because the null maximum is recomputed over the wider search domain. The result illustrates why radius search belongs inside every bootstrap draw.

Unemployment-rate change is the only macro equation that rejects after Holm adjustment under the primary grid ($p=.007$, Holm $p=.028$). This family-wise conclusion is not invariant to the tail grid. With the additional radii $.80,.85,.90$, unemployment remains individually significant ($p=.021$), but its macro-family Holm value becomes .084. We therefore describe the individual unemployment result as tail-grid robust and the macro-family conclusion as tail-grid sensitive.

The linear benchmark changes the substantive interpretation. Five contiguous time-ordered cross-fitting folds produce positive out-of-sample $R^2$ for most equations but negative values for RMW and the macro market-return equation. More importantly, cross-fitted residual MBCMI rejects none of the ten full-sample equations. The nuisance OLS model is refitted inside every bootstrap draw before residual Ball is recomputed. Full-sample MBCMI rejections therefore establish contemporaneous conditional-mean dependence in the unresidualized outcomes, not distinctive nonlinear or beyond-linear structure.

\subsection{Pointwise rolling evidence}

\begin{table}[!htbp]
\centering
\caption{Pointwise rolling results under the primary $q\le .75$ search}
\label{tab:rolling-q075}
\begin{threeparttable}
\footnotesize
\begin{tabular}{lrrr}
\toprule
Outcome & Windows with $p<.05$ & Changes vs. $q\le .50$ & Windows selecting $q=.75$ \\
\midrule
MKT & 310 & 0 & 0 \\
SMB & 70 & 0 & 0 \\
HML & 310 & 0 & 0 \\
RMW & 120 & 31 & 0 \\
CMA & 310 & 0 & 0 \\
Momentum & 56 & 6 & 35 \\
Market excess return & 4 & 0 & 0 \\
Industrial-production growth & 18 & 0 & 19 \\
Unemployment-rate change & 105 & 0 & 65 \\
Baa-spread change & 1 & 0 & 3 \\
\midrule
Total & 1,304 & 37 & 122 \\
\bottomrule
\end{tabular}
\begin{tablenotes}[flushleft]\footnotesize
\item Notes: Each outcome has 310 overlapping 240-month windows. Adjacent windows share 239 observations. Entries are pointwise $p$-value summaries, not independent discoveries or simultaneous episode inference. All 3,100 windows completed with zero fatal tasks or failed bootstrap draws. The comparison column counts 5\% classification changes relative to the historical $q\le .50$ Stage-E series.
\end{tablenotes}
\end{threeparttable}
\end{table}

All 3,100 rolling windows complete without fatal tasks or failed bootstrap draws. Relative to the shorter $q\le .50$ series, 37 pointwise 5\% classifications change. Momentum gains six rejecting windows, from 50 to 56, while RMW loses 31, from 151 to 120; the other eight outcome counts are unchanged. The primary grid selects its .75 upper boundary in 122 windows, concentrated in Momentum, industrial-production growth, unemployment-rate change, and a few Baa-spread windows. Scale localisation in those windows remains boundary-limited: the tail extension is a sensitivity check, not proof that no still-larger maximiser exists.

These are descriptive trajectories. Adjacent 240-month windows share 239 observations, so runs cannot be interpreted as independent discoveries. No simultaneous or episode-level family-wise inference is claimed. Window-specific standardisation also means that temporal changes in the plotted statistic combine changes in conditional means, predictor support, correlation, density, and scale.

\subsection{Selected diagnostic windows and localisation}

\begin{table}[!htbp]
\centering
\caption{Selected diagnostic windows and the cross-fitted linear benchmark}
\label{tab:selected-diagnostics}
\begin{threeparttable}
\small
\begin{tabular}{llrrrr}
\toprule
Outcome & Window end & MBCMI $p$ & MBCMI $q$ & Residual $p$ & Residual $q$ \\
\midrule
HML & Jul. 2013 & 0.022 & 0.42 & 0.993 & 0.39 \\
Momentum & Jul. 2022 & 0.019 & 0.69 & 0.030 & 0.43 \\
Industrial-production growth & Sep. 2008 & 0.062 & 0.29 & 0.894 & 0.21 \\
Market excess return & Feb. 2000 & 0.197 & 0.43 & 0.709 & 0.15 \\
Unemployment-rate change & Dec. 1999 & 0.009 & 0.48 & 0.290 & 0.53 \\
\bottomrule
\end{tabular}
\begin{tablenotes}[flushleft]\footnotesize
\item Notes: These windows were selected from earlier exploratory rolling evidence and are descriptive rather than selection-adjusted confirmatory discoveries. Residual MBCMI uses five contiguous cross-fitting folds and refits OLS inside every bootstrap draw. Momentum in July 2022 is the only selected window that retains a 5\% rejection after removing the cross-fitted contemporaneous linear component. Under the capped-support W4 weighting sensitivity, its residual $p$-value remains 0.030.
\end{tablenotes}
\end{threeparttable}
\end{table}

The five windows in Table~\ref{tab:selected-diagnostics} were selected from exploratory rolling calculations and do not receive selection-adjusted inference. HML in July 2013 and unemployment-rate change in December 1999 reject in the unresidualized MBCMI calculation but not after the cross-fitted linear benchmark. Momentum in July 2022 is the sole selected window that retains a residual rejection: its unresidualized MBCMI test selects $q=.69$ with $p=.019$, while residual MBCMI selects $q=.43$ with $p=.030$.

Localisation is interpreted at the level of broad high-contribution unions rather than exact top-ten centre rankings. That choice follows the radius and studentisation audits: exact leading-centre identities can be sensitive even when broad state regions remain stable. The selected radius is likewise a descriptive maximiser over the finite search domain, not an estimate of structural signal width.

Support weighting does not drive the surviving Momentum diagnostic. Capping the support weights (W4) changes no full-sample, diagnostic, or Holm conclusion, and the July-2022 residual Momentum $p$-value remains .030 under both W1 and W4. Finance $k$NN $p$-values are not used for formal inference because the matched time-series implementation failed its prespecified size criterion.

\section{Discussion}
\label{sec:discussion}

The evidence supports a bounded methodological claim. MBCMI is useful when conditional-mean departures may occupy spatially coherent regions of a multivariate predictor space at an unknown scale and when localisation matters alongside rejection. MBCMI's strength does not come from uniformly higher average power. It comes from a geometry: common spatial extent, variable local mass, multiscale search, and an observed-centre decomposition.

\subsection{Interpretation}

The extended comparator study clarifies the geometry over which MBCMI is strongest. NCMD is much stronger for interaction and checkerboard alternatives, where rapid sign changes align poorly with fixed-radius averaging. MBCMI is stronger for local islands, annular signals, and local absolute departures, and it is competitive in several weak linear settings. $k$NN remains a strong overall benchmark. The appropriate conclusion is a power envelope rather than a winner.

The predictor-law experiment reaches the same type of conclusion about metric choice. Ordinary z-scoring and robust-MAD scaling are close in many designs; whitening can be valuable for clustered support but harmful when correlation contains signal-relevant geometry. Standardisation is therefore part of the representation, not a neutral preprocessing step.

Finance results further restrict interpretation. Full-sample MBCMI rejections largely disappear after removing a cross-fitted contemporaneous linear component. Those results should be read as evidence that current factor or macro states are associated with the current conditional mean, not as proof of nonlinear mechanisms, forecasting ability, or causality. The selected July-2022 Momentum case is different: the residual test remains significant and survives the capped-weight sensitivity, so conditional-mean structure remains after the cross-fitted linear benchmark in that selected window. Because the window was selected from exploratory rolling evidence, even that result remains descriptive rather than selection-adjusted confirmation.

\subsection{Limitations}

Formal validity now covers two deliberately bounded classes. Primitive iid theory proves the Ball-kernel reduction and raw Rademacher maximum on a fixed grid under compact regular support, uniform local mass, and moment conditions. A separate serial theorem covers the implemented recursive-sign algorithm for stable finite-order autoregressions with conditionally sign-symmetric innovations, including BIC order selection and least-squares innovation replacement. Neither result is model-free. Fixed-grid consistency is only grid-visible; the iid support assumptions exclude the unbounded predictor laws used in several robustness experiments; and the serial theorem does not cover general asymmetric martingale-difference innovations, nonlinear dynamics, or long memory. Simulations probe some of those departures without converting them into theorem-covered cases.

Computational robustness is broad but not universal. Studentisation is mildly conservative in the serial null and materially changes selected scales. Predictor geometry changes power under correlation and clustering. MDD/MDC remains an important estimand-matched literature benchmark. The tested public author-code MDD implementation was operational, but it failed the prespecified finite-sample calibration screen and therefore did not enter the power tournament; that exclusion is an implementation-comparability decision, not a claim that MDD is theoretically invalid or inferior. NCMD $K=5$ and $K=10$ are both reported because selecting $K$ after viewing power would compromise the comparison.

Rolling windows overlap heavily and receive no simultaneous multiplicity correction. Their episode structure is descriptive. Moreover, 122 rolling windows select the primary upper boundary, so their selected scales should not be interpreted as fully resolved spatial widths. The $q=.90$ tail sensitivity also shows that a family-wise macro conclusion can depend on the search domain even when the individual unemployment rejection survives. These qualifications narrow the claims but make the final procedure and evidence internally coherent.

\section{Conclusion}
\label{sec:conclusion}

We develop a Multiscale Ball Conditional Mean Independence (MBCMI) test that searches fixed-radius neighbourhoods while retaining a centre-by-radius localisation. The final procedure uses a raw support-weighted maximum over 71 empirical distance-quantile radii and repeats the complete search inside every resampling draw. Quantiles cover from 0.05 to 0.75 of the longest distance across the dataset, providing an extensive search range on any dataset. Fixed-grid theory supplies the population target and consistency result; targeted iid and serial null studies calibrate the implemented procedure within the stated empirical scope.

The computational evidence defines where the method helps. MBCMI is strong for several spatially coherent local and radial alternatives, whereas nearest-neighbour conditional-mean methods dominate important sign-changing designs. Predictor-distribution robustness is broad but geometry-specific, and studentisation does not improve the serial procedure sufficiently to replace the raw statistic. These results support multiscale adaptation and localisation, not universal power dominance.

The finance applications illustrate the distinction between detection and interpretation. Five factor equations and unemployment-rate change reject in the primary full-sample analysis, but none of the ten full-sample cross-fitted residual tests rejects. Full-sample evidence is therefore contemporaneous conditional-mean dependence rather than distinctive beyond-linear structure. A selected July-2022 Momentum window is the one case that remains significant after the cross-fitted linear benchmark and capped-weight sensitivity. Rolling trajectories remain pointwise descriptions of overlapping samples.

Theory now matches the two calibration mechanisms at different levels of generality. For iid data, the implemented raw finite-grid maximum has primitive ball-kernel validity and an explicit Pitman local-power limit. For serial data, the implemented recursive-sign algorithm is valid for a stable finite-order, conditionally sign-symmetric autoregressive class after BIC and least-squares replacement. Broader asymmetric martingale-difference dynamics remain outside that theorem and are evaluated by targeted serial calibration experiments. The resulting contribution of MBCMI is therefore deliberately finite-grid and assumption-specific: a conditional-mean test with a formal iid core, a restricted feasible serial extension, a geometry-specific power envelope, and an empirical application whose claims remain bounded by those scopes.

\section*{Data and Code Availability}

All factor and macroeconomic source series are public and are identified in Table~\ref{tab:finance-design}; Appendix~\ref{app:finance-robustness} records exact source identifiers and transformations. The simulation comparisons use deterministic seed maps and matched datasets where paired inference is reported. A separate research archive documents the configurations, summary outputs, seed records, and audit manifests used to construct the paper's exhibits. The arXiv source package contains the manuscript source and displayed exhibit files rather than the complete executable research repository. No proprietary data enter the analysis.

\bibliographystyle{apalike}
\bibliography{references}

\clearpage
\appendix
\begin{center}
{\Large\bfseries Technical Appendix and Supplementary Evidence}\par
\vspace{0.4em}
{\small Formal proofs, exact DGP definitions, extended simulation evidence, finance calibration, and reproducibility records}\par
\end{center}
\clearpage
\section*{Appendix Guide}
\label{app:guide}

The technical appendix separates proof detail from extended computational evidence. Appendices~\ref{app:population}--\ref{app:proof-audit} develop the population target, fixed-grid asymptotics, primitive iid raw-maximum calibration, the recursive-sign serial result, and a formal evidence audit. Appendix~\ref{app:implementation} records the implemented procedure. Appendix~\ref{app:canonical-extension} reports the canonical Ball--NCMD comparison. Appendix~\ref{app:predictor-robustness} records predictor-law and metric robustness. Appendix~\ref{app:finance-robustness} reports data construction, linear/residual benchmarks, weighting, and rolling evidence. Appendix~\ref{app:repro} records computational provenance.

\section{Notation and Population Target}
\label{app:population}

Let $(Y_i,X_i)_{i=1}^n$ be observations with $Y_i\in\R$ and
$X_i\in\R^d$. Write
\[
m(x)=\E[Y\mid X=x],
\qquad
\mu=\E[Y].
\]
We test mean independence of $Y$ from $X$,
\begin{equation}
H_0^{CM}:\quad m(X)=\mu\quad\text{almost surely}.
\label{eq:app-null}
\end{equation}
Full independence implies \eqref{eq:app-null}, but the converse need not
hold. Conditional variances, tails, and higher moments may depend on $X$ while
$m(X)$ remains constant.

Let $a=\E[X]$ and let $D$ be the diagonal matrix of positive marginal
standard deviations. Population-standardised predictors are
$Z=D^{-1}(X-a)$. For an independent copy $(Y',Z')$, define
\begin{align}
p_{e}(z)
&=\Pp\{\|Z'-z\|\le e\},\\
r_{e}(z)
&=\E[Y'\1\{\|Z'-z\|\le e\}],\\
m_{e}(z)
&=r_e(z)/p_e(z)
\end{align}
when $p_e(z)>0$. Our population criterion is
\begin{equation}
Q(e)
=
\E\left[p_e(Z)\{m_e(Z)-\mu\}^2\right].
\label{eq:app-Q}
\end{equation}
Probability weighting mirrors the sample neighbourhood-size weight.

\begin{proposition}[Null implication]
\label{prop:app-null}
If $\E|Y|<\infty$, then \eqref{eq:app-null} implies $Q(e)=0$ for every
radius with well-defined local means.
\end{proposition}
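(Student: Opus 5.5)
The plan is to show that $r_e(z)=\mu\,p_e(z)$ for every $z$, so that $m_e(z)=\mu$ wherever $p_e(z)>0$ and every summand of $Q(e)$ vanishes. I will work with the independent copy $(Y',Z')$ from the definition of $r_e$ and condition on $Z'$. Because $Z'=D^{-1}(X'-a)$ is an invertible affine function of $X'$, we have $\sigma(Z')=\sigma(X')$. Hence $\E[Y'\mid Z']=m(X')$, and under \eqref{eq:app-null} this equals $\mu$ almost surely.

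First, integrability. For a fixed centre $z$, the variable $Y'\1\{\|Z'-z\|\le e\}$ is bounded in absolute value by $|Y'|$. The assumption $\E|Y|<\infty$ therefore makes $r_e(z)$ well defined and finite, and it licenses the tower property.

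Second, the tower property gives
\[
r_e(z)=\E\bigl[\E[Y'\mid Z']\,\1\{\|Z'-z\|\le e\}\bigr]=\mu\,\Pp\{\|Z'-z\|\le e\}=\mu\,p_e(z).
\]
The indicator is $\sigma(Z')$-measurable because $z$ is a fixed point and the map $w\mapsto\|w-z\|$ is continuous.

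Third, for every $z$ with $p_e(z)>0$ this yields $m_e(z)=\mu$, so $p_e(z)\{m_e(z)-\mu\}^2=0$. The phrase ``radius with well-defined local means'' will be read as saying that $p_e(Z)>0$ almost surely. Then the integrand in \eqref{eq:app-Q} is zero almost surely and $Q(e)=0$.

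If instead one reads the criterion as integrating only over centres with positive local mass, the conclusion is unchanged. Under the convention $p_e\cdot 0=0$, points with $p_e(z)=0$ contribute nothing either way.

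The only step that needs care is evaluating $r_e$ at the random centre $Z$. Since $(Y',Z')$ is independent of $Z$, the quantity $r_e(Z)$ is the function $r_e$ evaluated at $Z$, and the pointwise identity already established applies. No joint measurability issue arises beyond the fact that $z\mapsto p_e(z)$ and $z\mapsto r_e(z)$ are Borel. This follows from Fubini applied to the jointly measurable indicator $(z,w)\mapsto\1\{\|w-z\|\le e\}$. I do not expect a genuine obstacle.
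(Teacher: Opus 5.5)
Your proposal is correct and is essentially the paper's own argument: iterated expectations give $r_e(z)=\E[m(Z')\1\{\|Z'-z\|\le e\}]=\mu\,p_e(z)$, so $m_e(z)=\mu$ wherever $p_e(z)>0$, and the nonnegative integrand of $Q(e)$ vanishes. The extra points you check are details the paper leaves implicit: that $\sigma(Z')=\sigma(X')$ because $D$ is invertible, integrability via $\E|Y|<\infty$, and Borel measurability of $p_e$ and $r_e$. Your reading of ``well-defined local means'' is also harmless, since for $e>0$ one in fact has $p_e(Z)>0$ almost surely because $Z$ lies in its support almost surely.
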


\begin{proof}
Under \eqref{eq:app-null}, iterated expectations give
\[
r_e(z)
=
\E[m(Z')\1\{\|Z'-z\|\le e\}]
=
\mu p_e(z).
\]
Thus $m_e(z)=\mu$ whenever $p_e(z)>0$, and the nonnegative integrand in
\eqref{eq:app-Q} vanishes.
\end{proof}

\begin{assumption}[Predictor support and local mass]
\label{ass:support}
Support $\mathcal Z$ of $Z$ is compact and satisfies a uniform interior
cone condition. Predictor $Z$ has a density bounded above and bounded away
from zero on $\mathcal Z$. Regression function $m$ is bounded and continuous
except on a set of predictor probability zero.
\end{assumption}

\begin{proposition}[Local identification]
\label{prop:app-identification}
Suppose Assumption~\ref{ass:support} holds and
$\Pp\{m(Z)\ne\mu\}>0$. Then there is $e_0>0$ such that
$Q(e)>0$ for every $e\in(0,e_0)$.
\end{proposition}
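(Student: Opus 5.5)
The plan is to show that $m_e\to m$ pointwise at continuity points as $e\downarrow0$, and then pass this convergence through the weighted criterion $Q(e)$. The weight $p_e(Z)$ itself tends to zero, so it has to be normalised by the ball volume.

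\emph{Step 1: pointwise limit of $m_e$.} Fix $z$ in the support $\mathcal Z$ at which $m$ is continuous. Let $f$ denote the density of $Z$. Write
\[
r_e(z)=\int_{B(z,e)\cap\mathcal Z} m(z')f(z')\,dz',
\qquad
p_e(z)=\int_{B(z,e)\cap\mathcal Z} f(z')\,dz',
\]
so that
\[
|m_e(z)-m(z)|\le \sup_{z'\in B(z,e)\cap\mathcal Z}|m(z')-m(z)|,
\]
because $m_e(z)$ is an $f$-weighted average of $m$ over the ball. Continuity of $m$ at $z$ then gives $m_e(z)\to m(z)$. Since $m$ is bounded, $|m_e|\le\|m\|_\infty$ uniformly in $z$ and $e$. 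The discontinuity set has probability zero, so $m_e(Z)\to m(Z)$ almost surely.

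\emph{Step 2: lower-bound the weight by the volume.} Let $v_d$ be the volume of the unit ball. The uniform interior cone condition and the lower density bound $f\ge f_{\min}>0$ give a constant $c>0$ and a radius $e_1>0$ with
\[
p_e(z)\ge c\,v_d e^d \quad\text{for all } z\in\mathcal Z,\ e\le e_1.
\]
Indeed, the cone at $z$ intersected with $B(z,e)$ contains a fixed fraction of the ball's volume. This is where the regularity assumption is used, since it prevents boundary centres from having vanishing relative mass. It follows that
\[
Q(e)\ge c\,v_d e^d\,\E\big[\{m_e(Z)-\mu\}^2\big].
\]

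\emph{Step 3: conclude.} The variables $\{m_e(Z)-\mu\}^2$ are bounded by $(2\|m\|_\infty)^2$, so dominated convergence gives
\[
\E\big[\{m_e(Z)-\mu\}^2\big]\to \E\big[\{m(Z)-\mu\}^2\big]=:\kappa .
\]
Here $\kappa>0$ because $\Pp\{m(Z)\ne\mu\}>0$. Choose $e_0\le e_1$ so that the expectation exceeds $\kappa/2$ for all $e<e_0$. Then
\[
Q(e)\ge c\,v_d e^d\,\kappa/2>0 \quad\text{for every } e\in(0,e_0).
\]
Note also that $\mu=\E[m(Z)]$ is finite, since $\E|Y|<\infty$.

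\emph{Main obstacle.} The delicate point is Step 2: making the volume lower bound on $p_e$ uniform up to the boundary of $\mathcal Z$. Pointwise convergence of $m_e$ alone is not enough, because $p_e\to0$ would otherwise kill the weight. I would handle this directly from the definition of the cone condition, using a fixed aperture and height, and taking $e$ below the height so that the truncated cone lies in the ball. The rest of the argument is routine.
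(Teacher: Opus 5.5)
Your proof is correct, but it takes a different route from the paper's. The paper fixes $\delta>0$ so that $A_\delta=\{z:|m(z)-\mu|>2\delta\}$ has positive probability, and it gets $m_e(z)\to m(z)$ at continuity points. It also appeals to Lebesgue points of the density, which your weighted-average bound in Step~1 shows is superfluous. It then uses Egorov's theorem to find a positive-probability subset of $A_\delta$ on which $|m_e-\mu|>\delta$ for all small $e$. After that it needs only $p_e>0$ on that subset, so the cone condition and density bounds play essentially no role. You instead apply dominated convergence to the unweighted contrast $\E[\{m_e(Z)-\mu\}^2]$ and pull the weight out with the uniform bound $p_e\ge c\,v_de^d$. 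This buys two things. First, it sidesteps Egorov, whose use over the continuum $e\downarrow0$ (rather than a sequence) needs a small extra step the paper leaves implicit, such as reducing to rational radii via right-continuity of $e\mapsto m_e(z)$. Second, it gives the quantitative bound $Q(e)\ge c\,v_de^d\kappa/2$, which the paper's argument does not provide. Your framing of Step~2 as the main obstacle is, however, off. For each fixed $e>0$, $p_e(Z)>0$ almost surely, because every ball centred at a support point has positive mass. Hence $Q(e)>0$ as soon as $\E[\{m_e(Z)-\mu\}^2]>0$, and Steps~1 and~3 already give this for small $e$. The decay $p_e\to0$ affects the size of $Q(e)$, not its sign. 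Step~2 is needed only for the rate, and without it your proof uses nothing beyond boundedness and a.e.\ continuity of $m$.
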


\begin{proof}
Choose $\delta>0$ such that
$A_\delta=\{z:|m(z)-\mu|>2\delta\}$ has positive probability. Almost
every point of $A_\delta$ is a continuity point of $m$ and a Lebesgue point
of the predictor density. For each such point, local averaging gives
$m_e(z)\to m(z)$ as $e\downarrow0$. Egorov's theorem supplies a
positive-probability subset on which this convergence is uniform. Hence
$|m_e(z)-\mu|>\delta$ on that subset for all sufficiently small $e$.
Assumption~\ref{ass:support} gives $p_e(z)>0$, so integration in
\eqref{eq:app-Q} is strictly positive.
\end{proof}

\begin{remark}[Estimand]
A variance-only process $Y=\mu+\sigma(X)U$ with
$\E[U\mid X]=0$ satisfies $Q(e)=0$ at every radius. Such processes are null
size designs, not power alternatives.
\end{remark}

\section{Sample Statistic and Fixed-Grid Asymptotics}
\label{app:sample}

Implementation replaces $a$ and $D$ by their sample analogues and writes
$\widehat Z_i=\widehat D^{-1}(X_i-\bar X)$. For centre $i$ and radius $e$,
define
\begin{align}
B_i(e)&=\{j:\|\widehat Z_j-\widehat Z_i\|\le e\},\\
N_i(e)&=|B_i(e)|,\\
\bar Y_i(e)&=N_i(e)^{-1}\sum_{j\in B_i(e)}Y_j.
\end{align}
Let $\mathcal I_n(e)$ contain centres satisfying the fixed minimum-size
rule. A fixed-radius statistic is
\begin{equation}
T_n(e)
=
\sum_{i\in\mathcal I_n(e)}N_i(e)
\{\bar Y_i(e)-\bar Y\}^2.
\label{eq:app-T}
\end{equation}
Our radius grid contains the 71 pairwise-distance quantiles
$q\in\mathcal Q=\{0.05,0.06,\ldots,0.75\}$. A candidate remains
admissible when at least 20\% of centres satisfy the minimum-size rule. The
reported statistic is
\begin{equation}
T_n^{\max}=\max_{q\in\widehat{\mathcal Q}_n}
T_n(\widehat e_{n,q}).
\label{eq:app-Tmax}
\end{equation}
A paired 20\%--40\% sensitivity audit changed no selected radius, observed
maximum, resampling $p$-value, or rejection decision in 13,000 Monte Carlo
comparisons or in the four principal empirical windows.

\subsection{A leave-one-out representation}

Self-inclusion complicates notation but not the first-order limit. Define
\[
\widehat p_{i,e}^{(-i)}
=
\frac{1}{n-1}\sum_{j\ne i}
\1\{\|\widehat Z_j-\widehat Z_i\|\le e\},
\quad
\widehat r_{i,e}^{(-i)}
=
\frac{1}{n-1}\sum_{j\ne i}Y_j
\1\{\|\widehat Z_j-\widehat Z_i\|\le e\},
\]
and $\widehat m_{i,e}^{(-i)}=\widehat r_{i,e}^{(-i)}/
\widehat p_{i,e}^{(-i)}$. Let
\begin{equation}
\widetilde T_n(e)
=
(n-1)\sum_{i=1}^n
\widehat p_{i,e}^{(-i)}
\{\widehat m_{i,e}^{(-i)}-\bar Y\}^2.
\label{eq:loo-stat}
\end{equation}
For fixed positive radii, adding the centre changes each local numerator and
denominator by $O(n^{-1})$. Consequently
$|T_n(e)-\widetilde T_n(e)|=o_p(n^2)$ uniformly over a fixed finite grid.

\begin{assumption}[IID fixed-grid regularity]
\label{ass:iid-grid}
Observations are iid; $\E|Y|^{4+\eta}<\infty$ for some $\eta>0$;
Assumption~\ref{ass:support} holds; every population radius considered below
is positive; and the fixed minimum-neighbour count is $o(n)$. Marginal
standard deviations of $X$ are bounded away from zero. For every indexed
radius $e_q$, there are constants $p_0>0$, $C<\infty$, and $t_0>0$ such that
\[
\inf_{z\in\mathcal Z}p_{e_q}(z)\ge p_0,
\qquad
\sup_{z\in\mathcal Z}
\Pp\{|\|Z-z\|-e_q|\le t\}\le Ct,
\quad 0<t<t_0.
\]
The pairwise-distance distribution $F_D$ is continuously differentiable on
a neighbourhood of every $e_q$, with $0<f_D(e_q)<\infty$. The quantile set
$\mathcal Q$ is finite and fixed as $n\to\infty$.
\end{assumption}

\begin{lemma}[Uniform local-mean convergence]
\label{lem:uniform-local}
Under Assumption~\ref{ass:iid-grid}, for every fixed finite radius set
$\mathcal E$,
\[
\max_{e\in\mathcal E}\max_{1\le i\le n}
|\widehat p_{i,e}^{(-i)}-p_e(Z_i)|=o_p(1)
\]
and
\[
\max_{e\in\mathcal E}
\frac1n\sum_{i=1}^n
|\widehat m_{i,e}^{(-i)}-m_e(Z_i)|^2=o_p(1).
\]
These conclusions also hold after replacing population standardisation by
sample standardisation.
\end{lemma}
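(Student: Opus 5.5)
We prove Lemma~\ref{lem:uniform-local} by empirical-process arguments over the class of Euclidean balls, in three steps: population standardisation first, then the transfer to sample standardisation.

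\emph{Step 1: local probabilities.} Fix $e\in\mathcal E$ and work first with the population-standardised $Z_i$. The class $\mathcal B=\{\1\{\|z-c\|\le e\}:c\in\R^d\}$ is VC with bounded envelope, so Glivenko--Cantelli gives
\[
\sup_c\Bigl|\frac1n\sum_j\1\{\|Z_j-c\|\le e\}-p_e(c)\Bigr|\to0 \quad\text{a.s.}
\]
The leave-one-out average $\widehat p^{(-i)}_{i,e}$ differs from the full average at $c=Z_i$ by at most $2/(n-1)$ uniformly in $i$. Hence $\max_i|\widehat p^{(-i)}_{i,e}-p_e(Z_i)|=o_p(1)$, and the finite maximum over $\mathcal E$ is immediate.

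\emph{Step 2: local numerators and ratios.} Truncate $Y$ at a level $M$ and write $Y=Y\1\{|Y|\le M\}+Y\1\{|Y|>M\}$. The truncated part yields the bounded VC-type class $\{y\1\{|y|\le M\}\1\{\|z-c\|\le e\}\}$, so the same uniform argument gives
\[
\max_i\bigl|\widehat r^{(-i),M}_{i,e}-r^M_e(Z_i)\bigr|=o_p(1).
\]
For the tail, bound the contribution crudely by $\frac1{n-1}\sum_{j}|Y_j|\1\{|Y_j|>M\}$, uniformly in $i$. This is $O_p(\E|Y|\1\{|Y|>M\})$, and the population counterpart $r_e-r_e^M$ obeys the same bound; both are small for large $M$. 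Since $\inf p_e\ge p_0$, Step~1 gives $\min_i\widehat p^{(-i)}_{i,e}\ge p_0/2$ with probability tending to one. Then
\[
|\widehat m-m_e|\le\frac{2}{p_0}|\widehat r-r_e|+\frac{2|r_e|}{p_0^2}|\widehat p-p_e|.
\]
Squaring and averaging, the bounded-part error is uniformly $o_p(1)$. The tail part is controlled by $\frac1n\sum_i(\text{tail})^2$, which is at most the square of a single $O_p(\E|Y|\1\{|Y|>M\})$ quantity because the tail bound does not depend on $i$. The finite moment $\E|Y|^{4+\eta}$ makes this negligible as $M\to\infty$. Averaging rather than taking a maximum in the second claim accommodates exactly this tail term.

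\emph{Step 3: sample standardisation.} This is the main obstacle, since the ball memberships $\1\{\|\widehat Z_j-\widehat Z_i\|\le e\}$ change discontinuously. We have $\widehat Z_i=\widehat D^{-1}D(Z_i-D^{-1}(\bar X-a))$, and $\|\widehat D^{-1}D-I\|$ and $\|\bar X-a\|$ are $O_p(n^{-1/2})$. Compact support then gives $\max_{i,j}\bigl|\|\widehat Z_j-\widehat Z_i\|-\|Z_j-Z_i\|\bigr|\le\tau_n=O_p(n^{-1/2})$. Hence memberships can differ only for pairs with $|\|Z_j-Z_i\|-e|\le\tau_n$. For any fixed $t<t_0$, on the event $\tau_n\le t$ the fraction of such $j$ for centre $i$ is at most
\[
\sup_c\frac1{n-1}\sum_j\1\{|\|Z_j-c\|-e|\le t\}.
\]
This annulus class is also VC, so the supremum is bounded by $Ct+o_p(1)$ using the annulus-mass condition of Assumption~\ref{ass:iid-grid}. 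Letting $t\downarrow0$ shows that the perturbation of $\widehat p^{(-i)}_{i,e}$ is uniformly $o_p(1)$. For numerators we use the same annulus count weighted by $|Y_j|$, again truncating at $M$. The ratio step of Step~2 then transfers both conclusions to the sample-standardised statistics.
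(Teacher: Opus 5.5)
Your proposal is correct and follows the same route as the paper's much terser proof. The shared steps are VC/Glivenko--Cantelli control of the ball indicators, an integrable-envelope argument for the local numerators (your truncation at $M$), uniform positive local mass to make the ratio map Lipschitz, and uniform closeness of sample- and population-standardised distances with negligible boundary shells; your VC annulus class combined with the uniform shell-mass bound in Assumption~\ref{ass:iid-grid} is a more explicit version of the paper's appeal to continuity of the distance distribution. One harmless aside: your tail bound does not depend on $i$, so the same argument also gives the maximum over $i$ for the local means, and the averaging in the second claim is not what absorbs the tail term.
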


\begin{proof}
Indicators of Euclidean balls form a VC class. Multiplying by an envelope
with finite $2+\eta$ moment preserves the required Glivenko--Cantelli
property for the local numerators. Uniform positive local mass makes the
ratio map Lipschitz on the relevant event. Compact support and consistency of
$\bar X$ and $\widehat D$ make sample-standardised distances uniformly
close to population-standardised distances. Boundary events are negligible
because the distance distribution is continuous at each fixed radius.
\end{proof}

\begin{theorem}[Fixed-radius probability limit]
\label{thm:fixed-plim}
Under Assumption~\ref{ass:iid-grid}, for every fixed $e\in\mathcal E$,
\[
n^{-2}T_n(e)\xrightarrow{p}Q(e).
\]
Convergence holds jointly over every fixed finite set $\mathcal E$.
\end{theorem}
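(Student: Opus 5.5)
The plan is to pass through the leave-one-out statistic \eqref{eq:loo-stat}. The text after that display already records that $|T_n(e)-\widetilde T_n(e)|=o_p(n^2)$ uniformly over a fixed finite grid. It therefore suffices to show
\[
(n-1)^{-1}n^{-1}\widetilde T_n(e)=\frac1n\sum_{i=1}^n\widehat p_{i,e}^{(-i)}\{\widehat m_{i,e}^{(-i)}-\bar Y\}^2\xrightarrow{p}Q(e).
\]
The coverage restriction to $\mathcal I_n(e)$ needs to be checked against this reduction. Uniform positive local mass $p_e\ge p_0$ and the first part of Lemma~\ref{lem:uniform-local} give $\min_i N_i(e)\ge (p_0/2)n$ with probability approaching one. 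Since $N_{\min}=o(n)$, every centre is admissible on that event, so the indicator of $\mathcal I_n(e)$ can be dropped at $o_p(1)$ cost.

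Next I decompose the normalised leave-one-out sum against the oracle average $\frac1n\sum_i p_e(Z_i)\{m_e(Z_i)-\mu\}^2$. There are three replacements.
\begin{enumerate}[label=(\roman*)]
\item Replace $\bar Y$ by $\mu$ using the law of large numbers.
\item Replace $\widehat p_{i,e}^{(-i)}$ by $p_e(Z_i)$ using the uniform bound in Lemma~\ref{lem:uniform-local}.
\item Replace $\widehat m_{i,e}^{(-i)}$ by $m_e(Z_i)$ using the averaged $L^2$ bound in the same lemma.
\end{enumerate}
Each replacement uses the expansion $a^2-b^2=(a-b)(a+b)$ together with Cauchy--Schwarz. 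For this I need $\frac1n\sum_i\{\widehat m_{i,e}^{(-i)}-\bar Y\}^2=O_p(1)$. That bound holds because $|\widehat m_{i,e}^{(-i)}|\le (p_0/2)^{-1}n^{-1}\sum_j|Y_j|$ on the high-probability event, and the moment assumption makes $n^{-1}\sum_j|Y_j|$ bounded in probability.

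The oracle average then converges to $Q(e)$ by the ordinary law of large numbers. Its summand is bounded, because $|m_e(z)|\le p_0^{-1}\E|Y|$ and $p_e\le 1$.

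Joint convergence over a fixed finite $\mathcal E$ is immediate. Each step is either uniform over $\mathcal E$, as Lemma~\ref{lem:uniform-local} already is, or holds for finitely many radii, and marginal convergence in probability to constants implies joint convergence.

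The main obstacle I expect is the self-inclusion step $T_n-\widetilde T_n=o_p(n^2)$ once it is made honest. Adding $Y_i$ to its own ball shifts the local mean by $O(|Y_i|/N_i)$. Summed with weight $N_i$, the cross term is of order $\sum_i|Y_i|\,|\bar Y_i-\bar Y|$, which is $O_p(n)$. The square term is $\sum_i Y_i^2/N_i$, which is $O_p(1)$. Both are $o_p(n^2)$ using only $\E Y^2<\infty$, so I would write this out explicitly rather than rely on the remark.

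A second delicate point is that the lemma is stated for population standardisation and fixed radii. The theorem here uses fixed $e$ with sample $\widehat Z_i$, so I would invoke the lemma's sample-standardised clause. Boundary events $\{\|\widehat Z_j-\widehat Z_i\|\approx e\}$ would be controlled by the anti-concentration bound $\Pp\{|\|Z-z\|-e|\le t\}\le Ct$ of Assumption~\ref{ass:iid-grid}, since $\max_i\|\widehat Z_i-Z_i\|=O_p(n^{-1/2})$ on compact support.
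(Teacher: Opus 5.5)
Your proposal is correct and follows the paper's own route. You reduce to the leave-one-out statistic \eqref{eq:loo-stat}, replace $\widehat p_{i,e}^{(-i)}$, $\widehat m_{i,e}^{(-i)}$ and $\bar Y$ by $p_e(Z_i)$, $m_e(Z_i)$ and $\mu$ via Lemma~\ref{lem:uniform-local}, apply the ordinary law of large numbers, and use finiteness of $\mathcal E$ for joint convergence. You add two details the paper only asserts: that the $N_{\min}$ screen is inactive, and an explicit self-inclusion bound. In that bound the exact shift is $\bar Y_i(e)-\widehat m_{i,e}^{(-i)}=\{Y_i-\widehat m_{i,e}^{(-i)}\}/N_i(e)$, and there is an extra weight-change term $\sum_i\{\widehat m_{i,e}^{(-i)}-\bar Y\}^2=O_p(n)$; both are harmless.
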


\begin{proof}
A leave-one-out representation gives
\[
n^{-2}\widetilde T_n(e)
=
\frac{n-1}{n}\frac1n\sum_{i=1}^n
\widehat p_{i,e}^{(-i)}
\{\widehat m_{i,e}^{(-i)}-\bar Y\}^2.
\]
Lemma~\ref{lem:uniform-local}, $\bar Y\to_p\mu$, and the moment
condition imply that the displayed average differs by $o_p(1)$ from
$n^{-1}\sum_i p_e(Z_i)\{m_e(Z_i)-\mu\}^2$. An ordinary law of large
numbers gives $Q(e)$. Asymptotic equivalence of $T_n(e)$ and
$\widetilde T_n(e)$ completes the proof. Finiteness of $\mathcal E$ gives
joint convergence.
\end{proof}

\begin{corollary}[Empirical distance quantiles]
\label{cor:emp-quantiles}
Let $e_q$ be the population $q$th quantile of $\|Z-Z'\|$ and let
$\widehat e_{n,q}$ be its empirical pairwise-distance analogue. Under
Assumption~\ref{ass:iid-grid}, for every fixed finite
$\mathcal Q$,
\[
\max_{q\in\mathcal Q}|\widehat e_{n,q}-e_q|=o_p(1)
\]
and
\[
\max_{q\in\mathcal Q}
|n^{-2}T_n(\widehat e_{n,q})-Q(e_q)|=o_p(1).
\]
Coverage screening is asymptotically inactive for these radii.
\end{corollary}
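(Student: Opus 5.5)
The corollary makes three claims: empirical distance quantiles converge, the statistic evaluated at them converges, and coverage screening is eventually inactive. I will prove them in that order, each step feeding the next.

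\textbf{Quantile convergence.} Write $\widehat F_n(t)=\binom n2^{-1}\sum_{i<j}\1\{\|\widehat Z_i-\widehat Z_j\|\le t\}$. I first replace $\widehat Z$ by the population-standardised $Z$. Compact support and consistency of $\bar X$ and $\widehat D$ give $\max_{i,j}\bigl|\|\widehat Z_i-\widehat Z_j\|-\|Z_i-Z_j\|\bigr|\le \rho_n=o_p(1)$. Hence, for every $t$,
\[
F_n^0(t-\rho_n)\le\widehat F_n(t)\le F_n^0(t+\rho_n),
\]
where $F_n^0$ is the U-process built from the $Z_i$. Half-lines of distances form a VC class, so a Glivenko--Cantelli theorem for U-processes gives $\sup_t|F_n^0(t)-F_D(t)|\to0$ almost surely. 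Since $F_D$ is continuous near each $e_q$, these combine to give $\sup_{t\in N}|\widehat F_n(t)-F_D(t)|=o_p(1)$ on a neighbourhood $N$ of each $e_q$. Because $f_D(e_q)>0$, $F_D$ is strictly increasing there, so $F_D(e_q\pm\epsilon)$ lies strictly on either side of $q$. Standard quantile inversion then yields $|\widehat e_{n,q}-e_q|\le\epsilon$ with probability tending to one. The grid $\mathcal Q$ is finite, so the maximum over $q$ is also $o_p(1)$.

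\textbf{Continuity of the statistic in the radius.} This step carries the argument. It is also where I expect the main obstacle, because a data-dependent radius must be plugged into Theorem~\ref{thm:fixed-plim}, which is stated only for fixed radii. I will avoid random-radius empirical-process arguments by sandwiching.

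Fix $\epsilon>0$ and a grid point $q$, and let $\mathcal E_\epsilon=\{e_q-\epsilon,e_q,e_q+\epsilon\}$. On the event $|\widehat e_{n,q}-e_q|\le\epsilon$, every ball $B_i(\widehat e_{n,q})$ lies between $B_i(e_q-\epsilon)$ and $B_i(e_q+\epsilon)$. Counts and outcome sums are monotone in the radius, so I bound the leave-one-out components of \eqref{eq:loo-stat}:
\[
\max_i|\widehat p^{(-i)}_{i,\widehat e}-\widehat p^{(-i)}_{i,e_q}|\le \max_i\bigl\{\widehat p^{(-i)}_{i,e_q+\epsilon}-\widehat p^{(-i)}_{i,e_q-\epsilon}\bigr\}.
\]
By Lemma~\ref{lem:uniform-local} applied to the fixed set $\mathcal E_\epsilon$, the right side equals $\sup_z\Pp\{|\|Z-z\|-e_q|\le\epsilon\}+o_p(1)\le C\epsilon+o_p(1)$, using the shell condition in Assumption~\ref{ass:iid-grid}. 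For the numerators, I apply Cauchy--Schwarz with the empirical VC/Glivenko--Cantelli bound for $Y\1\{\text{shell}\}$:
\[
\max_i\bigl|\widehat r^{(-i)}_{i,\widehat e}-\widehat r^{(-i)}_{i,e_q}\bigr|\le\Bigl(n^{-1}\textstyle\sum_j Y_j^2\Bigr)^{1/2}(C\epsilon)^{1/2}+o_p(1).
\]
Uniform local mass $p_{e}\ge p_0/2$ near $e_q$ keeps the ratio map Lipschitz. Therefore $|n^{-2}\widetilde T_n(\widehat e_{n,q})-n^{-2}\widetilde T_n(e_q)|\le K\sqrt\epsilon+o_p(1)$. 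The self-inclusion correction $|T_n-\widetilde T_n|=o_p(n^2)$ holds uniformly over all radii near $e_q$, since it depends only on $O(n^{-1})$ changes under a positive mass bound. Combining this with Theorem~\ref{thm:fixed-plim} at $e_q$ and letting $\epsilon\downarrow0$ gives the second display.

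\textbf{Coverage screening.} From the first step, $\widehat p^{(-i)}_{i,\widehat e_{n,q}}\ge \widehat p^{(-i)}_{i,e_q-\epsilon}\ge p_0/2$ for all $i$ with probability tending to one, once $\epsilon$ is small enough that $p_{e_q-\epsilon}\ge p_0-C\epsilon\ge 3p_0/4$. Then every $N_i(\widehat e_{n,q})\ge (n-1)p_0/2$, which eventually exceeds the $o(n)$ minimum count. So $\mathcal I_n(\widehat e_{n,q})=\{1,\ldots,n\}$, the rule \eqref{eq:coverage} holds with $\gamma=0.2$, and $\widehat{\mathcal Q}_n=\mathcal Q$ with probability tending to one.
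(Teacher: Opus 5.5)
Your proposal is correct and follows essentially the same route as the paper. Quantile consistency comes from the U-empirical Glivenko--Cantelli theorem plus positive density at $e_q$, the plug-in step uses continuity of the statistic in the radius, and inactive screening follows from uniform local mass. Your sandwich between the fixed radii $e_q\pm\epsilon$ is an explicit version of the stochastic equicontinuity the paper attributes to the VC property and null sphere boundaries: it is controlled by Lemma~\ref{lem:uniform-local} and the shell bound $\sup_z\Pp\{|\|Z-z\|-e_q|\le t\}\le Ct$, and it also spells out the sample-standardisation step that the paper leaves implicit.
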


\begin{proof}
An empirical pairwise-distance distribution is a U-empirical distribution
and converges uniformly to its population counterpart. Positive density at
each indexed quantile gives quantile consistency. Stochastic equicontinuity
in the radius follows from the VC property of balls and zero probability on
sphere boundaries. Uniform positive local mass implies that every centre
satisfies the fixed minimum-size rule with probability approaching one, so
both the 20\% screen and any stricter fixed screen admit the same grid
asymptotically.
\end{proof}

\begin{theorem}[Consistency for grid-visible fixed alternatives]
\label{thm:fixed-consistency}
Suppose Assumption~\ref{ass:iid-grid} holds and
$Q(e_q)>0$ for at least one $q\in\mathcal Q$. Then
\[
n^{-2}T_n^{\max}\xrightarrow{p}
\max_{q\in\mathcal Q}Q(e_q)>0.
\]
If a null critical value is $O_p(n)$, the resulting test is consistent
against this fixed, grid-visible alternative.
\end{theorem}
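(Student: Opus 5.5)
The argument combines Corollary~\ref{cor:emp-quantiles} with the continuous mapping theorem, and then compares orders of magnitude for the consistency claim. First, Corollary~\ref{cor:emp-quantiles} gives
\[
\max_{q\in\mathcal Q}\bigl|n^{-2}T_n(\widehat e_{n,q})-Q(e_q)\bigr|=o_p(1),
\]
jointly over the fixed finite grid. It also shows that coverage screening is asymptotically inactive. Let $\mathcal E_n$ be the event that $\widehat{\mathcal Q}_n=\mathcal Q$; this event has probability tending to one. On $\mathcal E_n$, $T_n^{\max}$ is the maximum over all 71 indices. The map $v\mapsto\max_q v_q$ on $\R^{71}$ is 1-Lipschitz in the sup norm, so
\[
\Bigl|n^{-2}T_n^{\max}-\max_{q\in\mathcal Q}Q(e_q)\Bigr|\le\max_{q\in\mathcal Q}\bigl|n^{-2}T_n(\widehat e_{n,q})-Q(e_q)\bigr|
\]
on $\mathcal E_n$. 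Since $\Pp(\mathcal E_n^c)\to0$, this yields the stated probability limit. The limit is strictly positive because at least one $Q(e_q)>0$ and every $Q(e)\ge0$, as each is an expectation of a nonnegative integrand.

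For consistency, let $c_n$ denote the calibration-specific critical value, which satisfies $c_n=O_p(n)$ by hypothesis. Write $Q^\ast=\max_qQ(e_q)>0$. The test rejects when $T_n^{\max}>c_n$, that is, when $n^{-2}T_n^{\max}>n^{-2}c_n$. The right-hand side is $O_p(n^{-1})=o_p(1)$, while the left-hand side converges in probability to $Q^\ast$. Hence
\[
\Pp\bigl(T_n^{\max}\le c_n\bigr)\le\Pp\bigl(n^{-2}T_n^{\max}\le Q^\ast/2\bigr)+\Pp\bigl(n^{-2}c_n\ge Q^\ast/2\bigr)\to0.
\]
The same argument covers a randomised Monte Carlo critical value, provided the $O_p(n)$ bound holds unconditionally (or conditionally in probability, which implies the unconditional bound). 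If a $p$-value form is used, such as \eqref{eq:pvalue}, the comparison is with the resampled maxima, whose $(1-\alpha)$ quantile is $c_n$, and the same bound applies.

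The main obstacle is not this final step but the inputs it takes as given, namely the uniform replacement of $\widehat e_{n,q}$ by $e_q$ and the inactivity of screening. Both are supplied by Corollary~\ref{cor:emp-quantiles}, so I would cite them rather than reprove them. The one point I would check carefully is that the corollary's joint $o_p(1)$ statement is taken on a common event together with $\widehat{\mathcal Q}_n=\mathcal Q$. This is harmless, because a finite intersection of events with probability tending to one also has probability tending to one. I would also state explicitly that the $O_p(n)$ critical-value hypothesis must hold under the alternative, not merely under the null. For the Rademacher scheme this would follow from the conditional bound $\E^*[T_n^*]=O_p(n)$, obtained via the trace formula \eqref{eq:null-mean} with the centred residuals $\widehat u_i$. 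That verification belongs to the calibration-specific results, not to this theorem.
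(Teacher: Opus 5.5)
Your proposal is correct and takes essentially the same route as the paper's proof. Both use Corollary~\ref{cor:emp-quantiles} together with continuity of the finite maximum (your 1-Lipschitz bound on the event $\widehat{\mathcal Q}_n=\mathcal Q$) for the probability limit, followed by an $n^2$-versus-$n$ order comparison for consistency. The only difference is that the paper also sketches why the null maximum is $O_p(n)$, namely that local means fluctuate at rate $N_i(e)^{-1/2}$. You instead take the $O_p(n)$ critical value as the stated hypothesis and correctly point out that it must hold for the critical value computed from data generated under the alternative, which usefully sharpens the paper's looser phrasing.
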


\begin{proof}
Corollary~\ref{cor:emp-quantiles} and continuity of the maximum over a fixed
finite vector give the first statement. Under the null, the local means have
sampling variation of order $N_i(e)^{-1/2}$, so each weighted squared
contrast is $O_p(1)$ and the finite-grid maximum is $O_p(n)$. Under the stated grid-visible alternative, $T_n^{\max}$ is of order $n^2$.
Local identification alone does not guarantee grid visibility because every
population distance quantile in the fixed index set remains positive.
\end{proof}

\subsection{Pitman local alternatives}

The fixed-grid criterion yields a direct local-power representation. For
$q\in\mathcal Q$, define the Hilbert space
$\mathbb H_q=L^2(P_C)$ and the feature map
\begin{equation}
\Psi_q(c,z)
=
\sqrt{p_{e_q}(c)}
\left\{
\frac{\1\{\|z-c\|\le e_q\}}{p_{e_q}(c)}-1
\right\}.
\label{eq:app-feature-map}
\end{equation}
Then
$K_{e_q}(z,z')=\langle\Psi_q(\cdot,z),
\Psi_q(\cdot,z')\rangle_{\mathbb H_q}$.

Consider a triangular array
\begin{equation}
Y_{n,i}=\mu+n^{-1/2}\delta_0(Z_i)+\varepsilon_i,
\qquad
\E\{\delta_0(Z)\}=0,
\qquad
\E(\varepsilon_i\mid Z_i)=0,
\label{eq:app-local-alt}
\end{equation}
where $\delta_0$ is bounded and square-integrable and the conditional moments
of $\varepsilon_i$ satisfy Assumption~\ref{ass:iid-null}. Set
\begin{equation}
\Delta_q(c)
=
\E\{\Psi_q(c,Z)\delta_0(Z)\}.
\label{eq:app-local-shift}
\end{equation}
Because $\E\delta_0(Z)=0$,
\[
\Delta_q(c)
=
\sqrt{p_{e_q}(c)}
\E\{\delta_0(Z)\mid \|Z-c\|\le e_q\},
\]
and therefore
\begin{equation}
\|\Delta_q\|_{\mathbb H_q}^2
=
\E_C\left[
 p_{e_q}(C)
 \left\{
 \E[\delta_0(Z)\mid\|Z-C\|\le e_q]
 \right\}^2
\right]
=:Q_\delta(e_q).
\label{eq:app-local-noncentrality}
\end{equation}
Equation~\eqref{eq:app-local-noncentrality} is the population Ball criterion
applied to the local mean departure. It makes cancellation explicit: a
rapidly sign-changing $\delta_0$ can have small $Q_\delta(e_q)$ at a radius
that averages positive and negative pieces together.

\begin{proposition}[Pitman local-power limit]
\label{prop:app-local-power}
Under Assumptions~\ref{ass:iid-grid} and \ref{ass:iid-null}, with
\eqref{eq:app-local-alt}, jointly over $q\in\mathcal Q$,
\[
\left\{n^{-1}T_n(\widehat e_{n,q})\right\}_{q\in\mathcal Q}
\Rightarrow
\left\{\|\mathbb G_q+\Delta_q\|_{\mathbb H_q}^2\right\}_{q\in\mathcal Q},
\]
where $\{\mathbb G_q\}$ is the centred Gaussian Hilbert-space vector obtained
under the null. Conditional on the data, the Rademacher radius vector under
\eqref{eq:app-local-alt} converges to
$\{\|\mathbb G_q\|^2\}_{q\in\mathcal Q}$, so the ideal bootstrap critical
value converges to the null critical value $c_\alpha$. If at least one
$\Delta_q\ne0$, then
\[
\Pp\left\{
\max_q\|\mathbb G_q+\Delta_q\|^2>c_\alpha
\right\}>\alpha
\]
whenever the joint Gaussian law is nondegenerate on the closed linear span of
the shifts.
\end{proposition}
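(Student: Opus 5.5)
The plan is to reduce the local-alternative problem to the null Ball-kernel representation and then add a deterministic drift. Under \eqref{eq:app-local-alt}, write $u_{n,i}=Y_{n,i}-\mu=\varepsilon_i+n^{-1/2}\delta_0(Z_i)$. The kernel reduction used for Theorem~\ref{thm:main-rademacher} (Appendix~\ref{app:calibration}) is purely a statement about the Ball matrices, so it does not depend on the null. It gives, uniformly over the fixed grid,
\[
n^{-1}T_n(\widehat e_{n,q})=\frac1n\sum_{r,s}K_{e_q}(Z_r,Z_s)u_{n,r}u_{n,s}+o_p(1).
\]
Because the sample Ball matrix annihilates constants, replacing $\mu$ by $\bar Y$ costs nothing. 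The remainder bounds need only bounded moments of $u_{n,i}$, and these hold uniformly in $n$ since $\delta_0$ is bounded. Using the feature map \eqref{eq:app-feature-map}, the quadratic form equals $\|S_{n,q}\|_{\mathbb H_q}^2$ with $S_{n,q}=n^{-1/2}\sum_i\Psi_q(\cdot,Z_i)u_{n,i}$. I would split this as $S_{n,q}=n^{-1/2}\sum_i\Psi_q(\cdot,Z_i)\varepsilon_i+n^{-1}\sum_i\Psi_q(\cdot,Z_i)\delta_0(Z_i)$.

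The first term is the null process. The Hilbert-space CLT jointly over the finite grid (or the finite-rank spectral approximation used in Theorem~\ref{thm:main-rademacher}) gives convergence to $\{\mathbb G_q\}$. The second term converges to $\Delta_q$ in $\mathbb H_q$ by a Hilbert-space LLN. Its summands are bounded in norm, since $\Psi_q$ is bounded by uniform local mass $p_0$ and $\delta_0$ is bounded, and their mean is $\E\{\Psi_q(\cdot,Z)\delta_0(Z)\}=\Delta_q$. Slutsky's lemma in $\prod_q\mathbb H_q$ and the continuous mapping $h\mapsto\|h\|^2$ then give the joint limit $\{\|\mathbb G_q+\Delta_q\|^2\}$. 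The identity \eqref{eq:app-local-noncentrality} follows directly from $\E\delta_0=0$.

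For the bootstrap, the multiplier residuals are $\widehat u_i=\varepsilon_i-\bar\varepsilon+n^{-1/2}\{\delta_0(Z_i)-\bar\delta\}$. The conditional multiplier process $n^{-1/2}\sum_i\xi_i\Psi_q(\cdot,Z_i)\widehat u_i$ differs from the null multiplier process by a term with conditional second moment $n^{-2}\sum_i\|\Psi_q(\cdot,Z_i)\|^2\delta_0(Z_i)^2=O_p(n^{-1})$, which is negligible. The Rademacher signs kill the drift, so the multiplier argument of Theorem~\ref{thm:main-rademacher} gives conditional convergence to $\{\|\mathbb G_q\|^2\}$. Because that theorem shows the limiting maximum has no atoms and a continuous distribution near $c_\alpha$, the critical value converges in probability to $c_\alpha$.

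The final step is strict power. By the portmanteau theorem and the continuity of the limit law at $c_\alpha$, the limiting rejection probability is $\Pp\{\max_q\|\mathbb G_q+\Delta_q\|^2>c_\alpha\}$. To show it exceeds $\alpha$, I would pick $q$ with $\Delta_q\ne0$ and use Anderson's inequality. That inequality says the symmetric convex set $\{(h_q):\max_q\|h_q\|^2\le c_\alpha\}$ receives strictly less mass under a nonzero shift of a centred Gaussian measure, provided the Gaussian law is nondegenerate along the shift direction; this is exactly the nondegeneracy hypothesis on the span of the shifts. I expect this strictness, together with making the kernel-reduction remainder uniform under the triangular array, to be the main obstacle. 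For strictness I would first reduce to the finite-dimensional projection onto $\mathrm{span}\{\Delta_q\}$, where strict Anderson monotonicity is classical for nondegenerate Gaussians.
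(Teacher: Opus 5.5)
Your proposal is correct and follows the paper's proof essentially step for step. The shared steps are the kernel reduction of Lemma~\ref{lem:ball-matrix}, the feature-map split into the null Hilbert-space CLT term plus a drift converging to $\Delta_q$ by the Hilbert-space law of large numbers, the $O_p(n^{-1})$ conditional second-moment bound showing that the Rademacher signs remove the drift, and Anderson's inequality on the closed, convex, symmetric acceptance set $A=\{(g_q)_q:\max_q\|g_q\|^2\le c_\alpha\}$.

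The one piece I would not rely on is your planned reduction of strictness to the projection onto $\mathrm{span}\{\Delta_q\}$. $A$ is not a cylinder over that span, and conditioning on the orthogonal component leaves a non-centred Gaussian and a non-symmetric section of $A$. A cleaner route (the paper simply asserts strictness) is to set $\varphi(t)=\Pp\{(\mathbb G_q+t\Delta_q)_q\in A\}$ and observe four facts. It is even and log-concave. It is real-analytic by the Cameron--Martin formula, since $(\Delta_q)_q$ lies in the Cameron--Martin space of $(\mathbb G_q)_q$: indeed $\Delta_q=\E\{\Psi_q(\cdot,Z)\delta_0(Z)\}$ with $\delta_0$ bounded and $\E(\varepsilon^2\mid Z)$ bounded below. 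Finally, it tends to zero as $t\to\infty$ because $A$ is bounded. Hence $\varphi(1)=\varphi(0)=1-\alpha$ would force $\varphi$ to be constant on $[0,1]$ and therefore everywhere, which is a contradiction.
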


\begin{proof}
Lemma~\ref{lem:ball-matrix} below permits replacement of the sample statistic
by its population-kernel form uniformly over the finite grid. Using the
feature representation of $K_{e_q}$, the leading term can be written as
\[
\left\|
 n^{-1/2}\sum_{i=1}^n
 \Psi_q(\cdot,Z_i)
 \{\varepsilon_i+n^{-1/2}\delta_0(Z_i)\}
\right\|_{\mathbb H_q}^2+o_p(1).
\]
The first summand inside the norm converges jointly over the finite radius set
to a centred Gaussian element by the Hilbert-space central limit theorem;
the conditional $(4+\eta)$ moment bound and bounded feature envelope imply
the required Lindeberg and tightness conditions. The second summand equals
$n^{-1}\sum_i\Psi_q(\cdot,Z_i)\delta_0(Z_i)$ and converges in
$\mathbb H_q$ to $\Delta_q$ by the Hilbert-space law of large numbers.
Continuous mapping gives the displayed joint limit.

For the bootstrap critical value, write the centred residual as
$\widehat u_{n,i}=\varepsilon_i+n^{-1/2}\{\delta_0(Z_i)-\overline\delta_n\}-\overline{\varepsilon}_n$. Conditional Rademacher multiplication removes the deterministic local mean shift at first order: the extra multiplier sum has conditional second moment of order
\[
\frac{1}{n^2}\sum_{i=1}^n\|\Psi_q(\cdot,Z_i)\|^2
\{\delta_0(Z_i)-\overline\delta_n\}^2=O_p(n^{-1}),
\]
uniformly over the finite grid. The residual-mean term is smaller by the same argument. Hence the conditional multiplier process has the same centred Gaussian limit as under the null, and the ideal raw-max bootstrap critical value converges to $c_\alpha$.

For the power statement, the acceptance set
$\{(g_q)_q:\max_q\|g_q\|^2\le c_\alpha\}$ is closed, convex, and symmetric in
the product Hilbert space. Anderson's inequality for Gaussian measures
implies that translating a nondegenerate centred Gaussian law by a nonzero
vector weakly lowers the probability of this set, with strict inequality
under the stated nondegeneracy and continuity conditions. Taking complements
gives the result.
\end{proof}

\begin{remark}[Scope of local power]
Proposition~\ref{prop:app-local-power} is a Pitman result for the fixed
71-radius family. It does not establish minimax optimality, adaptation to a
growing grid, or detection of departures whose Ball-smoothed projections
vanish at every indexed radius.
\end{remark}

\section{IID Calibration}
\label{app:calibration}

\subsection{Exchangeability benchmark}

For permutation $\pi$, pair $X_i$ with $Y_{\pi(i)}$ and recompute the
complete maximum. Its Monte Carlo $p$-value is
\[
\widehat p_{\rm perm}
=
\frac{1+\sum_{b=1}^B
\1\{T_{n,b}^{\max}\ge T_n^{\max}\}}{B+1}.
\]

\begin{proposition}[Finite-sample permutation validity]
\label{prop:app-perm}
Conditional on $X_1,\ldots,X_n$, suppose the joint distribution of
$(Y_1,\ldots,Y_n)$ is invariant to every permutation under the null. Then
the randomisation test has conditional size no greater than $\alpha$.
\end{proposition}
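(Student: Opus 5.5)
The plan is the standard randomisation-test argument, applied conditionally on the predictor sample $\mathbf X=(X_1,\ldots,X_n)$. Everything the statistic uses apart from the outcomes is a deterministic function of $\mathbf X$: the standardised predictors $\widehat Z_i$, the pairwise-distance quantiles $\widehat e_{n,q}$, the balls $B_i(e)$, the coverage screen $\widehat{\mathcal Q}_n$, and hence the map $y\mapsto T^{\max}(y;\mathbf X)$. So, conditional on $\mathbf X$, we have $T_n^{\max}=T(\mathbf Y)$ and $T_{n,b}^{\max}=T(\mathbf Y_{\pi_b})$ for one fixed measurable function $T$. Here $\pi_1,\ldots,\pi_B$ are drawn uniformly from the symmetric group, independently of $\mathbf Y$ given $\mathbf X$. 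Note that the grand mean $\bar Y$ is permutation-invariant, so it causes no issue.

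\textbf{Step 1: exchangeability of the statistics.} Under the stated null, $\mathbf Y_\pi\mid\mathbf X$ has the same law as $\mathbf Y\mid\mathbf X$ for every permutation $\pi$. Set $\pi_0=\mathrm{id}$. Then the $B+1$ vectors $\mathbf Y_{\pi_0},\ldots,\mathbf Y_{\pi_B}$ are conditionally exchangeable.

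To show this, write $\pi_b=\sigma_b\circ\tau$, where $\tau$ is an auxiliary uniform permutation. This gives the joint law of $(\mathbf Y_{\tau},\mathbf Y_{\sigma_1\tau},\ldots)$. By invariance, $\mathbf Y_\tau\overset{d}{=}\mathbf Y$ given $\mathbf X$, and the $\sigma_b\tau$ are iid uniform and independent of $\mathbf Y$. The group structure then makes the collection $\{\mathrm{id},\pi_1,\ldots,\pi_B\}$ exchangeable after the common relabelling.

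Applying $T$ coordinatewise preserves exchangeability. Hence $(T_0,T_1,\ldots,T_B)$ with $T_0=T_n^{\max}$ is conditionally exchangeable.

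\textbf{Step 2: rank bound.} For an exchangeable vector, the rank of $T_0$ counted with ties resolved against rejection satisfies
\[
\Pp\Bigl\{1+\sum_{b=1}^B\1\{T_b\ge T_0\}\le k\,\Big|\,\mathbf X\Bigr\}\le \frac{k}{B+1}.
\]
This follows by averaging the indicator over which coordinate plays the role of $T_0$. In any realisation, at most $k$ of the $B+1$ coordinates can have at most $k-1$ others weakly exceeding them. Taking $k=\lfloor\alpha(B+1)\rfloor$ gives $\Pp\{\widehat p_{\mathrm{perm}}\le\alpha\mid\mathbf X\}\le\alpha$. This is the claim, and integrating over $\mathbf X$ gives the unconditional bound.

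The only step needing care is Step 1: the observed statistic is the identity draw rather than an independent uniform draw, so exchangeability has to be established through the group-translation argument rather than assumed. A second point to check is that the coverage screen and the 71-radius search are recomputed inside $T$ and depend only on $\mathbf X$. This is true by construction (Section~\ref{sec:statistic}), so no data-dependent tuning breaks invariance.

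Ties, for example from discrete $Y$, need no randomisation. The inclusive $\ge$ in \eqref{eq:pvalue} only makes the test more conservative.
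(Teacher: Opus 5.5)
Your proof is correct and follows the paper's route: the paper argues that permutation invariance makes the observed and permuted maxima conditionally exchangeable given $X$, with the radius search and coverage screen depending only on $X$, and then appeals to rank uniformity. Your version adds the detail, including a counting bound that handles ties, which ``rank uniformity'' glosses over; the only tidy-up is in Step 1, which is cleaner stated as follows: replacing $\mathbf Y$ by $\mathbf Y_\tau$ (same conditional law) turns $(\mathbf Y,\mathbf Y_{\pi_1},\ldots,\mathbf Y_{\pi_B})$ into $(\mathbf Y_\tau,\mathbf Y_{\tau\pi_1},\ldots,\mathbf Y_{\tau\pi_B})$, where $(\tau,\tau\pi_1,\ldots,\tau\pi_B)$ are iid uniform (the composition order depends on your convention for $\mathbf Y_\pi$), rather than writing $\pi_b=\sigma_b\circ\tau$.
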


\begin{proof}
Permutation invariance makes the observed and permuted maxima exchangeable
conditional on $X$. Rank uniformity gives the result. Recomputing the complete
search inside every permutation incorporates radius selection.
\end{proof}

Mean independence alone does not imply this invariance. Unrestricted
permutation therefore remains an exchangeability benchmark rather than the
primary calibration for \eqref{eq:app-null}.

\subsection{Rademacher wild-multiplier calibration}

Write $\widehat u_i=Y_i-\bar Y$ and generate
\begin{equation}
Y_i^*=\bar Y+\xi_i\widehat u_i,
\qquad
\Pp(\xi_i=1)=\Pp(\xi_i=-1)=1/2.
\label{eq:app-wild}
\end{equation}
Geometry depends only on $X$, so every multiplier draw uses the observed
standardisation and empirical distance-quantile radii. The implementation
re-evaluates coverage and the full maximum, although coverage is unchanged
across outcome multipliers when geometry is fixed.

\subsubsection{Population Ball kernel}

For a population-standardised centre $c$ and predictor value $z$, define
\begin{equation}
g_e(c,z)
=
\frac{\1\{\|z-c\|\le e\}}{p_e(c)}-1.
\label{eq:app-ge}
\end{equation}
Let $C$ be an independent draw from the predictor distribution and set
\begin{equation}
K_e(z,z')
=
\E\left[
p_e(C)g_e(C,z)g_e(C,z')
\right].
\label{eq:app-Ke}
\end{equation}

\begin{lemma}[Ball-kernel properties]
\label{lem:ball-kernel}
Under Assumption~\ref{ass:iid-grid}, $K_e$ is bounded, symmetric, and
positive semidefinite for every indexed radius. In particular, for every
square-integrable $a$,
\[
\E\{a(Z)K_e(Z,Z')a(Z')\}
=
\E_C\left[
p_e(C)
\left\{
\E[a(Z)g_e(C,Z)]
\right\}^2
\right]
\ge0.
\]
\end{lemma}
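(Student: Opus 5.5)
The plan is to establish the properties directly from the definition \eqref{eq:app-Ke}. The key tool is the uniform lower bound $\inf_{z\in\mathcal Z}p_{e_q}(z)\ge p_0>0$ from Assumption~\ref{ass:iid-grid}, together with Fubini's theorem.

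\emph{Boundedness.} For every centre $c$ in the support and every $z$,
\[
|g_e(c,z)|\le 1/p_e(c)+1 .
\]
Hence
\[
p_e(c)\,|g_e(c,z)g_e(c,z')|\le p_e(c)\{1/p_e(c)+1\}^2=\{1+p_e(c)\}^2/p_e(c)\le 4/p_0,
\]
using $p_e(c)\le 1$ and $p_e(c)\ge p_0$. Taking expectation over $C$ gives $\sup_{z,z'}|K_e(z,z')|\le 4/p_0$. The same bound shows the integrand is integrable, so $K_e$ is well defined everywhere.

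\emph{Symmetry.} This is immediate because the integrand $p_e(C)g_e(C,z)g_e(C,z')$ is symmetric in $(z,z')$.

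\emph{The displayed identity.} Let $a\in L^2(P_Z)$ and let $Z,Z',C$ be independent. The triple integrand $a(Z)\,p_e(C)g_e(C,Z)g_e(C,Z')\,a(Z')$ is dominated by $(4/p_0)|a(Z)||a(Z')|$. This bound is integrable because $\E|a(Z)|\le \{\E a(Z)^2\}^{1/2}<\infty$ and $Z,Z'$ are independent. Fubini therefore lets us integrate over $(Z,Z')$ first for fixed $C=c$. The inner integral factorises by independence of $Z$ and $Z'$ as
\[
p_e(c)\,\E[a(Z)g_e(c,Z)]\,\E[a(Z')g_e(c,Z')]=p_e(c)\{\E[a(Z)g_e(c,Z)]\}^2 .
\]
Integrating over $C$ gives the stated formula, and it is nonnegative because $p_e\ge0$.

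\emph{Positive semidefiniteness.} Apply the identity to finitely supported combinations, or equivalently note that $\sum_{r,s}\alpha_r\alpha_s K_e(z_r,z_s)=\E_C[p_e(C)\{\sum_r\alpha_r g_e(C,z_r)\}^2]\ge0$ by the same Fubini step. This is the sense in which $K_e$ is an average of rank-one centred Ball profiles. Equivalently, $K_e(z,z')=\langle\Psi(\cdot,z),\Psi(\cdot,z')\rangle_{L^2(P_C)}$ with the feature map \eqref{eq:app-feature-map}, since $\Psi=\sqrt{p_e}\,g_e$.

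The only delicate point is justifying Fubini and finiteness. Both rest on the uniform local-mass bound $p_0$: without it, $1/p_e(c)$ could blow up near the boundary of the support. A secondary point is that the bound is needed only for $c$ in the support $\mathcal Z$, since $C$ is drawn from $P_Z$. No property of $a$ beyond square integrability is required, because the kernel is bounded.
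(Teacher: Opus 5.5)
Your proposal is correct and follows the paper's own argument: uniform local mass $p_e\ge p_0$ bounds $g_e$ and hence $K_e$, symmetry is immediate, and Fubini's theorem yields the nonnegative quadratic-form identity. You justify that Fubini step more carefully than the paper does, via the integrable envelope $(4/p_0)|a(Z)||a(Z')|$, and your finite-combination argument makes explicit the positive semidefiniteness that the paper leaves to the phrase ``average of rank-one centred Ball profiles.''
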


\begin{proof}
Uniform local mass gives $p_e(C)\ge p_0>0$, so
$|g_e(C,Z)|\le1+p_0^{-1}$ and $K_e$ is bounded. Symmetry is immediate.
Fubini's theorem gives the displayed nonnegative quadratic form.
\end{proof}

For an admissible sample radius, let $H_{n,e}$ have row $i$
\[
H_{n,e,ij}
=
\frac{\1\{\|\widehat Z_j-\widehat Z_i\|\le e\}}{N_i(e)}
\]
and define $A_{n,e}=H_{n,e}-\mathbf1\mathbf1'/n$ and
$M_{n,e}=A_{n,e}'D_{n,e}A_{n,e}$, with
$D_{n,e}=\operatorname{diag}\{N_i(e)\}$. Rows corresponding to inadmissible
centres are omitted. Then $T_n(e)=Y'M_{n,e}Y$ and
\begin{equation}
M_{n,e}\mathbf1=0.
\label{eq:app-annihilate}
\end{equation}
Equation~\eqref{eq:app-annihilate} is exact. Hence
$T_n(e)=u'M_{n,e}u=\widehat u'M_{n,e}\widehat u$ for the observed statistic;
estimating the global mean does not create an observed-statistic
replacement error.

\begin{lemma}[Uniform denominator, quantile, and shell control]
\label{lem:ball-shell}
Under Assumption~\ref{ass:iid-grid}, uniformly over $q\in\mathcal Q$,
\begin{align}
\max_{1\le i\le n}
\left|\frac{N_i(e_q)}{n}-p_{e_q}(Z_i)\right|
&=O_p\!\left(\sqrt{\frac{\log n}{n}}\right),
\label{eq:app-denom-rate}\\
\widehat e_{n,q}-e_q&=O_p(n^{-1/2}),
\label{eq:app-uq-rate}\\
\max_{r,s}
\left|
\|\widehat Z_r-\widehat Z_s\|-\|Z_r-Z_s\|
\right|&=O_p(n^{-1/2}).
\label{eq:app-standard-rate}
\end{align}
If $\rho_n=O_p(n^{-1/2})$, then
\begin{equation}
\frac1{n^2}\sum_{r,s=1}^n
\1\big\{|\|Z_r-Z_s\|-e_q|\le |\rho_n|\big\}
=o_p(1)
\label{eq:app-shell-rate}
\end{equation}
uniformly over the fixed quantile set. Consequently the proportion of
pairwise memberships changed by sample standardisation and empirical-radius
replacement is $o_p(1)$.
\end{lemma}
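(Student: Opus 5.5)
The plan is to prove the four displays in order, each feeding the next, working on the high-probability event $\mathcal E_n=\{\min_{q,i}N_i(e_q)/n\ge p_0/2\}$ where needed. For \eqref{eq:app-denom-rate}, fix $q$ and $i$, condition on $Z_i$, and write $N_i(e_q)-1=\sum_{j\ne i}\1\{\|Z_j-Z_i\|\le e_q\}$. This is a sum of $n-1$ iid Bernoulli$(p_{e_q}(Z_i))$ variables. Hoeffding's inequality bounds the deviation by $t$ with probability at least $1-2\exp\{-2(n-1)t^2\}$. A union bound over $n$ centres and the fixed finite $\mathcal Q$, with $t=c\sqrt{\log n/n}$ and $c$ large, gives the population-standardised rate. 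The self-inclusion term contributes $O(n^{-1})$. Passing to sample standardisation is deferred until after \eqref{eq:app-standard-rate} and \eqref{eq:app-shell-rate}. At that point it costs at most $n^{-1}\#\{j:|\|Z_j-Z_i\|-e_q|\le|\rho_n|\}$, and the uniform shell bound $\sup_z\Pp\{|\|Z-z\|-e_q|\le t\}\le Ct$ from Assumption~\ref{ass:iid-grid}, combined with the same Hoeffding and union argument, makes this $O_p(n^{-1/2}+\sqrt{\log n/n})$.

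For \eqref{eq:app-standard-rate}, write $\widehat Z_r-\widehat Z_s=\widehat D^{-1}D(Z_r-Z_s)$. Then
\[
\big|\|\widehat Z_r-\widehat Z_s\|-\|Z_r-Z_s\|\big|\le\|\widehat D^{-1}D-I\|\,\operatorname{diam}(\mathcal Z).
\]
The factor $\|\widehat D^{-1}D-I\|$ is $O_p(n^{-1/2})$ by the CLT for sample variances, since compact support gives all moments and marginal standard deviations are bounded away from zero. The centring $\bar X$ cancels in differences, so it plays no role.

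For \eqref{eq:app-uq-rate}, use the U-statistic empirical CDF $\widehat F_D(t)=\binom n2^{-1}\sum_{r<s}\1\{\|\widehat Z_r-\widehat Z_s\|\le t\}$. First compare it with the population-standardised version $\widetilde F_D$: by \eqref{eq:app-standard-rate}, $\widetilde F_D(t-\rho_n)\le\widehat F_D(t)\le\widetilde F_D(t+\rho_n)$. Next, $\sup_t|\widetilde F_D-F_D|=O_p(n^{-1/2})$ by the Hoeffding decomposition and the DKW/VC bound for the H\'ajek projection, with the degenerate remainder $O_p(n^{-1})$. Differentiability of $F_D$ near $e_q$ with $f_D(e_q)>0$ then gives $\widehat F_D(e_q\pm Kn^{-1/2})$ on the correct sides of $q$ with probability near one for large $K$. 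This yields the $O_p(n^{-1/2})$ rate, and it holds uniformly because $\mathcal Q$ is finite.

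The shell bound \eqref{eq:app-shell-rate} is the main obstacle, because $\rho_n$ is random and data-dependent, so one cannot simply plug it into an expectation. I would handle it with a sandwich. For any $\epsilon>0$, choose $K$ with $\Pp(|\rho_n|>Kn^{-1/2})<\epsilon$. On the complementary event, the left side is bounded by the deterministic-width statistic
\[
S_n=n^{-2}\sum_{r,s}\1\{|\|Z_r-Z_s\|-e_q|\le Kn^{-1/2}\}.
\]
The diagonal terms contribute at most $n^{-1}$, and the mean of $S_n$ is at most $CKn^{-1/2}$ by the uniform shell condition, so Markov's inequality gives $S_n=o_p(1)$. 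In fact this gives a rate $O_p(n^{-1/2})$, which is used above.

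Finally, a pair's membership in a ball can differ between $(\widehat Z,\widehat e_{n,q})$ and $(Z,e_q)$ only if $|\|Z_r-Z_s\|-e_q|\le|\rho_n|$ with
\[
\rho_n=\max_{r,s}\big|\|\widehat Z_r-\widehat Z_s\|-\|Z_r-Z_s\|\big|+|\widehat e_{n,q}-e_q|=O_p(n^{-1/2}).
\]
Applying \eqref{eq:app-shell-rate} to this $\rho_n$ gives the concluding claim that the proportion of changed memberships is $o_p(1)$.
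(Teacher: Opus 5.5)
Your proof is correct and has the same architecture as the paper's. The common ingredients are concentration of ball counts, a root-$n$ quantile rate driven by $f_D(e_q)>0$, Lipschitz dependence of pairwise distances on $\widehat D$, and, for \eqref{eq:app-shell-rate}, a deterministic-width shell bound on a high-probability event. For that last display you use width $Kn^{-1/2}$ and Markov's inequality where the paper uses $b_n=O(n^{-1/2}\log n)$ and a U-statistic law of large numbers; the idea is the same, and yours also gives the rate $O_p(n^{-1/2})$.

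The genuine differences are in the first two displays.
\begin{itemize}
\item For \eqref{eq:app-denom-rate}, the paper applies a VC maximal inequality over all centres $c\in\mathcal Z$, which in fact gives $O_p(n^{-1/2})$ before evaluating at sample points. Your conditional Hoeffding bound with a union over the $n$ sample centres is fully elementary and lands exactly on the stated $\sqrt{\log n/n}$. It loses only the log factor and uniformity over non-sample centres, neither of which is used later.
\item For \eqref{eq:app-uq-rate}, the paper cites the Bahadur rate for U-quantiles. Strictly, that applies only to the population-standardised distances $\|Z_r-Z_s\|$, because $\|\widehat Z_r-\widehat Z_s\|$ depends on the whole sample through $\widehat D$. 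Your sandwich $\widetilde F_D(t-\rho_n)\le\widehat F_D(t)\le\widetilde F_D(t+\rho_n)$ makes the passage to sample standardisation explicit. So does your per-centre shell-count transfer of the denominator bound to $\widehat Z$-based counts. The paper's proof leaves both steps implicit. This requires proving \eqref{eq:app-standard-rate} before \eqref{eq:app-uq-rate}, which you do despite announcing the displays ``in order.''
\end{itemize}

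One small point of labelling: DKW does not literally apply to the H\'ajek projection $z\mapsto\Pp(\|Z'-z\|\le t)$, since that is not an indicator class. Either fix works:
\begin{itemize}
\item monotonicity in $t$ gives a bracketing bound; or
\item on the event $|\rho_n|\le K'n^{-1/2}$, use monotonicity of $\widetilde F_D$ to evaluate only at the deterministic points $e_q\pm(K-K')n^{-1/2}$, so no uniform bound is needed.
\end{itemize}
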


\begin{proof}
For fixed $e_q$, the class
$\{z\mapsto\1(\|z-c\|\le e_q):c\in\mathcal Z\}$ is VC with a bounded
envelope. A standard VC maximal inequality gives a uniform empirical-process
bound of order $O_p\{\sqrt{(\log n)/n}\}$. Evaluating the class at the random
sample centres and adding the $1/n$ self-inclusion term gives
\eqref{eq:app-denom-rate}. Finiteness of $\mathcal Q$ makes the bound
simultaneous over radii.

The empirical pairwise-distance distribution is a U-empirical distribution
with bounded kernel. Its U-quantile admits the ordinary root-$n$ Bahadur
rate at an interior quantile where $f_D(e_q)$ exists and is strictly
positive; see, for example, the U-statistic quantile results collected in
\citet{serfling1980approximation}. This yields \eqref{eq:app-uq-rate}.

Compact support and marginal standard deviations bounded away from zero give
$\bar X-a=O_p(n^{-1/2})$ and $\widehat D-D=O_p(n^{-1/2})$. The map
$(x,x',a,D)\mapsto\|D^{-1}(x-a)-D^{-1}(x'-a)\|$ is uniformly Lipschitz on
the resulting compact parameter set, proving \eqref{eq:app-standard-rate}.

For \eqref{eq:app-shell-rate}, choose a deterministic $b_n\downarrow0$ with
$\Pp(|\rho_n|>b_n)\to0$ and $b_n=O(n^{-1/2}\log n)$. On
$\{|\rho_n|\le b_n\}$, the left side is bounded by the U-empirical average
of the shell indicator with width $b_n$. Assumption~\ref{ass:iid-grid}
gives shell probability at most $Cb_n$, and the bounded U-statistic law of
large numbers makes the empirical average $Cb_n+o_p(1)=o_p(1)$. A union
bound over the finite grid completes the proof. Combining
\eqref{eq:app-uq-rate}, \eqref{eq:app-standard-rate}, and the triangle
inequality shows that every changed membership lies in such a shrinking
shell.
\end{proof}

\begin{lemma}[Primitive Ball-matrix bounds and kernel reduction]
\label{lem:ball-matrix}
Suppose Assumption~\ref{ass:iid-grid} holds and the fixed quantile set is
$\mathcal Q$. For $M_{n,q}=M_{n,\widehat e_{n,q}}$, uniformly over
$q\in\mathcal Q$,
\begin{align}
\max_{r,s}|M_{n,q,rs}|&=O_p(1),\label{eq:matrix-entry}\\
\max_r\sum_s M_{n,q,rs}^2&=O_p(n),\label{eq:matrix-row}\\
\sum_r M_{n,q,rr}^2&=O_p(n).\label{eq:matrix-diag}
\end{align}
Moreover,
\begin{equation}
\max_{q\in\mathcal Q}
\frac1{n^2}
\sum_{r,s=1}^n
\left[
M_{n,q,rs}-K_{e_q}(Z_r,Z_s)
\right]^2
=o_p(1),
\label{eq:matrix-L2}
\end{equation}
and the corresponding average diagonal difference converges to zero.
Under $H_0^{CM}$ and the moment conditions in
Assumption~\ref{ass:iid-null} below,
\begin{equation}
\max_{q\in\mathcal Q}
\left|
\frac{T_n(\widehat e_{n,q})}{n}
-
\frac1n\sum_{r=1}^n\sum_{s=1}^n
K_{e_q}(Z_r,Z_s)u_ru_s
\right|
=o_p(1).
\label{eq:kernel-reduction}
\end{equation}
The same reduction holds conditionally for the Rademacher bootstrap.
\end{lemma}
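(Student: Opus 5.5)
The plan is to write $M_{n,q}$ entrywise and compare it with the kernel. Expanding $A'DA$ with $H_{ij}=\1_{ij}/N_i$ gives
\[
M_{n,e,rs}=\sum_{i\in\mathcal I_n}N_i\Big(\frac{\1_{ir}}{N_i}-\frac1n\Big)\Big(\frac{\1_{is}}{N_i}-\frac1n\Big)
=\frac1n\sum_{i}\frac{N_i}{n}\Big(\frac{n\1_{ir}}{N_i}-1\Big)\Big(\frac{n\1_{is}}{N_i}-1\Big).
\]
This is exactly an empirical average over centres $C=\widehat Z_i$ of $\widehat p\,\widehat g(C,z_r)\widehat g(C,z_s)$, with $\widehat p=N_i/n$ and $\widehat g=\1/\widehat p-1$. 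It is therefore the plug-in analogue of $K_e(Z_r,Z_s)=\E[p_e(C)g_e(C,Z_r)g_e(C,Z_s)]$.

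\textbf{Entry, row and diagonal bounds.} By \eqref{eq:app-denom-rate} and $\inf p_{e_q}\ge p_0$, with probability tending to one $N_i/n\ge p_0/2$ for all $i$ and $q$, and every centre is admissible (Corollary~\ref{cor:emp-quantiles}). Each summand is then bounded by $(1+2/p_0)^2$, which gives \eqref{eq:matrix-entry}. Bounds \eqref{eq:matrix-row} and \eqref{eq:matrix-diag} follow trivially: $n$ entries per row, each $O_p(1)$. Finiteness of $\mathcal Q$ gives uniformity over $q$.

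\textbf{The $L^2$ approximation \eqref{eq:matrix-L2}.} I would split $M_{n,q,rs}-K_{e_q}(Z_r,Z_s)$ into three differences.
\begin{enumerate}[label=(\roman*)]
\item Replace the sample memberships $\1\{\|\widehat Z_j-\widehat Z_i\|\le\widehat e_{n,q}\}$ by population memberships $\1\{\|Z_j-Z_i\|\le e_q\}$. By Lemma~\ref{lem:ball-shell}, the fraction of changed pairs is $o_p(1)$. Because all factors are bounded, the averaged squared change over $(r,s)$ is bounded by a constant times that fraction plus the effect on the $N_i$, which is also $o_p(1)$.
\item Replace $N_i/n$ by $p_{e_q}(Z_i)$ using \eqref{eq:app-denom-rate}. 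The map $p\mapsto p(\1/p-1)(\1'/p-1)$ is Lipschitz on $[p_0/2,1]$, so this costs $O_p(\sqrt{\log n/n})$ uniformly in $(r,s)$.
\item Replace the centre average $n^{-1}\sum_i\phi(Z_i;Z_r,Z_s)$ by its expectation over $C$. Here $\phi(c;z,z')=p_{e_q}(c)g(c,z)g(c,z')$ is bounded.
\end{enumerate}
Step (iii) is the main obstacle, because $(r,s)$ range over the same sample that supplies the centres. I would handle it in two ways. First, the terms with $i\in\{r,s\}$ contribute only $O(1/n)$. Second, the remaining supremum over $(z,z')$ is controlled because $\{c\mapsto\phi(c;z,z')\}$ is a product of bounded VC-type classes (ball indicators in $c$, times the bounded function $p_{e_q}$ of bounded variation along $\mathcal Z$). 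A uniform Glivenko--Cantelli bound then makes $\sup_{z,z'}|n^{-1}\sum_i\phi(Z_i;z,z')-K(z,z')|=o_p(1)$. That bound implies \eqref{eq:matrix-L2}, and the diagonal statement follows in the same way with $r=s$. If the VC property of $p_{e_q}(c)\1\{\cdot\}\1\{\cdot\}$ is awkward, the fallback is to compute the second moment of the degenerate-free U-statistic $n^{-2}\sum_{r,s}(\cdot)^2$ directly; this gives $O(1/n)$.

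\textbf{Kernel reduction \eqref{eq:kernel-reduction}.} By \eqref{eq:app-annihilate}, $T_n=u'M_{n,q}u$. Set $\Delta=M_{n,q}-K_q$, where $K_q=[K_{e_q}(Z_r,Z_s)]$. Under the null, split $n^{-1}u'\Delta u$ into diagonal and off-diagonal parts.
\begin{itemize}
\item The diagonal part $n^{-1}\sum_r\Delta_{rr}u_r^2$ is $o_p(1)$. This uses the averaged diagonal convergence, uniform boundedness of $\Delta_{rr}$, and a Cauchy--Schwarz or uniform-integrability argument with $\E u^4<\infty$.
\item For the off-diagonal part, if $\Delta$ were $\sigma(Z)$-measurable, conditional on $Z$ with $\E[u\mid Z]=0$ and independent $u$'s the second moment would be at most $2n^{-2}\sum_{r\ne s}\Delta_{rs}^2\,\sigma^2_r\sigma^2_s\le Cn^{-2}\sum\Delta_{rs}^2=o_p(1)$ by \eqref{eq:matrix-L2}. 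In fact $\Delta$ depends on $u$ only through $\widehat e_{n,q}$, and even that dependence is absent, since distances are functions of $X$ alone. Thus $M_{n,q}$ is $\sigma(X)$-measurable and the conditioning argument is exact.
\end{itemize}
The bootstrap version is identical with $\xi_r\widehat u_r$ in place of $u_r$, where the multipliers are independent, centred and unit-variance given the data. It needs $n^{-1}\sum\widehat u_r^4=O_p(1)$ and $\widehat u_r=u_r-(\bar Y-\mu)$, where the last shift is removed exactly by \eqref{eq:app-annihilate}.
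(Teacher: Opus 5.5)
Your proof matches the paper's except for its final sentence, which is wrong.

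\textbf{Where you agree with the paper.} Your centred-product entry formula expands to the paper's $n^{-1}\sum_i\{I_{ir}I_{is}/\widehat p_i-I_{ir}-I_{is}+\widehat p_i\}$. The entry, row and diagonal bounds come from the same lower bound on $\widehat p_i$. Your replacements (i)--(iii) are the paper's Steps~4, 2 and~3 in a different order. The null transfer is the same conditional mean/variance calculation, justified by $\sigma(X)$-measurability of $M_{n,q}$.

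\textbf{Two valid variations.} Your primary route for (iii), a uniform Glivenko--Cantelli bound over $(z,z')$, gives a sup-norm statement. This is stronger than the paper's $L^2$ average. No bounded-variation property is needed, because multiplying a VC class by one fixed bounded function such as $1/p_{e_q}$ preserves uniform entropy. The paper's second-moment computation is exactly your fallback. In the bootstrap off-diagonal term the weights $\widehat u_r^2$ are unbounded. Because $\Delta$ has bounded entries, Cauchy--Schwarz gives $n^{-2}\sum_{r\ne s}\Delta_{rs}^2\widehat u_r^2\widehat u_s^2\le\{n^{-2}\sum_{r,s}\Delta_{rs}^4\}^{1/2}\,n^{-1}\sum_r\widehat u_r^4=o_p(1)$. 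This avoids the paper's truncation step.

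\textbf{The error.} The annihilation \eqref{eq:app-annihilate} removes the estimated mean exactly only for the observed statistic. In the bootstrap, write $\xi\circ u=(\xi_ru_r)_r$ and $\bar u=\bar Y-\mu$. Then $Y^*-\bar Y\mathbf 1=\xi\circ\widehat u=\xi\circ u-\bar u\,\xi$, and the shift $\bar u\,\xi$ is not proportional to $\mathbf 1$, so $M_{n,q}\mathbf 1=0$ does not remove it. Your argument therefore reduces $T_n^*$ only to the kernel form in $\widehat u$. Theorem~\ref{thm:rademacher} needs the multiplier V-statistic $n^{-1}\sum_{r,s}K_{e_q}(Z_r,Z_s)\xi_r\xi_su_ru_s$ in the true errors, because its eigen-expansion is in $W_i=(Z_i,u_i)$.

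\textbf{The fix.} Expand
$T_n^*(\widehat e_{n,q})=(\xi\circ u)'M_{n,q}(\xi\circ u)-2\bar u\,\xi'M_{n,q}(\xi\circ u)+\bar u^2\,\xi'M_{n,q}\xi$, and apply your argument to the first term. The other two terms are negligible after division by $n$, as in the paper's last paragraph. Since $M_{n,q}=A'DA$ is positive semidefinite, $n^{-1}\bar u^2\xi'M_{n,q}\xi$ is nonnegative. Its conditional mean is $n^{-1}\bar u^2\operatorname{tr}M_{n,q}=O_p(n^{-1})$. The cross term is at most $2n^{-1}\{\bar u^2\xi'M_{n,q}\xi\}^{1/2}\{(\xi\circ u)'M_{n,q}(\xi\circ u)\}^{1/2}$. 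The last quadratic form has conditional mean $\sum_rM_{n,q,rr}u_r^2=O_p(n)$, so the cross term is $O_p^*(n^{-1/2})$. With this correction your proof is complete.
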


\begin{proof}
We prove the result in five steps. Because $\mathcal Q$ is finite, every
bound below may be intersected over $q$ without changing its stochastic
order.

\emph{Step 1: exact matrix representation and entry bounds.}
On the event that every centre is admissible, write
$I_{ir}(e)=\1\{\|\widehat Z_r-\widehat Z_i\|\le e\}$ and
$\widehat p_i(e)=N_i(e)/n$. Direct expansion of
$M_{n,e}=A_{n,e}'D_{n,e}A_{n,e}$ gives
\begin{equation}
M_{n,e,rs}
=
\frac1n\sum_{i=1}^n
\left\{
\frac{I_{ir}(e)I_{is}(e)}{\widehat p_i(e)}
-I_{ir}(e)-I_{is}(e)+\widehat p_i(e)
\right\}.
\label{eq:app-m-entry-exact}
\end{equation}
Lemma~\ref{lem:ball-shell} and uniform positive local mass imply
$\inf_i\widehat p_i(e_q)\ge p_0/2$ with probability approaching one.
Every summand in \eqref{eq:app-m-entry-exact} is therefore bounded by a
constant independent of $n$, which proves \eqref{eq:matrix-entry}. The same
bound gives
$\sum_sM_{n,q,rs}^2\le Cn$ and
$\sum_rM_{n,q,rr}^2\le Cn$, proving
\eqref{eq:matrix-row}--\eqref{eq:matrix-diag}. Assumption
\ref{ass:iid-grid} also implies that all centres satisfy the fixed
minimum-neighbour rule with probability approaching one, so omitting
inadmissible rows does not affect these conclusions.

\emph{Step 2: replace empirical denominators at population geometry.}
First hold the population-standardised observations and population radius
$e_q$ fixed. Define
\[
\overline K_{n,q}(z,z')
=
\frac1n\sum_{i=1}^n
p_{e_q}(Z_i)g_{e_q}(Z_i,z)g_{e_q}(Z_i,z').
\]
Expanding the product shows that the only difference between the exact
matrix entry in \eqref{eq:app-m-entry-exact} and
$\overline K_{n,q}(Z_r,Z_s)$ comes from replacing
$\widehat p_i(e_q)$ by $p_{e_q}(Z_i)$. On the event
$\inf_i\widehat p_i\wedge p_i\ge p_0/2$,
\[
\left|\widehat p_i^{-1}-p_i^{-1}\right|
\le \frac{2}{p_0^2}|\widehat p_i-p_i|.
\]
Hence Lemma~\ref{lem:ball-shell} implies
\[
\max_{r,s}
|M_{n,e_q,rs}^{(0)}-\overline K_{n,q}(Z_r,Z_s)|
=o_p(1),
\]
where $M^{(0)}$ denotes the matrix constructed with population
standardisation and radius. In particular its empirical squared average is
$o_p(1)$.

\emph{Step 3: replace the empirical centre average by the population kernel.}
Let
$f_q(c,z,z')=p_{e_q}(c)g_{e_q}(c,z)g_{e_q}(c,z')$. Uniform local mass makes
$f_q$ bounded. Expanding the empirical squared error gives
\begin{align*}
&\E\left[
\frac1{n^2}\sum_{r,s}
\{\overline K_{n,q}(Z_r,Z_s)-K_{e_q}(Z_r,Z_s)\}^2
\right]\\
&\qquad=
\E\left[
\left\{
\frac1n\sum_i f_q(Z_i,Z_1,Z_2)
-\E_C f_q(C,Z_1,Z_2)
\right\}^2
\right]+O(n^{-1}).
\end{align*}
The $O(n^{-1})$ term collects configurations in which the centre index
coincides with one of the evaluation indices. Conditional on $(Z_1,Z_2)$,
the remaining centre terms are iid and bounded, so their variance is
$O(n^{-1})$. The displayed expectation is therefore $O(n^{-1})$.
Markov's inequality yields
\[
\frac1{n^2}\sum_{r,s}
\{\overline K_{n,q}(Z_r,Z_s)-K_{e_q}(Z_r,Z_s)\}^2=o_p(1).
\]
The same argument with $r=s$ gives
$n^{-1}\sum_r|\overline K_{n,q}(Z_r,Z_r)-K_{e_q}(Z_r,Z_r)|=o_p(1)$.

\emph{Step 4: sample standardisation and empirical radii.}
Construct $M_{n,q}$ from $(\widehat Z_i,\widehat e_{n,q})$ and
$M_{n,q}^{(0)}$ from $(Z_i,e_q)$. Let
\[
\rho_{n,q}=\frac1{n^2}\sum_{i,j}
\left|I_{ij}(\widehat e_{n,q};\widehat Z)-I_{ij}(e_q;Z)\right|.
\]
Lemma~\ref{lem:ball-shell} gives $\rho_{n,q}=o_p(1)$. On the event on which
all neighbourhood probabilities are bounded below, formula
\eqref{eq:app-m-entry-exact} and
$|ab-a'b'|\le |a-a'|+|b-b'|$ for binary $a,b,a',b'$ give a constant $C$
independent of $n,r,s$ such that
\[
|M_{n,q,rs}-M_{n,q,rs}^{(0)}|
\le
\frac{C}{n}\sum_i
\{ |\Delta I_{ir}|+|\Delta I_{is}|+|\Delta \widehat p_i|\},
\]
where $\Delta I_{ij}$ is the membership change and
$\Delta\widehat p_i=n^{-1}\sum_j\Delta I_{ij}$. Jensen's inequality and
averaging first over $(r,s)$ and then over centres yield
\[
\frac1{n^2}\sum_{r,s}
\{M_{n,q,rs}-M_{n,q,rs}^{(0)}\}^2
\le C\rho_{n,q}=o_p(1).
\]
The diagonal analogue follows from the same bound. Combining Steps 2--4
proves \eqref{eq:matrix-L2} and the asserted average diagonal convergence.
The minimum-neighbour and 20\% coverage screens are inactive with
probability approaching one because
$\min_iN_i(\widehat e_{n,q})/n\ge p_0/3$ on the same event.

\emph{Step 5: transfer the matrix approximation to quadratic forms.}
Set
$\Delta_{n,q}=M_{n,q}-K_{n,q}$, where
$K_{n,q,rs}=K_{e_q}(Z_r,Z_s)$. Conditional on $Z_1,\ldots,Z_n$ and under
$H_0^{CM}$, independence and $\E(u_i\mid Z_i)=0$ give
\[
\E(u'\Delta_{n,q}u\mid Z)
=
\sum_i\Delta_{n,q,ii}\E(u_i^2\mid Z_i)
=o_p(n),
\]
using the average diagonal convergence and bounded conditional variances.
The conditional variance is bounded by
\[
C\sum_{i\ne j}\Delta_{n,q,ij}^2
+C\sum_i\Delta_{n,q,ii}^2
=o_p(n^2)
\]
from \eqref{eq:matrix-L2} and the conditional fourth-moment bound.
Chebyshev's inequality therefore gives
$n^{-1}u'\Delta_{n,q}u=o_p(1)$ uniformly over $q$, proving
\eqref{eq:kernel-reduction}.

For the bootstrap, conditional on the data,
$u_i^*=\xi_i\widehat u_i$. The Rademacher identities
$\E^*(\xi_i\xi_j)=0$ for $i\ne j$ and $\xi_i^2=1$ produce the same diagonal
mean and off-diagonal variance bounds, now weighted by empirical residual
moments. The sample $(4+\eta)$ moment is $O_p(1)$, so truncation at a level
that diverges slower than $n^{1/(4+\eta)}$ makes the preceding bounds
uniformly integrable; removing the truncation changes the conditional law by
$o_p(1)$. Finally,
$\widehat u_i=u_i-\bar u$ with $\bar u=O_p(n^{-1/2})$. In the multiplier
quadratic form, the pure $\bar u^2$ term is $o_p^*(1)$ after division by $n$
because $\operatorname{tr}(M_{n,q})=O_p(n)$, and the cross term is
$o_p^*(1)$ by Cauchy--Schwarz and \eqref{eq:matrix-row}. This proves the
conditional bootstrap reduction.
\end{proof}

\begin{assumption}[Primitive iid null regularity]
\label{ass:iid-null}
Under $H_0^{CM}$, observations are iid,
$\E[u\mid Z]=0$, and for some constants
$0<\underline\sigma^2<\overline\sigma^2<\infty$ and $\eta>0$,
\[
\underline\sigma^2
\le
\E[u^2\mid Z]
\le
\overline\sigma^2
\quad\text{a.s.},
\qquad
\E[|u|^{4+\eta}\mid Z]\le C
\quad\text{a.s.}
\]
Assumption~\ref{ass:iid-grid} holds. At least one indexed
$K_{e_q}$ is nonzero in $L^2(P_Z\times P_Z)$.
\end{assumption}

For $W=(Z,u)$ define
\begin{equation}
h_q(W,W')
=
K_{e_q}(Z,Z')uu'.
\label{eq:app-hq}
\end{equation}
Under Assumption~\ref{ass:iid-null}, $h_q$ is square-integrable and
canonical:
\[
\E\{h_q(W,W')\mid W\}=0.
\]
It also defines a positive trace-class operator on $L^2(P_W)$. Positivity
follows because, for any $f\in L^2(P_W)$ and
$a_f(z)=\E[u f(W)\mid Z=z]$,
\[
\E\{f(W)h_q(W,W')f(W')\}
=
\E\{a_f(Z)K_{e_q}(Z,Z')a_f(Z')\}\ge0.
\]
Bounded $K_{e_q}$ and $\E u^2<\infty$ imply
\[
\int h_q(w,w)\,dP_W(w)
=
\E\{K_{e_q}(Z,Z)u^2\}<\infty.
\]
For a positive integral operator this diagonal integral equals the operator
trace, so the eigenvalues are nonnegative and absolutely summable.

\begin{theorem}[Primitive iid Rademacher validity for the raw radius vector]
\label{thm:rademacher}
Under Assumptions~\ref{ass:iid-grid} and \ref{ass:iid-null}, let
\[
V_n
=
\left\{
n^{-1}T_n(\widehat e_{n,q})
:q\in\mathcal Q
\right\}
\]
and let $V_n^*$ be the corresponding vector computed from
$Y_i^*=\bar Y+\xi_i\widehat u_i$. Then there exists a nondegenerate finite
vector $\mathcal V$ of second-order Gaussian-chaos variables such that
\[
V_n\Rightarrow\mathcal V
\]
and, conditionally on the data,
\[
d_{\mathrm{BL}}\!\left(
\mathcal L^*(V_n^*),\mathcal L(\mathcal V)
\right)
\xrightarrow{p}0,
\]
where $d_{\mathrm{BL}}$ is bounded-Lipschitz distance.

For each radius, if
$\{\lambda_{q,r}\}_{r\ge1}$ are the nonnegative eigenvalues of the
operator generated by $h_q$, the marginal limit can be represented as
\[
\mathcal V_q
\overset{d}{=}
\sum_{r\ge1}\lambda_{q,r}Z_{q,r}^2,
\]
with the cross-radius dependence induced by the common underlying Gaussian
process.
\end{theorem}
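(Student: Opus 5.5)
The plan is to reduce both the observed and the bootstrap radius vectors to finite collections of degenerate V-statistics with population kernels, using Lemma~\ref{lem:ball-matrix}, and then to identify their common limit through a finite-rank spectral truncation. Equation~\eqref{eq:kernel-reduction} replaces $n^{-1}T_n(\widehat e_{n,q})$ by $n^{-1}\sum_{r,s}K_{e_q}(Z_r,Z_s)u_ru_s$ uniformly over the finite grid $\mathcal Q$. The conditional version in the same lemma replaces $n^{-1}T_n^*(\widehat e_{n,q})$ by $n^{-1}\sum_{r,s}K_{e_q}(Z_r,Z_s)\xi_r\xi_s\widehat u_r\widehat u_s$, and after the $\bar u$ correction by the same expression with $u_r$ in place of $\widehat u_r$. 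From this point the sample geometry, standardisation and coverage screening play no further role, and the two problems concern the same kernels $h_q$ from \eqref{eq:app-hq}. These kernels are canonical and generate positive trace-class operators, as noted before the theorem.

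Next I expand each kernel in its Mercer series in the feature space. Writing $K_{e_q}(z,z')=\sum_k\mu_{q,k}\phi_{q,k}(z)\phi_{q,k}(z')$ with $\sum_k\mu_{q,k}<\infty$, which holds because $K_{e_q}$ is bounded and positive semidefinite by Lemma~\ref{lem:ball-kernel}, the V-statistic becomes
\[
\sum_k\mu_{q,k}\Big\{n^{-1/2}\sum_i\phi_{q,k}(Z_i)u_i\Big\}^2 .
\]
I truncate at $k\le L$. The tail has expectation $\sum_{k>L}\mu_{q,k}\E[\phi_{q,k}(Z)^2u^2]\le\overline\sigma^2\sum_{k>L}\mu_{q,k}$. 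This bound is uniform in $n$ and holds both for the original vector and conditionally for the multiplier vector. In the multiplier case the factor $\E[u^2\mid Z]$ is replaced by $n^{-1}\sum_i\phi_{q,k}(Z_i)^2u_i^2$, whose sum over $k>L$ is at most $n^{-1}\sum_i K$-tail$(Z_i,Z_i)u_i^2$. This quantity converges to its population value by the law of large numbers, so it is $o_p(1)$ as $L\to\infty$ with probability close to one.

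For fixed $L$, the stacked finite vector $\{n^{-1/2}\sum_i\phi_{q,k}(Z_i)u_i\}_{q\in\mathcal Q,k\le L}$ obeys the multivariate Lindeberg CLT. Its covariance is $\E[\phi_{q,k}\phi_{q',k'}u^2]$, and the Lindeberg condition follows from the $(4+\eta)$ moments. The multiplier analogue $\{n^{-1/2}\sum_i\phi_{q,k}(Z_i)\xi_iu_i\}$ has conditional covariance $n^{-1}\sum_i\phi\phi'u_i^2$, which converges in probability to the same matrix. The conditional Lindeberg condition holds by the moment bound, so the conditional law converges in probability to the same Gaussian. The continuous mapping theorem then transfers both results to the truncated quadratic forms. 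A standard approximation argument in bounded-Lipschitz distance, letting $L\to\infty$ after $n\to\infty$ and using the uniform tail bounds, gives $V_n\Rightarrow\mathcal V$ and $d_{\mathrm{BL}}(\mathcal L^*(V_n^*),\mathcal L(\mathcal V))\to_p0$. The marginal representation $\sum_r\lambda_{q,r}Z_{q,r}^2$ follows from diagonalising the covariance operator of the limit Gaussian element, $\sum_k\sqrt{\mu_{q,k}}\,G_{q,k}\phi_{q,k}$. That operator coincides with the operator of $h_q$ on $L^2(P_W)$, and the cross-radius dependence comes from the joint Gaussian vector.

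Nondegeneracy requires that some $\mathcal V_q$ has a nontrivial law. If $K_{e_q}\ne0$ in $L^2$, some $\mu_{q,k}>0$. Then $\E[\phi_{q,k}(Z)^2u^2]\ge\underline\sigma^2\E\phi_{q,k}^2>0$, so $\mathcal V_q$ is a nonnegative combination with a strictly positive coefficient on some squared nondegenerate Gaussian. This variable is continuous, and a maximum of finitely many such variables, at least one of which is nondegenerate, has no positive atoms at positive levels.

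The main obstacle will be the uniformity of the truncation in the conditional bootstrap, which must hold in probability over data realisations. I will handle it with the trace bound: I will show that $n^{-1}\sum_i\big[K_{e_q}(Z_i,Z_i)-\sum_{k\le L}\mu_{q,k}\phi_{q,k}(Z_i)^2\big]u_i^2$ converges to its expectation via the LLN, with Dini's theorem supplying the uniform Mercer convergence on the compact support. A secondary point is that Mercer's theorem needs continuity of $K_{e_q}$. If continuity fails because of the indicators, I will instead work directly with the Hilbert-space feature $\Psi_q$ and the Hilbert-space CLT, which parallels the proof of Proposition~\ref{prop:app-local-power} and avoids any need for pointwise convergence.
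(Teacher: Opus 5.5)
Your argument is correct. It follows the paper's architecture: the Lemma~\ref{lem:ball-matrix} reduction, a finite-rank truncation, a joint CLT plus a conditional Lindeberg argument for the multipliers, then $L\to\infty$ after $n\to\infty$. The difference is the object you decompose. You Mercer-expand $K_{e_q}$ on $L^2(P_Z)$ and carry $u_i$ as weights, whereas the paper expands $h_q(w,w')=K_{e_q}(z,z')uu'$ directly on $L^2(P_W)$.

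Your choice gives pointwise control. In fact $K_{e_q}$ is continuous on the compact $\mathcal Z$, so your fallback is not needed. To see this, write $K_e(z,z')=\E[\1\{\|z-C\|\le e\}\1\{\|z'-C\|\le e\}/p_e(C)]-p_e(z)-p_e(z')+\E p_e(C)$ and use that $C$ has a bounded density and $p_e\ge p_0$. Mercer then gives bounded continuous eigenfunctions and uniform convergence. This does three things. It legitimately expands the diagonal V-statistic terms $K_{e_q}(Z_i,Z_i)u_i^2$. It makes the Lindeberg condition immediate. It bounds the conditional multiplier tail uniformly by $\sup_zK^{>L}_{e_q}(z,z)\cdot n^{-1}\sum_iu_i^2$, where $K^{>L}_{e_q}(z,z')=\sum_{k>L}\mu_{q,k}\phi_{q,k}(z)\phi_{q,k}(z')$. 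Because the tail is a sum of squares, Markov's inequality turns your expectation bounds into probability bounds.

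The paper's expansion of an unbounded kernel on $L^2(P_W)$ instead relies on the trace-equals-diagonal-integral identity asserted before the theorem. In return it produces independent standard normals and the eigenvalues $\lambda_{q,r}$ at once.

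In your route the coordinates $G_{q,k}$ are correlated, with covariance $\E[\phi_{q,k}(Z)\phi_{q,k'}(Z)\sigma^2(Z)]$ where $\sigma^2(z)=\E[u^2\mid Z=z]$. So one extra step is needed to reach the stated $\lambda_{q,r}$. Let $T_K$ be the integral operator of $K_{e_q}$ on $L^2(P_Z)$ and $(Uf)(z)=\E[uf(W)\mid Z=z]$, so that $UU^*$ is multiplication by $\sigma^2$. The $h_q$-operator on $L^2(P_W)$ is $U^*T_KU$, while the covariance operator of your limit Gaussian element is $T_K^{1/2}UU^*T_K^{1/2}$. These act on different spaces, so they do not ``coincide'' as you wrote. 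However, with $B=T_K^{1/2}U$ they are $B^*B$ and $BB^*$, which share their nonzero eigenvalues with multiplicity, and that is all the representation requires. The same factorisation gives $T_K^{1/2}UU^*T_K^{1/2}\succeq\underline\sigma^2T_K\neq0$, so some $\lambda_{q,r}>0$. Atomlessness of $\mathcal V_q$ should be read off this diagonalised form with independent normals, not directly from your correlated sum $\sum_k\mu_{q,k}G_{q,k}^2$.
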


\begin{proof}
Lemma~\ref{lem:ball-matrix} reduces the observed statistic, uniformly over
the finite grid, to
\[
\overline V_{n,q}
=
\frac1n\sum_{i=1}^n\sum_{j=1}^n h_q(W_i,W_j).
\]
The same lemma reduces the bootstrap statistic to the corresponding
multiplier V-statistic with factors $\xi_i\xi_j$.

Because the operator generated by $h_q$ is positive and trace class, choose
an orthonormal eigen-expansion
\[
h_q(w,w')
=
\sum_{r\ge1}\lambda_{q,r}
\phi_{q,r}(w)\phi_{q,r}(w'),
\qquad
\lambda_{q,r}\ge0,
\qquad
\sum_{r\ge1}\lambda_{q,r}<\infty .
\]
Canonical degeneracy implies that every eigenfunction associated with a
nonzero eigenvalue has mean zero. For finite $R$, define
\[
S_{n,q,r}
=
n^{-1/2}\sum_{i=1}^n\phi_{q,r}(W_i).
\]
The rank-$R$ statistic is exactly
\[
\overline V_{n,q}^{(R)}
=
\sum_{r=1}^R\lambda_{q,r}S_{n,q,r}^2.
\]
Collect the finitely many $S_{n,q,r}$ over all
$q\in\mathcal Q$ and $r\le R$. The multivariate central limit theorem gives
joint convergence to a centred Gaussian vector whose covariance between
coordinates $(q,r)$ and $(q',r')$ equals
$\E[\phi_{q,r}(W)\phi_{q',r'}(W)]$. Continuous mapping therefore gives
joint convergence of the rank-$R$ radius-statistic vector.

For the bootstrap, set
\[
S_{n,q,r}^*
=
n^{-1/2}\sum_{i=1}^n\xi_i\phi_{q,r}(W_i).
\]
Conditionally on the sample, its covariance matrix is the empirical
second-moment matrix of the collected eigenfunctions and converges in
probability to the same population covariance matrix. For the finite vector
$\Phi_R(W_i)$ collecting these eigenfunctions, the conditional Lindeberg
term satisfies, for every $\epsilon>0$,
\[
\frac1n\sum_{i=1}^n
\|\Phi_R(W_i)\|^2
\1\{\|\Phi_R(W_i)\|>\epsilon\sqrt n\}
\xrightarrow{p}0,
\]
because $\E\|\Phi_R(W)\|^2<\infty$. The triangular-array Lindeberg--Feller
theorem applied conditionally to
$n^{-1/2}\xi_i\Phi_R(W_i)$ therefore gives the same Gaussian limit for the
finite vector of $S_{n,q,r}^*$. Squaring and summing with the fixed
eigenvalues gives the same rank-$R$ second-order chaos.

It remains to remove the finite-rank truncation. Positivity and trace-class
summability give
\[
\E\left[
\overline V_{n,q}
-
\overline V_{n,q}^{(R)}
\right]
=
\sum_{r>R}\lambda_{q,r},
\]
while canonical degeneracy gives a variance bound proportional to
$\sum_{r>R}\lambda_{q,r}^2$. Both tails vanish as $R\to\infty$. The
conditional multiplier tail obeys the empirical analogue of the same bound;
the law of large numbers for the squared eigenfunctions makes that bound
uniformly tight in probability. Because $\mathcal Q$ is finite, the tail
control is uniform over $q$. First let $n\to\infty$ for fixed $R$, then let
$R\to\infty$.

Finally, Lemma~\ref{lem:ball-matrix} transfers the result from the
population-kernel V-statistics back to the observed Ball matrices,
sample-standardised predictors, empirical distance-quantile radii, and
centred residual multiplier statistic. This establishes both the
unconditional vector limit and conditional bootstrap convergence in
bounded-Lipschitz distance. The finite-rank argument follows the standard
architecture of weighted-bootstrap results for degenerate U- and
V-statistics \citep{arconesgine1992bootstrap}; the quadratic-form
representation provides an alternative classical route
\citep{dejong1987clt}.
\end{proof}

\begin{corollary}[Conditional validity of the implemented iid raw maximum]
\label{cor:raw-max}
Let
\[
T_n^{\max}
=
\max_{q\in\widehat{\mathcal Q}_n}
T_n(\widehat e_{n,q})
\]
and define $T_n^{*,\max}$ analogously. Under
Theorem~\ref{thm:rademacher}, coverage is asymptotically inactive and
\[
n^{-1}T_n^{\max}\Rightarrow
\max_{q\in\mathcal Q}\mathcal V_q.
\]
The limiting maximum has no atoms on $(0,\infty)$. Indeed, every nonzero
coordinate is a nonnegative weighted sum of squared Gaussian variables and
therefore has a continuous distribution; for $x>0$,
\[
\Pp\{\max_q\mathcal V_q=x\}
\le
\sum_{q:K_{e_q}\ne0}\Pp(\mathcal V_q=x)=0.
\]
Because at least one indexed kernel is nonzero, the $(1-\alpha)$ quantile is
positive for $0<\alpha<1$. Hence the ideal conditional Rademacher critical
value gives asymptotic level $\alpha$ without an additional continuity
assumption. A Monte Carlo implementation with $B=B_n\to\infty$ and the
usual plus-one correction inherits this result.
\end{corollary}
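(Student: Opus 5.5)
The plan has four steps: transfer the limit from the vector to its maximum, show the limit has no atoms, identify the bootstrap critical value, and pass to a finite Monte Carlo implementation.

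\emph{Step 1: from the vector to the maximum.} By Corollary~\ref{cor:emp-quantiles}, with probability approaching one the admissible set $\widehat{\mathcal Q}_n$ equals the full grid $\mathcal Q$. The same holds for every bootstrap draw, since geometry depends only on $X$. On that event
\[
n^{-1}T_n^{\max}=\max_{q\in\mathcal Q}V_{n,q}.
\]
The map $v\mapsto\max_q v_q$ is $1$-Lipschitz on $\R^{71}$. Continuous mapping applied to $V_n\Rightarrow\mathcal V$ in Theorem~\ref{thm:rademacher} therefore gives the unconditional limit $\max_q\mathcal V_q$. For the bootstrap, composing a bounded-Lipschitz test function with a $1$-Lipschitz map keeps it bounded-Lipschitz. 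Hence $d_{\mathrm{BL}}(\mathcal L^*(n^{-1}T_n^{*,\max}),\mathcal L(\max_q\mathcal V_q))\le d_{\mathrm{BL}}(\mathcal L^*(V_n^*),\mathcal L(\mathcal V))\to_p0$.

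\emph{Step 2: absence of atoms.} Each marginal is $\mathcal V_q=\sum_r\lambda_{q,r}Z_{q,r}^2$, with independent standard normals $Z_{q,r}$ from the eigen-expansion. If $K_{e_q}\ne0$ in $L^2$, I would first show that some $\lambda_{q,r}>0$. The trace satisfies $\E\{K_{e_q}(Z,Z)u^2\}\ge\underline\sigma^2\E K_{e_q}(Z,Z)$. By positive semidefiniteness, $K_{e_q}(Z,Z)=0$ a.s.\ would force $K_{e_q}=0$ via $|K(z,z')|^2\le K(z,z)K(z',z')$, so the trace is positive. Next, condition on all summands except one with $\lambda_{q,r}>0$. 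The remaining term $\lambda_{q,r}Z_{q,r}^2$ has a continuous law, so $\Pp(\mathcal V_q=x)=0$ for every $x$. When $K_{e_q}=0$, $\mathcal V_q\equiv0$, which places no mass at any $x>0$. A union bound over the finite grid then gives $\Pp(\max_q\mathcal V_q=x)=0$ for $x>0$. Because some coordinate has a continuous positive law, $\Pp(\max_q\mathcal V_q>0)=1$, so the $(1-\alpha)$ quantile $c_\alpha$ is strictly positive.

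\emph{Step 3: the critical value.} Since the limiting distribution function $F$ of the maximum is continuous at $c_\alpha>0$, I would use the standard quantile argument, as in van der Vaart's Lemma 21.2 adapted to conditional laws. Bounded-Lipschitz convergence in probability gives weak convergence along subsequences almost surely. The conditional distribution function $F_n^*$ therefore converges to $F$ uniformly in a neighbourhood of $c_\alpha$. From this, $\widehat c_{n,\alpha}^*\to_p c_\alpha$ follows, provided $F$ is strictly increasing at $c_\alpha$.

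\textbf{Main obstacle.} The step I expect to be delicate is exactly this strict increase. If $F$ were flat at $c_\alpha$, the quantile could fail to converge, though the level conclusion would still hold. To obtain the level, I would avoid that issue by sandwiching: for small $\epsilon$, eventually $c_\alpha-\epsilon\le\widehat c^*\le c_\alpha+\epsilon$ in the sense of distribution functions. The rejection probability then lies between $1-F(c_\alpha\pm\epsilon)$, and continuity of $F$ gives the limit $\alpha$. Slutsky then yields $\Pp(n^{-1}T_n^{\max}>\widehat c^*)\to\alpha$.

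\emph{Step 4: Monte Carlo implementation.} Conditional on the data, the $B_n$ draws are iid from the bootstrap law. By the Dvoretzky--Kiefer--Wolfowitz inequality, their empirical distribution function is uniformly within $O_p(B_n^{-1/2})$ of $F_n^*$. The plus-one correction shifts the rank by at most $1/(B_n+1)$. Both perturbations vanish as $B_n\to\infty$, so the Monte Carlo $p$-value rule inherits asymptotic level $\alpha$.
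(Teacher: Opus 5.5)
Your proposal is correct and follows essentially the same route as the paper. The paper's justification is the sketch inside the corollary itself: coverage inactivity plus continuous mapping of the $1$-Lipschitz maximum, a union bound over the finitely many coordinates with nonzero kernels (each a nonnegative weighted sum of squared Gaussians), positivity of the $(1-\alpha)$ quantile, and the standard critical-value and Monte Carlo arguments. Your additional steps fill in points the paper only asserts: deriving a positive eigenvalue from $K_{e_q}\ne0$ via the trace bound and $|K(z,z')|^2\le K(z,z)K(z',z')$, sandwiching the bootstrap quantile so that strict monotonicity of the limiting distribution function is not needed, and invoking Dvoretzky--Kiefer--Wolfowitz for finite $B_n$.
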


\begin{remark}[Scope of the iid result]
Theorem~\ref{thm:rademacher} closes the earlier primitive-condition gap for
the implemented iid raw maximum on a fixed finite radius grid. Its
assumptions remain substantive. Predictor support must be regular and
full-dimensional; local probability mass is uniformly positive at the
indexed radii; the pairwise-distance density must be regular at the indexed
quantiles; the number of radius indices is fixed; and the theorem is
pointwise in the data-generating law. It does not cover singular predictor
manifolds, a grid whose cardinality grows with $n$, or serial observations.
The finite-sample heteroskedastic simulations therefore remain informative,
but they no longer stand in for an unproved iid Ball-matrix transfer.
\end{remark}

\section{Time-Series Calibration}
\label{app:serial}

This appendix records the final time-series calibration and its evidential
status. It replaces the earlier circular-shift theorem. Circular shifts remain
sensitivity checks because they preserve outcome ordering but need not
preserve predictor-linked conditional variance.

\subsection{Algorithm}

Let $\widetilde Y_t=Y_t-\bar Y$. Fit candidate autoregressions
\[
\widetilde Y_t=\sum_{j=1}^{p}\phi_j\widetilde Y_{t-j}+\eta_t,
\qquad p\in\{0,\ldots,6\},
\]
by least squares, excluding candidate fits that fail the implementation's
stability check. Select $\widehat p$ with the prespecified comparable-sample BIC
rule. Retain one-step innovations $\widehat\eta_t$ at their original dates
for $t=\widehat p+1,\ldots,n$. For bootstrap draw $b$, generate iid
Rademacher signs and set
$\eta_t^{*(b)}=\xi_t^{(b)}\widehat\eta_t$. Initialise the recursion with the
observed centred values $\widetilde Y_1,\ldots,\widetilde Y_{\widehat p}$,
recursively generate the remaining observations, and add $\bar Y$. For
$\widehat p=0$, set $Y_t^{*(b)}=\bar Y+\eta_t^{*(b)}$. Bootstrap implementation holds
the predictor path fixed and recomputes all 71 radii, coverage screening, and
the raw maximum.

This method preserves observed innovation magnitudes and dates. Recursive
recolouring restores fitted linear outcome persistence. Unlike a Gaussian
dependent multiplier, it does not impose one covariance bandwidth on every
process. Implementation records report the numerical stability tolerance. The
analysis conditions on the version that passed every recorded stability
check.

\subsection{Formal serial validity class}

The recursive bootstrap is a sign randomisation after a fitted stable linear
filter. A conditional sign-symmetry assumption therefore yields a formal
serial result that matches the implemented algorithm more closely than a
generic weak-dependence multiplier theorem.

Let $P=6$ be the fixed maximum candidate order. For the true order
$p_0\le P$, write
\begin{equation}
u_t=\sum_{j=1}^{p_0}\phi_j u_{t-j}+\eta_t,
\qquad
\eta_t=a_t s_t,
\qquad a_t\ge0.
\label{eq:app-serial-sign}
\end{equation}
Let
\[
\mathcal A_n
=
\sigma\{X_1,\ldots,X_n,
u_1,\ldots,u_{p_0},
a_{p_0+1},\ldots,a_n\}.
\]

\begin{assumption}[Stable recursive-sign class]
\label{ass:serial-sign}
The AR polynomial in \eqref{eq:app-serial-sign} is stable and its parameter
lies in the interior of a compact stable set. Conditional on
$\mathcal A_n$, the signs
$s_{p_0+1},\ldots,s_n$ are independent Rademacher variables. For some
$\delta>0$,
\[
0<c\le n^{-1}\sum_{t=p_0+1}^n a_t^2\le C<\infty
\quad\text{with probability approaching one},
\qquad
n^{-1}\sum_{t=p_0+1}^n a_t^{4+\delta}=O_p(1).
\]
The finite candidate set contains the unique minimal true order $p_0$.
Writing $V_t=(u_{t-1},\ldots,u_{t-p_0})'$, assume
\[
n^{-1}\sum_{t=p_0+1}^n V_tV_t'\;\xrightarrow{p}\;Q_V,
\qquad Q_V\text{ positive definite},
\]
\[
n^{-1/2}\sum_{t=p_0+1}^n V_t\eta_t=O_p(1),
\qquad
n^{-1}\sum_{t=p_0+1}^n\eta_t^2\xrightarrow{p}\sigma_\eta^2>0.
\]
For every underfitted candidate order, the probability limit of its residual
variance exceeds $\sigma_\eta^2$ by a positive constant; for every correctly
specified overfitted candidate, the reduction in residual sum of squares is
$O_p(1)$. Predictor geometry is $\mathcal A_n$-measurable and the fixed-grid
Ball matrices obey \eqref{eq:matrix-entry}--\eqref{eq:matrix-diag}.
\end{assumption}

Assumption~\ref{ass:serial-sign} permits predictor-linked and time-varying
innovation magnitudes. It is stronger than
$\E(\eta_t\mid\mathcal F_{t-1},X_t)=0$: conditioning on the complete
predictor path also rules out feedback from the current innovation sign into
future conditioned predictors. That restriction is the price of an exact
conditional sign-randomisation argument.

\subsection{Oracle recursive-sign randomisation}

For known $(p_0,\phi)$ and fixed observed initial values, let
$b_n(\phi)$ denote the deterministic contribution of those initial values and
let $L_n(\phi)$ map dated innovations to the remaining AR trajectory. Then
\[
u=b_n(\phi)+L_n(\phi)D_a s,
\]
where $D_a=\operatorname{diag}(a_{p_0+1},\ldots,a_n)$ after the obvious
zero-padding of the initial coordinates. Stability gives uniformly bounded
row and column $\ell_1$ norms for $L_n(\phi)$.

\begin{proposition}[Exact oracle recursive-sign validity]
\label{prop:serial-oracle}
Under Assumption~\ref{ass:serial-sign}, conditional on $\mathcal A_n$, the
observed finite-grid vector
\[
\{u'M_{n,q}u:q\in\mathcal Q\}
\]
has exactly the same distribution as the oracle bootstrap vector obtained by
replacing $s$ with an independent Rademacher vector $\xi$ and recursively
recolouring $D_a\xi$ through the true AR filter. The equality in law holds
for the complete raw maximum and for any statistic measurable with respect
to the fixed predictor geometry. A Monte Carlo randomisation test that ranks
the observed statistic with iid oracle sign draws and uses the plus-one
correction has conditional size no greater than $\alpha$.
\end{proposition}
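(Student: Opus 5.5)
The argument is a direct conditional-randomisation calculation: an exact distributional identity given $\mathcal A_n$, followed by the standard rank argument. Fix the true order $p_0$ and coefficient vector $\phi$, and work throughout conditionally on $\mathcal A_n$.

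First, I will verify the representation $u=b_n(\phi)+L_n(\phi)D_a s$ by unrolling the recursion \eqref{eq:app-serial-sign} from the observed initial values $u_1,\ldots,u_{p_0}$. The vector $b_n(\phi)$ is the homogeneous solution started at those values. The matrix $L_n(\phi)$ is the lower-triangular impulse-response matrix of the filter. Both depend only on $\phi$ and the initial conditions, so they are $\mathcal A_n$-measurable, and $D_a$ is $\mathcal A_n$-measurable by construction. The oracle bootstrap draw is, by definition of the algorithm with known $(p_0,\phi)$, $u^{*}=b_n(\phi)+L_n(\phi)D_a\xi$, using the same initial values and the same magnitudes.

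Second, I will note that $u=G(s)$ and $u^*=G(\xi)$ for one fixed map $G$ that is measurable with respect to $\mathcal A_n$. By Assumption~\ref{ass:serial-sign}, $s$ given $\mathcal A_n$ is a vector of iid Rademacher signs, and $\xi$ is drawn independently with the same law. Hence $u$ and $u^*$ have the same conditional law given $\mathcal A_n$.

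Third, I will handle the statistic. The predictor geometry, the standardisation, the empirical radii $\widehat e_{n,q}$, the coverage screen and the matrices $M_{n,q}$ depend only on $X$, so they are $\mathcal A_n$-measurable. One point needs care here: the implemented statistic uses $Y=\mu+u$ and the sample mean $\bar Y$. The exact annihilation \eqref{eq:app-annihilate} gives $Y'M_{n,q}Y=u'M_{n,q}u$, so the unknown $\mu$ and the centring cause no trouble. Consequently the whole vector $\{u'M_{n,q}u\}_q$, its maximum over the admissible set $\widehat{\mathcal Q}_n$, and any functional $F(u)$ with $F$ determined by the geometry are obtained by applying one fixed conditional map to $u$ and to $u^*$. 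The conditional laws of the statistic and of its oracle replicate therefore coincide.

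Fourth, for the Monte Carlo test I will use the exchangeability argument of Proposition~\ref{prop:app-perm}. Conditionally on $\mathcal A_n$, the observed statistic $T_0=F(s)$ and the draws $T_b=F(\xi^{(b)})$, $b=1,\ldots,B$, are iid, because $s,\xi^{(1)},\ldots,\xi^{(B)}$ are iid Rademacher vectors pushed through the same map. So $(T_0,\ldots,T_B)$ is exchangeable. The rank of $T_0$ is then uniform when ties are broken at random. With ties, the plus-one $p$-value $(1+\#\{b:T_b\ge T_0\})/(B+1)$ is stochastically no smaller than uniform on $\{1/(B+1),\ldots,1\}$. This gives $\Pp(\widehat p\le\alpha\mid\mathcal A_n)\le\lfloor\alpha(B+1)\rfloor/(B+1)\le\alpha$.

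The main obstacle is bookkeeping rather than analysis. I must confirm that every ingredient of the implemented statistic other than the sign vector is genuinely $\mathcal A_n$-measurable. This covers the initial values, the recursion, the geometry and the coverage screen. It also requires the oracle to reuse exactly the same dated magnitudes. If any of these failed to be $\mathcal A_n$-measurable, the equality in law would break.
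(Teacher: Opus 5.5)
Your proposal is correct and follows essentially the same route as the paper's proof. Both push $s$ and $\xi$ through the same $\mathcal A_n$-measurable map $r\mapsto\{[b_n(\phi)+L_n(\phi)D_ar]'M_{n,q}[b_n(\phi)+L_n(\phi)D_ar]\}_q$ and then invoke the standard plus-one Monte Carlo rank argument; your additions (the annihilation identity $M_{n,q}\mathbf 1=0$ to pass from $Y$ to $u$, and the explicit exchangeability bound $\lfloor\alpha(B+1)\rfloor/(B+1)\le\alpha$) fill in details the paper leaves implicit.
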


\begin{proof}
Conditional on $\mathcal A_n$, every object in the statistic except the sign
vector is fixed. Assumption~\ref{ass:serial-sign} gives
$s\overset d=\xi$ conditionally on that sigma-field. Applying the same
deterministic map
\[
r\mapsto
\left\{
[b_n(\phi)+L_n(\phi)D_a r]'M_{n,q}
[b_n(\phi)+L_n(\phi)D_a r]
:q\in\mathcal Q
\right\}
\]
to $s$ and $\xi$ proves equality of the complete vectors and hence their
maxima. Conditional rank validity with the plus-one correction is the usual
Monte Carlo randomisation argument.
\end{proof}

\subsection{Stable-filter and estimation replacement}

The feasible procedure replaces $(p_0,\phi,a_t)$ by BIC, least-squares
coefficients, and fitted innovation magnitudes. Three lemmas control those
replacements.

\begin{lemma}[Stable AR filtering preserves Ball leverage]
\label{prop:serial-filter}
Suppose a fixed-radius Ball matrix $M_{n,q}$ satisfies
\eqref{eq:matrix-entry}--\eqref{eq:matrix-diag}. For a stable fixed-order AR
filter define
\[
\widetilde M_{n,q}
=
L_n(\phi)'M_{n,q}L_n(\phi).
\]
Then, uniformly over the fixed radius grid,
\[
\max_{r,s}|\widetilde M_{n,q,rs}|=O_p(1),
\qquad
\max_r\sum_s\widetilde M_{n,q,rs}^2=O_p(n),
\qquad
\sum_r\widetilde M_{n,q,rr}^2=O_p(n).
\]
\end{lemma}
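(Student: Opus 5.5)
The plan is to reduce all three bounds to the entry bound, and to get the entry bound from uniform $\ell_1$ control of the stable AR filter. First I would make the structure of $L_n(\phi)$ explicit. For a stable order-$p_0$ polynomial, recursive substitution shows that $L_n(\phi)$ is lower triangular (after zero-padding the initial coordinates). Its entries are $L_{n,rs}(\phi)=\psi_{r-s}(\phi)$ for $r\ge s$, where $\{\psi_k\}$ are the moving-average (impulse-response) coefficients of $\phi(z)^{-1}$. Writing the recursion in companion form with companion matrix $F(\phi)$ gives $\psi_k=e_1'F(\phi)^ke_1$.

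Stability means the spectral radius of $F(\phi)$ is below one. On the compact stable parameter set of Assumption~\ref{ass:serial-sign}, continuity of the spectral radius gives a uniform $\bar\rho<1$ and constant $C$ with $|\psi_k(\phi)|\le C\bar\rho^{\,k}$. Hence
\[
\|L_n(\phi)\|_1\vee\|L_n(\phi)\|_\infty\le \sum_{k\ge0}C\bar\rho^{\,k}=:\Lambda<\infty,
\]
uniformly in $n$ and in $\phi$ on that set. This uniformity is what the later feasible-replacement lemmas will need as well.

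For the entry bound I would use the elementary inequality
\[
|\widetilde M_{n,q,rs}|=\Big|\sum_{i,j}L_{n,ir}M_{n,q,ij}L_{n,js}\Big|
\le \max_{i,j}|M_{n,q,ij}|\Big(\sum_i|L_{n,ir}|\Big)\Big(\sum_j|L_{n,js}|\Big)\le \Lambda^2\max_{i,j}|M_{n,q,ij}|.
\]
By \eqref{eq:matrix-entry} this is $O_p(1)$, uniformly over $r,s$ and, because $\mathcal Q$ is finite, over $q$.

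The other two bounds then follow crudely. Each row has $n$ entries, so $\max_r\sum_s\widetilde M_{n,q,rs}^2\le n\max_{r,s}\widetilde M_{n,q,rs}^2=O_p(n)$. Likewise $\sum_r\widetilde M_{n,q,rr}^2\le n\max_r\widetilde M_{n,q,rr}^2=O_p(n)$. If a sharper row argument is wanted, I would instead write $\|L'ML e_r\|_2\le\|L\|_2\sum_i|L_{ir}|\,\|Me_i\|_2$. Combining this with $\|L\|_2\le(\|L\|_1\|L\|_\infty)^{1/2}\le\Lambda$ and \eqref{eq:matrix-row} gives the same order, but it is not needed.

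The only real obstacle is the uniform geometric decay of $\psi_k$ over the parameter set, rather than merely for the true $\phi$. I would handle it through the compactness argument for the companion matrix sketched above, taking $\bar\rho$ to be the maximum of the spectral radius over the compact stable set, which is attained and below one. One then needs $\|F^k\|\le C\bar\rho'^{\,k}$ for any $\bar\rho'\in(\bar\rho,1)$, which a Gelfand-formula bound gives uniformly by compactness. A secondary bookkeeping point is the deterministic initial-value term $b_n(\phi)$. It does not enter $\widetilde M_{n,q}$ at all, only the linear part of the quadratic form, so it can be set aside in this lemma.
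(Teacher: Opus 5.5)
Your proposal is correct and follows the paper's proof. Absolute summability of the stable impulse response bounds $\|L_n(\phi)\|_1$ and $\|L_n(\phi)\|_\infty$ uniformly in $n$, so each entry of $L_n'M_{n,q}L_n$ is a weighted sum of entries of $M_{n,q}$ with total absolute weight at most $\Lambda^2$. Finiteness of the grid then gives uniformity in $q$. The only differences are minor: you derive the row-square and diagonal bounds from the entry bound (each row has $n$ entries), whereas the paper applies the same convolution argument to squared rows, which is your optional $\|L'MLe_r\|_2$ refinement. Your uniform-in-$\phi$ geometric decay is more than this fixed-$\phi$ lemma needs, but it is harmless and useful for the later feasible-replacement step.
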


\begin{proof}
Absolute summability of the stable impulse response bounds
$\|L_n(\phi)\|_1$ and $\|L_n(\phi)\|_\infty$ uniformly in $n$. Each entry of
$L_n'M_{n,q}L_n$ is therefore a weighted sum of entries of $M_{n,q}$ with
uniformly bounded total absolute weight, proving the entry bound. The same
convolution argument applied to squared rows gives the row-square bound; its
diagonal restriction gives the final display. Finiteness of the radius set
makes the statement uniform in $q$.
\end{proof}

\begin{lemma}[Comparable-sample BIC and least-squares replacement]
\label{lem:serial-bic}
Under Assumption~\ref{ass:serial-sign}, fit every candidate order
$p\in\{0,\ldots,P\}$ on the common dated subsample
$t=P+1,\ldots,n$. Let
\[
\mathrm{BIC}(p)
=
\log\{\mathrm{RSS}(p)/m\}
+\frac{p\log m}{m},
\qquad m=n-P.
\]
Then $\Pp(\widehat p=p_0)\to1$. On that event,
\[
\|\widehat\phi-\phi\|=O_p(n^{-1/2}),
\qquad
\sum_t(\widehat\eta_t-\eta_t)^2=O_p(1),
\qquad
\max_t|\widehat\eta_t-\eta_t|=o_p(1).
\]
The same rates hold when the fitted regression uses
$Y_t-\bar Y$ rather than $Y_t-\mu$.
\end{lemma}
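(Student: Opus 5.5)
The plan follows the standard finite-order autoregression argument, organised in three blocks: order selection, coefficient rates, and innovation replacement, in that order.

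\emph{Order selection.} All candidates are fitted on the common subsample $t=P+1,\ldots,n$, so the $\mathrm{BIC}(p)$ values are directly comparable. For an underfitted order $p<p_0$, Assumption~\ref{ass:serial-sign} gives $\mathrm{RSS}(p)/m\to_p\sigma_p^2>\sigma_\eta^2$, while $\mathrm{RSS}(p_0)/m\to_p\sigma_\eta^2$ by the stated innovation LLN and the coefficient rate proved below. Hence $\log\{\mathrm{RSS}(p)/\mathrm{RSS}(p_0)\}$ converges to a positive constant. The penalty difference is $O(\log m/m)\to0$, so $\mathrm{BIC}(p)>\mathrm{BIC}(p_0)$ with probability approaching one. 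For an overfitted order $p>p_0$, the model is correctly specified and $\mathrm{RSS}(p_0)-\mathrm{RSS}(p)=O_p(1)$. Therefore
\[
\log\{\mathrm{RSS}(p_0)/\mathrm{RSS}(p)\}=O_p(1/m),
\]
which is dominated by the penalty gain $(p-p_0)\log m/m$. Since the candidate set is finite, a union bound gives $\Pp(\widehat p=p_0)\to1$.

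\emph{Coefficient rate.} On $\{\widehat p=p_0\}$, write
\[
\widehat\phi-\phi=\Big(m^{-1}\sum V_tV_t'\Big)^{-1}m^{-1}\sum V_t\eta_t.
\]
The Gram matrix converges to the positive definite $Q_V$, so its inverse is $O_p(1)$. The score satisfies $m^{-1/2}\sum V_t\eta_t=O_p(1)$ by assumption. Together these give $\|\widehat\phi-\phi\|=O_p(n^{-1/2})$.

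\emph{Innovation replacement.} The residual error is $\widehat\eta_t-\eta_t=-(\widehat\phi-\phi)'V_t$. Hence
\[
\sum_t(\widehat\eta_t-\eta_t)^2\le\|\widehat\phi-\phi\|^2\sum_t\|V_t\|^2=O_p(n^{-1})\,O_p(n)=O_p(1).
\]
For the maximum, bound $\max_t|\widehat\eta_t-\eta_t|\le\|\widehat\phi-\phi\|\max_t\|V_t\|$. The task is then to show $\max_t\|V_t\|=o_p(n^{1/2})$. Stability lets us write $u_t$ as an absolutely summable filter of the $\eta$'s plus geometrically decaying initial terms. Then $\max_t|u_t|\lesssim\max_t|\eta_t|+O_p(1)$. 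From $n^{-1}\sum a_t^{4+\delta}=O_p(1)$ we get $\max_t a_t=O_p(n^{1/(4+\delta)})=o_p(n^{1/2})$.

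The subsample choice matters: rows dropped when $\widehat p<P$ can be handled identically. The $p=0$ case is trivial.

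\emph{Replacing $\mu$ by $\bar Y$.} The regression on $Y_t-\bar Y$ differs from the one on $u_t$ by the constant shift $\bar u=O_p(n^{-1/2})$. Two facts control this. First, $m^{-1}\sum V_t=O_p(n^{-1/2})$, because $m^{-1}\sum\eta_t=O_p(n^{-1/2})$ under conditional independence of the signs, and this passes through the filter. Second, the shift alters the regressors and the response by $O_p(n^{-1/2})$ uniformly in $t$. These perturb the Gram matrix by $o_p(1)$ and the score by $O_p(1)$, so both rates are preserved. The residual errors gain a term $(1-\sum\widehat\phi_j)\bar u=O_p(n^{-1/2})$ uniformly in $t$, whose sum of squares over $t$ is $O_p(1)$.

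\emph{Main obstacle.} The most delicate step is the overfitted-order comparison, specifically verifying $\mathrm{RSS}(p_0)-\mathrm{RSS}(p)=O_p(1)$ under heteroskedastic, sign-symmetric innovations. This is assumed in Assumption~\ref{ass:serial-sign}. If it must be derived instead, the route is to write the difference as a quadratic form of the extra-lag score $m^{-1/2}\sum W_t\eta_t$ in the inverse partial Gram matrix, both $O_p(1)$. This uses the conditional Rademacher structure to bound the variance of the score by $\sum\|W_t\|^2a_t^2/m=O_p(1)$, via Hölder and the $(4+\delta)$ moments.
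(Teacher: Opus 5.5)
Your proposal is correct and follows essentially the same route as the paper's proof. The shared steps are: BIC separation, using the fixed residual-variance gap against an $o(1)$ penalty for underfitted orders and the $O_p(1)$ RSS reduction against the $(p-p_0)\log m/m$ penalty for overfitted ones; the least-squares expansion with the Gram limit and score bound; the Cauchy--Schwarz $\ell_2$ bound on $\widehat\eta_t-\eta_t$; and an $o_p(n^{1/2})$ moment bound on the lag maximum. You add detail the paper compresses into single lines, namely the stable-filter representation behind $\max_t\|V_t\|$ and the explicit $(1-\sum_j\widehat\phi_j)\bar u$ correction under demeaning, but the argument is the same.
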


\begin{proof}
For an underfitted order $p<p_0$, the population linear projection omits at
least one nonzero AR component. Positive-definiteness of the lag covariance
matrix implies a fixed positive gap between the limiting residual variance
and the innovation variance. The BIC penalty is $o(1)$, so every underfitted
order is rejected with probability approaching one.

For $p>p_0$, the additional population coefficients are zero. The reduction
in total RSS from the redundant regressors is $O_p(1)$ under the standard
least-squares martingale expansion, whereas the total BIC penalty difference
is $(p-p_0)\log m\to\infty$. Hence every overfitted order is rejected and
$\widehat p=p_0$ with probability approaching one. This is the standard
fixed-order BIC separation argument; compare the finite-order component of
sieve-bootstrap analyses such as \citet{buhlmann1997sieve}.

On the selected true order,
\[
\widehat\phi-\phi
=
\left(\sum_t U_{t-1}U_{t-1}'\right)^{-1}
\sum_tU_{t-1}\eta_t
=O_p(n^{-1/2}),
\]
where $U_{t-1}$ is the $p_0$-lag vector. Therefore
$\widehat\eta_t-\eta_t=-(\widehat\phi-\phi)'U_{t-1}$ and
\[
\sum_t(\widehat\eta_t-\eta_t)^2
\le
\|\widehat\phi-\phi\|^2\sum_t\|U_{t-1}\|^2
=O_p(1).
\]
The $(4+\delta)$ moment bound implies
$\max_t\|U_{t-1}\|=o_p(n^{1/2})$, yielding the uniform residual rate. Finally,
$\bar Y-\mu=O_p(n^{-1/2})$ under the stable finite-order model. Replacing
$\mu$ by $\bar Y$ adds a common $O_p(n^{-1/2})$ term and its stable lagged
propagation, which preserves the stated $\ell_2$ and uniform rates.
\end{proof}

\begin{lemma}[Feasible bootstrap is asymptotically oracle]
\label{lem:serial-feasible-replacement}
Let $V_n^{o,*}$ be the oracle recursive-sign radius vector from
Proposition~\ref{prop:serial-oracle}, divided by $n$, and let $V_n^{f,*}$ be
the implemented vector using $(\widehat p,\widehat\phi,\widehat\eta_t)$.
Under Assumption~\ref{ass:serial-sign},
\[
\max_{q\in\mathcal Q}
|V_{n,q}^{f,*}-V_{n,q}^{o,*}|
=o_p^*(1)
\]
in outer probability. Consequently
\[
d_{\mathrm{BL}}\{\mathcal L^*(V_n^{f,*}),
\mathcal L^*(V_n^{o,*})\}\xrightarrow{p}0.
\]
\end{lemma}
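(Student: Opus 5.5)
Everything will be handled on the event $\{\widehat p=p_0\}$, which has probability approaching one by Lemma~\ref{lem:serial-bic}; its complement contributes only $o(1)$ to outer probability. On that event the two bootstrap trajectories can be written as
\[
u^{o,*}=b_n(\phi)+L_n(\phi)D_a\xi,\qquad
u^{f,*}=b_n(\widehat\phi)+L_n(\widehat\phi)\widehat D\xi,
\]
with the same sign vector $\xi$. Here $\widehat D=\operatorname{diag}(\widehat\eta_t)$, absorbing signs, and the feasible initial values are the observed centred values $Y_t-\bar Y$. Because $\xi$ is a symmetric sign vector, the sign of $\eta_t$ can be absorbed, so it is harmless to write $D_a$ in terms of $\eta_t$. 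Set $\Delta^*=u^{f,*}-u^{o,*}$. For each $q$,
\[
n^{-1}\bigl(u^{f,*\prime}M_{n,q}u^{f,*}-u^{o,*\prime}M_{n,q}u^{o,*}\bigr)
=n^{-1}\bigl(2\Delta^{*\prime}M_{n,q}u^{o,*}+\Delta^{*\prime}M_{n,q}\Delta^*\bigr).
\]
By Cauchy--Schwarz in the $M_{n,q}$ semi-inner product ($M_{n,q}$ is positive semidefinite), it suffices to prove two bounds:
\[
n^{-1}u^{o,*\prime}M_{n,q}u^{o,*}=O_p^*(1),\qquad n^{-1}\Delta^{*\prime}M_{n,q}\Delta^*=o_p^*(1).
\]
Finiteness of $\mathcal Q$ then makes the bound uniform. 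The first bound follows from the conditional mean: $\E^*$ of the quadratic form is $b_n'Mb_n+\operatorname{tr}(D_aL_n'ML_nD_a)$. The trace part is bounded by $\max_r\widetilde M_{rr}\sum_t a_t^2=O_p(n)$ using Lemma~\ref{prop:serial-filter}. The $b_n$ part is $O_p(1)$ because the initial-value impulse decays geometrically and $M$ has bounded entries.

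Next I decompose $\Delta^*$ into three pieces:
\[
\Delta^*=\{b_n(\widehat\phi)-b_n(\phi)\}
+\{L_n(\widehat\phi)-L_n(\phi)\}D_a\xi
+L_n(\widehat\phi)(\widehat D-D_a)\xi.
\]
I also need the term from replacing $\mu$ by $\bar Y$ in the initial values. That term is a common $O_p(n^{-1/2})$ shift propagated through the stable filter, so it is an $O_p(n^{-1/2})$ vector times a bounded sequence. Its contribution to $n^{-1}\Delta'M\Delta$ is at most $n^{-1}\cdot n^{-1}\cdot\mathbf 1'|M|\mathbf 1$ restricted appropriately. Better still, $M\mathbf1=0$ kills the constant part exactly, and the geometric transient is $O_p(n^{-1})$.

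The three pieces are handled as follows.
\begin{itemize}
\item[(a)] This term is a geometrically decaying vector with $\ell_2$ norm $O_p(n^{-1/2})$. Since $\|M\|_{\rm op}\le\max_r\sum_s|M_{rs}|$, I will instead use the bound $\Delta'M\Delta\le\|\Delta\|_1^2\max|M_{rs}|=O_p(n^{-1})$.
\item[(b)] Stability and the interior assumption put $\widehat\phi$ in a compact stable neighbourhood with probability approaching one. The impulse-response coefficients are then smooth in $\phi$ with geometrically decaying derivatives, so $L_n(\widehat\phi)-L_n(\phi)=\|\widehat\phi-\phi\|\,R_n$ with $R_n$ having uniformly bounded $\ell_1$ row and column norms. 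Applying Lemma~\ref{prop:serial-filter} to $R_n'MR_n$, the conditional mean of the quadratic form is $\|\widehat\phi-\phi\|^2\,O_p(n)=O_p(1)$. Dividing by $n$ gives $o_p(1)$.
\item[(c)] The conditional mean is $\operatorname{tr}\bigl((\widehat D-D_a)L'ML(\widehat D-D_a)\bigr)\le\max_r\widetilde M_{rr}\sum_t(\widehat\eta_t-\eta_t)^2=O_p(1)$ by Lemma~\ref{lem:serial-bic}. After dividing by $n$ this is $o_p(1)$.
\end{itemize}
In each case the conditional Markov inequality converts a conditional-mean bound into an $o_p^*(1)$ statement.

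\textbf{Main obstacle.} I expect the hardest step to be (b): justifying a uniform Lipschitz representation of the filter map $\phi\mapsto L_n(\phi)$ with $\ell_1$ control that does not grow with $n$. I would try to prove it via the companion-matrix representation, using the bound $\|A(\widehat\phi)^k-A(\phi)^k\|\le k\rho^{k-1}\|\widehat\phi-\phi\|$ for some $\rho<1$ valid on the compact stable neighbourhood, together with summability of $k\rho^{k}$.

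Finally, the bounded-Lipschitz conclusion follows directly from the sup bound. For any $f$ with $\|f\|_{BL}\le1$,
\[
|\E^*f(V^{f,*})-\E^*f(V^{o,*})|\le\E^*\min\bigl(2,\max_q|V^{f,*}_q-V^{o,*}_q|\bigr),
\]
and the right side tends to zero in probability by dominated convergence.
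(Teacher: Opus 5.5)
Your proof is correct, but it is organised differently from the paper's.

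\textbf{The paper's route.} The paper writes the difference of the two bootstrap statistics as a Rademacher chaos $\xi'H_{n,q}\xi$, where $H_{n,q}=\widehat A_{n,q}-A_{n,q}$ and $A_{n,q}=D_aL_n(\phi)'M_{n,q}L_n(\phi)D_a$. It splits $H_{n,q}$ into a filter-perturbation piece and magnitude-replacement pieces. It then proves $\|H_{n,q}\|_F=o_p(n)$ and $|\operatorname{tr}H_{n,q}|=o_p(n)$, so the Rademacher mean and variance identities give negligibility. The initial-condition difference is handled separately as a linear plus deterministic term. Because it bounds Frobenius norms of products containing $D_a$, this route needs two further inputs. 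One is the row-square leverage bound. The other is the growth control $\max_ta_t=o_p(n^{1/(4+\delta)+\epsilon})$, which enters through the condition $1/2+2/(4+\delta)<1$.

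\textbf{Your route.} You couple at the level of trajectories and exploit that $M_{n,q}=A_{n,q}'D_{n,q}A_{n,q}$ is positive semidefinite. Cauchy--Schwarz and the triangle inequality in the $M_{n,q}$-seminorm reduce everything to conditional means of nonnegative quadratic forms. You therefore need only the following:
\begin{itemize}
\item traces, i.e.\ diagonal entries of $L'M_{n,q}L$, which require only the entry bound on $M_{n,q}$ and bounded $\ell_1$ row and column norms of the filter and its perturbation;
\item $\|\widehat\phi-\phi\|=O_p(n^{-1/2})$ and $\sum_t(\widehat\eta_t-\eta_t)^2=O_p(1)$;
\item conditional Markov in place of variance bounds.
\end{itemize}

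\textbf{What each buys.} Your argument is shorter and dispenses with the maximal-magnitude bound in this step. It also gives an explicit rate: $\E^*\Delta^{*\prime}M_{n,q}\Delta^*=O_p(1)$, so $|V^{f,*}_{n,q}-V^{o,*}_{n,q}|=O_p^*(n^{-1/2})$. The paper's matrix-perturbation route does not use positive semidefiniteness, so it would carry over to indefinite quadratic forms satisfying the same leverage bounds.

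\textbf{The coupling.} Coupling through the signed innovations $\operatorname{diag}(\eta_t)$ and $\operatorname{diag}(\widehat\eta_t)$ is legitimate. It in fact reproduces the implemented draw exactly, since $D_\eta\xi=D_a(D_s\xi)$ and $D_s\xi$ is again Rademacher given the data. The paper couples through $|\widehat\eta_t|$ instead.

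\textbf{Two presentational points.} When you apply Lemma~\ref{prop:serial-filter} to $R_n'M_{n,q}R_n$ and to $L_n(\widehat\phi)$, you are using its proof rather than its statement, since only bounded $\ell_1$ row and column norms matter; say so. In (a) and in the $\bar Y$ initial-value correction, the bound you actually need is on the $\ell_1$ norm, not the $\ell_2$ norm. That $\ell_1$ norm is $O_p(n^{-1/2})$ by geometric decay.
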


\begin{proof}
Work on the event $\widehat p=p_0$, whose probability tends to one by
Lemma~\ref{lem:serial-bic}. Stability makes the impulse-response map
continuously differentiable on a neighbourhood of $\phi$, so
\[
\|L_n(\widehat\phi)-L_n(\phi)\|_1
+\|L_n(\widehat\phi)-L_n(\phi)\|_\infty
=O_p(n^{-1/2}).
\]
Let $\widehat a_t=|\widehat\eta_t|$. Lemma~\ref{lem:serial-bic} gives
$\|\widehat a-a\|_2=O_p(1)$ and
$\max_t|\widehat a_t-a_t|=o_p(1)$. The $(4+\delta)$ moment condition also
implies $\max_t a_t=o_p(n^{1/(4+\delta)+\epsilon})$ for every fixed
$\epsilon>0$.

Write the oracle and feasible bootstrap trajectories as
$u^{o,*}=b+UD_\xi\mathbf1$ and
$u^{f,*}=\widehat b+\widehat U D_\xi\mathbf1$, where
$U=L_n(\phi)D_a$, $\widehat U=L_n(\widehat\phi)D_{\widehat a}$, and
$D_\xi$ is the diagonal sign matrix. Expanding the difference of quadratic
forms produces a Rademacher quadratic term, a Rademacher linear term from
the initial-condition contribution, and a deterministic remainder.

For the quadratic term, it is useful to separate filtering from
innovation magnitudes. Write
\[
\widetilde M_{n,q}(\phi)=L_n(\phi)'M_{n,q}L_n(\phi),
\qquad
A_{n,q}=D_a\widetilde M_{n,q}(\phi)D_a,
\]
and define $\widehat A_{n,q}$ analogously. Then
$H_{n,q}=\widehat A_{n,q}-A_{n,q}$ is the matrix in the Rademacher
quadratic-form difference. Decompose
\begin{align*}
H_{n,q}
={}&D_{\widehat a}
\{\widetilde M_{n,q}(\widehat\phi)-\widetilde M_{n,q}(\phi)\}
D_{\widehat a}\\
&+D_{\widehat a-a}\widetilde M_{n,q}(\phi)D_a
+D_a\widetilde M_{n,q}(\phi)D_{\widehat a-a}\\
&+D_{\widehat a-a}\widetilde M_{n,q}(\phi)D_{\widehat a-a}.
\end{align*}
Stability and differentiability of the impulse response imply, uniformly in
$r,s,q$,
\[
|\widetilde M_{n,q,rs}(\widehat\phi)
-\widetilde M_{n,q,rs}(\phi)|
=O_p(n^{-1/2}),
\]
and the corresponding row-square sum is $O_p(1)$. Assumption
\ref{ass:serial-sign} gives
$\max_t a_t=o_p\{n^{1/(4+\delta)+\epsilon}\}$ for every fixed
$\epsilon>0$, while Lemma~\ref{lem:serial-bic} gives
$\|\widehat a-a\|_2=O_p(1)$ and
$\max_t|\widehat a_t-a_t|=o_p(1)$. Lemma~\ref{prop:serial-filter} therefore
yields, for the two single-magnitude-replacement terms,
\begin{align*}
\|D_{\widehat a-a}\widetilde M_{n,q}(\phi)D_a\|_F^2
&\le
\|\widehat a-a\|_2^2
\max_r\sum_s a_s^2\widetilde M_{n,q,rs}(\phi)^2\\
&\le
O_p(1)\,\max_s a_s^2\,O_p(n)
=o_p(n^2).
\end{align*}
The double-replacement term is smaller because
$\max_t|\widehat a_t-a_t|=o_p(1)$. For the filter-parameter term, the
row-square bound for the derivative matrix gives
\[
\|D_{\widehat a}
\{\widetilde M_{n,q}(\widehat\phi)-\widetilde M_{n,q}(\phi)\}
D_{\widehat a}\|_F
\le
O_p(\sqrt n)\max_t\widehat a_t^2
=o_p(n),
\]
because $1/2+2/(4+\delta)<1$. Hence
\[
\|H_{n,q}\|_F=o_p(n)
\]
uniformly over the finite grid.

The trace is controlled separately. Cauchy--Schwarz, the diagonal bound in
Lemma~\ref{prop:serial-filter}, and
$\|\widehat a-a\|_2=O_p(1)$ give $O_p(\sqrt n\max_t a_t)$ for each
single-magnitude replacement. The filter-parameter term is
$O_p(\sqrt n\max_t a_t^2)$ because each diagonal perturbation is
$O_p(n^{-1/2})$. Both are $o_p(n)$ under the $(4+\delta)$ moment condition,
and the double-replacement term is smaller. Thus
\[
|\operatorname{tr}H_{n,q}|=o_p(n).
\]
Conditional Rademacher moments now give
\[
\E^*(\xi'H_{n,q}\xi)=\operatorname{tr}H_{n,q}=o_p(n),
\qquad
\operatorname{Var}^*(\xi'H_{n,q}\xi)
\le2\|H_{n,q}\|_F^2=o_p(n^2).
\]
Hence $n^{-1}\xi'H_{n,q}\xi=o_p^*(1)$.

The difference in initial-condition vectors is $O_p(n^{-1/2})$ in
$\ell_1$ and $\ell_2$ after stable propagation. Combining this with the
Ball row-square bounds shows that the associated Rademacher linear term has
conditional variance $o_p(n^2)$ and its deterministic quadratic term is
$o_p(n)$. Division by $n$ makes both negligible. Finiteness of
$\mathcal Q$ gives the maximum statement. Bounded-Lipschitz convergence
follows because a uniformly Lipschitz test function changes by at most the
sup-norm difference of the two vectors.
\end{proof}

\begin{theorem}[Feasible prewhitened recursive Rademacher validity]
\label{thm:serial-feasible}
Under Assumption~\ref{ass:serial-sign}, let $V_n$ be the observed
71-radius Ball vector divided by $n$, and let $V_n^{f,*}$ be the feasible
bootstrap vector generated by the implemented BIC-selected recursive
Rademacher algorithm. Then
\[
d_{\mathrm{BL}}\left\{
\mathcal L^*(V_n^{f,*}),
\mathcal L(V_n\mid\mathcal A_n)
\right\}
\xrightarrow{p}0.
\]
If the largest conditional atom of the oracle raw maximum converges to zero in probability, the feasible raw-max bootstrap test has asymptotic conditional level $\alpha$ and hence unconditional level $\alpha$.
\end{theorem}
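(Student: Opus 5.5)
The plan is to combine the exact oracle result (Proposition~\ref{prop:serial-oracle}) with the feasible replacement result (Lemma~\ref{lem:serial-feasible-replacement}) by a triangle inequality in bounded-Lipschitz distance, and then convert distributional closeness into a level statement using the anti-atom condition.

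\emph{Distributional statement.} Proposition~\ref{prop:serial-oracle} shows that, conditional on $\mathcal A_n$, the observed vector $V_n$ has exactly the law of the oracle vector $V_n^{o,*}$. Here the randomness is only through the sign vector. The initial values, the magnitudes $a_t$ and the Ball matrices $M_{n,q}$ are all $\mathcal A_n$-measurable, so $\mathcal L^*(V_n^{o,*})=\mathcal L(V_n\mid\mathcal A_n)$ almost surely. One bookkeeping point must be checked: the statistic uses $Y-\bar Y$ rather than $u$. Because $M_{n,q}\mathbf 1=0$ by \eqref{eq:app-annihilate}, we have $Y'M_{n,q}Y=u'M_{n,q}u$, so the identity is unaffected. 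Lemma~\ref{lem:serial-feasible-replacement} then gives $d_{\mathrm{BL}}\{\mathcal L^*(V_n^{f,*}),\mathcal L^*(V_n^{o,*})\}\to_p0$, using Lemma~\ref{lem:serial-bic} to place us on $\{\widehat p=p_0\}$ with probability tending to one. The triangle inequality yields the displayed convergence.

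\emph{Level statement.} Let $G_n$ be the conditional cdf of $n^{-1}$ times the oracle raw maximum given $\mathcal A_n$, and let $G_n^*$ be the feasible bootstrap cdf of its maximum. The map $v\mapsto\max_q v_q$ is $1$-Lipschitz, so the maxima also converge in $d_{\mathrm{BL}}$. Coverage screening depends only on the fixed predictor geometry, so it is identical in the observed and bootstrap draws and commutes with the maximum.

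The main obstacle is passing from $d_{\mathrm{BL}}$ convergence to convergence of quantiles without a fixed limit law, since here the conditional law itself moves with $n$. I would first try a smoothing argument. For $\epsilon>0$, sandwich indicators $\1\{x>c\}$ between Lipschitz functions with slope $1/\epsilon$. This gives
\[
\sup_x|G_n^*(x)-G_n(x)|\le \epsilon^{-1}d_{\mathrm{BL}}+\sup_x\{G_n(x+\epsilon)-G_n(x-\epsilon)\}.
\]
The second term is the conditional concentration function. The anti-atom hypothesis controls it only at $\epsilon=0$, so I would need to upgrade it. A tightness/subsequence argument does this: along any subsequence, the conditional laws, which are tight since the entries of $n^{-1}\xi'A\xi$ have bounded conditional second moments by \eqref{eq:matrix-entry}--\eqref{eq:matrix-diag} and Lemma~\ref{prop:serial-filter}, have a further weakly convergent subsequence in probability. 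The vanishing largest atom forces the limit to be continuous, so the concentration function vanishes as $\epsilon\downarrow0$ uniformly in $x$ by Pólya's theorem. This gives Kolmogorov convergence $\sup_x|G_n^*-G_n|\to_p0$.

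From Kolmogorov convergence, the feasible critical value $c_n^*$ satisfies $G_n(c_n^*)\to_p 1-\alpha$ up to the vanishing atom. The oracle rank identity then gives $\Pp(n^{-1}T_n^{\max}>c_n^*\mid\mathcal A_n)\to_p\alpha$. Finally, dominated convergence over $\mathcal A_n$ gives the unconditional level. For $B_n\to\infty$ with the plus-one correction, the Monte Carlo quantile adds only an $O_p(B_n^{-1/2})$ Dvoretzky--Kiefer--Wolfowitz error.
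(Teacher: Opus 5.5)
\textbf{Gap in the level statement.} Your distributional argument is the paper's own: exact conditional equality from Proposition~\ref{prop:serial-oracle}, bounded-Lipschitz closeness from Lemma~\ref{lem:serial-feasible-replacement}, then the triangle inequality. The problem is in the level statement, at the step you rightly flag as the main obstacle.

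The claim that a vanishing largest atom forces every subsequential limit to be continuous is false. $\mathrm{Unif}[0,1/n]$ has no atoms yet converges to $\delta_0$. Closer to your setting, the Rademacher statistic $n^{-1}\sum_{i\le n}\xi_i$ has largest atom of order $n^{-1/2}$, and its law also converges to $\delta_0$. So the concentration function $\sup_x\{G_n(x+\epsilon)-G_n(x-\epsilon)\}$ need not vanish, and your smoothing bound does not yield Kolmogorov convergence.

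This is not just a proof artefact. Take an oracle law $\mathrm{Unif}[0,\epsilon_n]$ and a feasible law $\mathrm{Unif}[\epsilon_n,2\epsilon_n]$, or shifted to $[-\epsilon_n,0]$. The two are $O(\epsilon_n)$-close in $d_{\mathrm{BL}}$ and both are atomless, yet the rejection probability at the feasible cutoff is $0$ (respectively $1$). In the model this occurs whenever the oracle quadratic form's conditional spread, of order $n^{-1}\|D_a\widetilde M_{n,q}(\phi)D_a\|_F$, is smaller than the feasible--oracle discrepancy. That discrepancy includes the trace shift induced by $\widehat\phi$, which Lemma~\ref{lem:serial-feasible-replacement} only shows is $o_p(1)$. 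Assumption~\ref{ass:serial-sign} imposes only upper leverage bounds, so nothing excludes this. A smaller point: the conditional laws are random measures, so tightness gives subsequential convergence in distribution of those measures, not convergence in probability to a fixed limit. That is repairable, for example by a Skorokhod representation, but it should be said.

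\textbf{What would close it.} You need an anti-concentration hypothesis, not an anti-atom one. One option is that for every $\eta>0$,
\[
\lim_{\epsilon\downarrow0}\limsup_n\Pp\Big[\sup_x\{G_n(x+\epsilon)-G_n(x-\epsilon)\}>\eta\Big]=0,
\]
which is equivalent to every subsequential limit of the oracle maximum's conditional law being a.s.\ continuous. A primitive alternative is $\liminf_n n^{-1}\|D_a\widetilde M_{n,q}(\phi)D_a\|_F>0$ for some $q$, together with a small-influence condition. An invariance principle for Rademacher chaos then delivers a nondegenerate limit, as in the iid theorem. Under any of these, your smoothing inequality, quantile inversion, dominated-convergence step and DKW correction for finite $B_n$ go through as written.

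The paper's proof disposes of this conversion in one sentence: ``a vanishing maximal conditional atom converts bounded-Lipschitz approximation into convergence of \ldots\ rejection probabilities.'' So you have matched its structure and pinpointed exactly the step it leaves unjustified. Neither argument closes that step from the stated hypothesis.
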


\begin{proof}
Proposition~\ref{prop:serial-oracle} gives exact equality between the oracle
bootstrap law and the conditional law of the observed vector.
Lemma~\ref{lem:serial-feasible-replacement} gives bounded-Lipschitz convergence of the feasible bootstrap law to that oracle law. The triangle inequality proves the first display. The maximum map is Lipschitz on the finite vector. A vanishing maximal conditional atom converts bounded-Lipschitz approximation into convergence of oracle and feasible rejection probabilities at the randomisation cutoff; the oracle plus-one test is conditionally conservative by Proposition~\ref{prop:serial-oracle}. Hence the feasible rejection probability converges to $\alpha$ under asymptotically non-atomic oracle ranks (and remains asymptotically no larger than $\alpha$ under the usual conservative cutoff convention).
\end{proof}

\subsection{Broader martingale-difference scope}

Theorem~\ref{thm:serial-feasible} does not turn conditional sign symmetry
into a consequence of the headline null. A broader working class writes
\[
\mathcal G_t=\sigma(\mathcal F_{t-1},X_t),
\qquad
\E(\eta_t\mid\mathcal G_t)=0,
\]
and imposes the compatibility condition
\[
\sum_{j=1}^{p_0}\phi_j\E(u_{t-j}\mid X_t)=0.
\]
This class allows asymmetric innovations and does not condition on the
future predictor path. It also lies outside the exact sign-randomisation
argument. The application-aligned serial experiments therefore remain
necessary: they test the finite-sample behaviour of the implemented
procedure in persistent and predictor-linked variance designs beyond the
formal sign-symmetric theorem. No claim of model-free validity for arbitrary
stationary, nonlinear, long-memory, or asymmetric martingale-difference
processes is made.

\subsection{Alternative Calibrations}

Fixed-bandwidth Gaussian dependent multipliers fail in opposite directions.
Bandwidth 6 produces mean size 9.41\% and maximum size 47.0\%. Bandwidth 12
reduces mean size to 3.26\% but still reaches 10.8\%. Bandwidths 18 and 24
produce mean sizes 1.79\% and 1.06\%. No fixed bandwidth controls size
stably across the studied process classes.

Circular shifts perform substantially better than iid and short-block
permutations. They nevertheless reach 8.6\% in the original high-persistence
variance-null study and 9.4\% in the auxiliary $n=702,d=4$ design. Shifting breaks the
alignment between the fixed predictor path and the outcome, including the
conditional-variance relation permitted by the null.

\subsection{Simulation Validation}

A corrected $n=240$ confirmation uses eight consequential null cells,
1,000 Monte Carlo replications, and 999 bootstrap draws. Across four
persistent variance-null cells, the selected bootstrap averages 4.48\%
rejection, compared with 8.30\% for circular shifts. Three deliberately
difficult $d=10,\rho=0.9$ cells remain conservative.

Focused power at $n=240,d=10$ retains 95.5\% of aggregate circular-shift
power. Mean difference is $-0.56$ percentage points and the worst
difference is $-3.4$ points. Weak absolute power remains for some local and
disconnected alternatives under short, persistent, high-dimensional designs.

Application-scale validation uses $n=702,d=4$. Across 16 null cells, mean
size is 5.1\%, the range is 4.0\%--6.4\%, and mean absolute distortion is
0.5 percentage points. Application-scale power exceeds circular shifts by
1.07 percentage points on average, with aggregate ratio 1.022. A larger
confirmatory experiment gives marginal sizes of 5.0\% and 5.2\% for four-
and two-axis tests and 4.0\% Holm family-wise error.

These results justify a simulation claim: the selected bootstrap calibrates
the raw maximum well in the process classes and dimensions studied. They do
not establish exact or model-free validity.

\section{Formal Results and Evidence}
\label{app:proof-audit}

\begin{table}[!htbp]
\centering
\caption{Formal result and evidence audit}
\label{tab:proof-audit}
\begin{threeparttable}
\small
\begin{tabularx}{\textwidth}{@{}p{3.0cm}p{2.6cm}Y p{4.0cm}@{}}
\toprule
Result or procedure & Status & Maintained scope & Remaining limitation \\
\midrule
Population null implication & Proved & Integrable scalar outcome & None \\
Local identification & Proved & Compact regular support and a.e. continuity & Excludes singular supports \\
Fixed-radius probability limit & Proved & iid, fixed positive radii, uniform local mass & No growing-grid theory \\
Empirical-quantile corollary & Proved & Fixed quantile indices and regular distance density & No quantile at a mass point \\
Grid-visible fixed-alternative consistency & Proved & Positive population criterion at at least one indexed radius & Not omnibus over all nonconstant means \\
Pitman local-power limit & Proved & $n^{-1/2}$ mean departures projected onto the fixed Ball feature family & No minimax or growing-grid optimality claim \\
Permutation validity & Proved & Conditional exchangeability & Stronger than mean independence \\
Ball-kernel reduction & Proved & iid, fixed finite grid, regular support and distance density & No growing-grid or singular-support theory \\
iid raw Rademacher maximum & Proved & Independent heteroskedastic observations; primitive Ball-kernel conditions & Fixed grid; iid only \\
Oracle recursive-sign calibration & Proved exactly & Stable finite-order AR with conditionally iid innovation signs given predictor path, magnitudes, and initial values & Conditional sign symmetry is stronger than mean independence \\
Feasible prewhitened recursive Rademacher bootstrap & Proved asymptotically & Same sign-symmetric AR class; fixed BIC order set; least-squares replacement & Does not cover general asymmetric MDS, nonlinear dynamics, or long memory \\
Broader serial calibration & Simulation validated & Studied finite-order martingale-difference and predictor-linked variance processes & No general asymmetric-MDS theorem \\
Circular shifts and blocks & Sensitivity evidence & Studied temporal designs & Not primary formal tests of the complete null \\
\bottomrule
\end{tabularx}
\begin{tablenotes}[flushleft]\footnotesize
\item Notes: ``Proved'' refers only to results under the stated maintained
conditions. The iid row concerns the implemented raw finite-grid maximum.
The serial theorem matches the implemented recursive-sign mechanism but is
narrower than the general martingale-difference null class used in the
falsification simulations. Simulation validation therefore remains relevant
for finite-sample performance and for serial processes outside conditional
sign symmetry.
\end{tablenotes}
\end{threeparttable}
\end{table}

\section{Final Implementation Details}
\label{app:implementation}

Each predictor coordinate is standardised using its sample mean and sample standard deviation, and Euclidean distance is computed in that standardised space. Every observed predictor vector is a candidate centre. The primary radius index set is
\[
\mathcal Q=\{.05,.06,\ldots,.75\},
\]
which contains 71 indices. A radius is retained when at least 20\% of centres have at least 10 neighbours. The fixed-radius statistic uses support weights $w_i=N_i$, and the primary statistic is the raw maximum across admissible radii. Predictor geometry is held fixed inside outcome resampling, but coverage and the complete 71-radius maximum are recomputed for every draw.

The .75 endpoint is an empirically validated search limit rather than a uniquely identified structural scale. Endpoint diagnostics for a shorter $q\le .50$ search showed material boundary concentration. A tail sensitivity adds $.80,.85,.90$ but is not the primary search. Selected radius is a descriptive maximiser over this finite domain.

IID calibration uses Rademacher outcome multipliers. Serial finance applications use the BIC-selected prewhitened recursive Rademacher algorithm in Appendix~\ref{app:serial}, with $B=999$. The raw maximum remains primary after the studentisation audit. W1 support weighting remains primary after the capped-support audit. z-score + Euclidean remains primary after MAD and whitening robustness experiments.

\section{Canonical Ball--NCMD Comparison}
\label{app:canonical-extension}

The canonical experiment contains nine DGPs, $n\in\{200,500\}$, $d\in\{2,10,20\}$, two active coordinates, 21 SNR values, and 1,000 replications per cell. NCMD $K=5$ and $K=10$ are evaluated on exact regenerated canonical datasets using the same deterministic seed map as the $q\le .75$ Ball results. All 1,134,000 paired datasets completed without duplicate keys or failed comparisons.

\begin{table}[!htbp]
\centering
\caption{Extended canonical method summary}
\label{tab:method-auc}
\begin{threeparttable}
\small
\begin{tabular}{lrr}
\toprule
Method & Mean full-range AUC & Mean low-SNR AUC \\
\midrule
NCMD $K=10$ & 0.669 & 0.482 \\
$k$NN & 0.654 & 0.474 \\
NCMD $K=5$ & 0.646 & 0.452 \\
Ball $q\le .75$ & 0.637 & 0.465 \\
dCor & 0.523 & 0.370 \\
HSIC & 0.500 & 0.345 \\
KSG & 0.485 & 0.309 \\
HSIC-perm & 0.479 & 0.340 \\
MICe & 0.330 & 0.195 \\
\bottomrule
\end{tabular}
\begin{tablenotes}[flushleft]\footnotesize
\item Notes: Averages span all 54 canonical DGP--sample-size--dimension blocks. The seven established comparators and the two NCMD specifications are evaluated on matched canonical datasets; the NCMD results come from a paired extension using the same seed map. Average AUC should not be read as universal dominance because rankings vary sharply by DGP.
\end{tablenotes}
\end{threeparttable}
\end{table}

Full DGP-level Ball--NCMD AUC values appear in Table~\ref{tab:ncmd-dgp-auc} in the main text.

Canonical pooled null size is 0.05183 for Ball, 0.05169 for NCMD $K=5$, and 0.05133 for NCMD $K=10$. The respective 54-cell ranges are 0.038--0.067, 0.037--0.078, and 0.031--0.071. Both NCMD specifications therefore enter the power comparison.

MDD/MDC remains an important estimand-matched family. The tested public author-code MDD implementation did not advance from the prospective feasibility stage: pooled size was 0.06625, one null-family pooled rate was 0.07625, and the worst cell was 0.105. No post hoc retuning or alternative implementation was substituted.

\section{Predictor-Law and Geometry Robustness}
\label{app:predictor-robustness}

The predictor-robustness experiment fixes $n=500$, $d=10$, two active coordinates, $R=1000$, and SNR $\in\{0,.05,.10,.25,.50\}$. Six predictor laws are used: independent Gaussian; AR(1)-correlated Gaussian with $\Sigma_{jk}=.7^{|j-k|}$; a variance-standardised multivariate $t_5$ with the same correlation; centred and variance-standardised lognormal coordinates; an equal two-component Gaussian mixture shifted by $\pm .75$ in the first two coordinates with within-component variance .4375 there; and $0.98N(0,I)+0.02N(0,25I)$ contamination.

The six signal functions are linear, local island, ring, local absolute signal, thin band, and checkerboard. The internal label \texttt{local\_variance} denotes the local absolute conditional-mean signal, not a variance-only alternative. Each realised signal is centred and its noise variance is recalibrated within replication to the target SNR.

Three predictor geometries are evaluated: sample z-score + Euclidean; median/MAD + Euclidean; and sample-covariance whitening + Euclidean. Whitening is rejected rather than silently regularised if covariance eigenvalues are invalid or ill-conditioned. The mandatory production comparison recorded zero geometry failures.

The compact predictor-law summary appears in Table~\ref{tab:predictor-robustness}.

The table supports a bounded robustness claim. The primary z-score procedure remains approximately calibrated across the six predictor laws. MAD and z-score are close in most laws. Whitening materially reduces power in correlated Gaussian and correlated $t_5$ settings while helping the mixture slightly. No geometry is uniformly optimal.

\section{Finance Robustness and Rolling Evidence}
\label{app:finance-robustness}

\subsection{Data sources and transformations}
\begin{landscape}
\begin{table}[!htbp]
\centering
\caption{Finance data, transformations, and contemporaneous predictor states}
\label{tab:finance-data-sources}
\begin{threeparttable}
\scriptsize
\setlength{\tabcolsep}{3.2pt}
\begin{tabularx}{\linewidth}{@{}p{2.8cm}p{4.4cm}p{4.2cm}X@{}}
\toprule
Outcome & Source series & Transformation and units & Contemporaneous predictors \\
\midrule
\multicolumn{4}{l}{\textit{Panel A: U.S. equity factors}}\\
Market, SMB, HML, RMW, CMA, or Momentum &
Kenneth R. French Data Library: five factors from \nolinkurl{F-F_Research_Data_5_Factors_2x3_CSV.zip}; Momentum from \nolinkurl{F-F_Momentum_Factor_CSV.zip} &
Monthly percentage-point returns; market is \texttt{Mkt-RF} &
The other five factor returns. \\
\addlinespace
\multicolumn{4}{l}{\textit{Panel B: Macro-finance equations}}\\
Market excess return & French \texttt{Mkt-RF} & Monthly percentage points &
\texttt{FEDFUNDS}, inflation, \texttt{UNRATE}, \texttt{T10Y2YM}, \texttt{BAA10YM}, industrial-production growth. \\
Industrial-production growth & FRED \texttt{INDPRO}, seasonally adjusted & $1200\,\Delta\log(\texttt{INDPRO})$, annualised percent &
Market excess return, \texttt{FEDFUNDS}, inflation, \texttt{UNRATE}, \texttt{T10Y2YM}, \texttt{BAA10YM}. \\
Unemployment-rate change & FRED \texttt{UNRATE}, seasonally adjusted & $\Delta\texttt{UNRATE}$, percentage points &
Market excess return, \texttt{FEDFUNDS}, inflation, \texttt{T10Y2YM}, \texttt{BAA10YM}, industrial-production growth. \\
Baa-spread change & FRED \texttt{BAA10YM} & $\Delta\texttt{BAA10YM}$, percentage points; the level is Moody's Baa yield minus the 10-year Treasury yield &
Market excess return, \texttt{FEDFUNDS}, inflation, \texttt{UNRATE}, \texttt{T10Y2YM}, industrial-production growth. \\
\bottomrule
\end{tabularx}
\begin{tablenotes}[flushleft]\footnotesize
\item Notes: Inflation equals $1200\,\Delta\log(\texttt{CPIAUCSL})$, where \texttt{CPIAUCSL} is seasonally adjusted. The federal funds rate is FRED \texttt{FEDFUNDS}; the term spread is FRED \texttt{T10Y2YM}; the Baa spread level is FRED \texttt{BAA10YM}. Public source files were retrieved and archived on 18 July 2026. Complete-case alignment follows all transformations; no observations are imputed. The common sample ends in September 2025, before the documented October 2025 CPI and household-survey gaps, and the factor panel uses the same endpoint for comparability. Predictors are standardised inside each full sample or rolling window. These equations are contemporaneous and do not test forecasting.
\end{tablenotes}
\end{threeparttable}
\end{table}
\end{landscape}

\subsection{Application-aligned serial calibration}
The final raw-versus-studentised serial-null summary appears in Table~\ref{tab:serial-calibration}.

The raw $q\le .75$ procedure is primary. Studentisation slightly raises average canonical power but falls below the prespecified lower serial-size bound and moves selected scales substantially upward.

\subsection{Cross-fitted linear benchmark and residual Ball}
Full-sample original and residual Ball results appear in Table~\ref{tab:full-sample-finance}.
Selected-window original and residual Ball results appear in Table~\ref{tab:selected-diagnostics}.

Five contiguous time-ordered test folds are used. OLS contains an intercept and is estimated in original finance units. HAC inference uses Bartlett/Newey--West weights with lag $\lfloor4(n/100)^{2/9}\rfloor$ and small-sample correction. Residual Ball refits OLS separately inside every bootstrap training fold before computing held-out residuals and the complete $q\le .75$ maximum.

No full-sample residual Ball test rejects. Momentum in July 2022 is the only one of the five selected diagnostic windows that rejects after residualisation. This distinction governs manuscript language about structure beyond ordinary contemporaneous linear comovement.

\subsection{Support weighting}
\begin{table}[!htbp]
\centering
\caption{Support-weighting robustness}
\label{tab:weighting}
\begin{threeparttable}
\small
\begin{tabular}{lr}
\toprule
Diagnostic & W1 support vs. W4 capped support \\
\midrule
Full-sample classification changes & 0 \\
Selected diagnostic classification changes & 0 \\
Holm-family conclusion changes & 0 \\
Median high-union Jaccard & 1.000 \\
Residual Momentum $p$ (W1 / W4) & 0.030 / 0.030 \\
\bottomrule
\end{tabular}
\begin{tablenotes}[flushleft]\footnotesize
\item Notes: W1 uses $w_i=N_i$ and remains primary. W4 caps support weights prospectively. Equal-centre and square-root weights produced serial under-rejection and were not retained. The capped-support alternative leaves the central conclusions unchanged, giving no reason to replace support weighting.
\end{tablenotes}
\end{threeparttable}
\end{table}

Support weighting remains primary. Capping the largest support weights changes no central classification or Holm-family conclusion, and residual Momentum remains at $p=.030$. Equal-centre and square-root alternatives produced serial under-rejection and were not retained.

\subsection{Pointwise rolling $q\le .75$ series}
The primary $q\le .75$ rolling summary appears in Table~\ref{tab:rolling-q075}.

All 3,100 windows completed. Momentum has 56 pointwise rejecting windows and RMW 120; 37 classifications differ from the earlier $q\le .50$ series. The upper boundary $q=.75$ is selected in 122 windows. These summaries are descriptive because adjacent windows overlap by 239 of 240 observations and no simultaneous inference is imposed.

\subsection{Tail search and finance $k$NN calibration}

The $q\le .90$ tail extension leaves unemployment individually significant ($p=.021$) but changes its four-equation Holm value from .028 under $q\le .75$ to .084. This is disclosed as family-wise tail sensitivity. A matched finance $k$NN calibration completed computationally but failed the prespecified serial size criterion, so finance $k$NN p-values are not used for formal inference.

\section{Computational Provenance}
\label{app:repro}

The final procedure is supported by separate, auditable computational components for radius-domain validation, raw-versus-studentised aggregation, predictor-law and metric robustness, linear and cross-fitted residual benchmarks, support weighting, and the paired NCMD comparison. Deterministic seed maps and unique sample identifiers preserve matched comparisons across methods where paired inference is reported.

The NCMD comparison completed 1,134,000 paired canonical datasets without failed or duplicate keys and reproduced the corresponding $q\le .75$ Ball summaries exactly. The broader research archive records configurations, summary outputs, seed manifests, and checksums used to construct the reported exhibits. The manuscript source package is intentionally smaller than the full executable research repository and contains only material needed to compile the working paper.

\end{document}